\title{Wall Crossing Structures and Application to SU(3) Seiberg-Witten Integrable system}
\author{
  Qiang Wang\thanks{wangqiang@math.ksu.edu\, Department of Mathematics, Kansas State University, Manhattan, KS, 66502} 
}
\begin{document}
\maketitle

\newtheorem{theorem}{Theorem}[section]
\newtheorem{corollary}{Corollary}[theorem]
\newtheorem{lemma}[theorem]{Lemma}
\newtheorem{remark}{Remark}[section]
\newtheorem{proposition}[theorem]{Proposition}
\newtheorem{definition}{Definition}[section]

\begin{abstract}
    We apply the \textit{wall crossing structure} formalism of Kontsevich and Soibelman to Seiberg-Witten integrable systems associated to pure $SU(3)$. This gives an algorithm for computing the Donaldson-Thomas invariants, which correspond to BPS degeneracy of the corresponding BPS states in physics.  The main ingredients of this algorithm are the use of \textit{split attractor flows} and \textit{Kontsevich Soibelman wall crossing formulas}. Besides the known BPS spectrum in pure $SU(3)$ case, we obtain new family of BPS states with BPS-invariants equal to 2.
\end{abstract}

\tableofcontents

\section{Introduction}

The \textit{wall crossing structure} ("WCS" for short) is a formalism proposed and studied by M.Kontsevich and Y.Soibelman in \cite{kontsevich2013wall} that enables us to encode the Donaldson-Thomas (DT) invariants (BPS degeneracies in physics) and to control their "jumps" when certain walls (walls of marginal stability in physics) on the moduli space are being crossed. The celebrated Kontsevich-Soibelman wall crossing formula (''KSWCF" for short) \cite{kontsevich2008stability} is an essential ingredient of WCS. \\

WCS formalism is well adapted to the data coming from the complex integrable system. The famous Seiberg-Witten integrable system (``SW integrable system" for short) (\cite{donagi1997seiberg}) is an example. By considering certain gradient flows on the base of the integrable system called the \textit{split attractor flows}(c.f.\cite{moore1998arithmetic}\cite{denef2000supergravity}), WCS can produce an algorithm for computing the corresponding DT-invariants inductively.\\

This paper is about applying the WCS to the SW integrable systems associated to the pure $SU(2)$ and $SU(3)$ supersymmetric gauge theories. We will see that the results via the WCS formalism match perfectly well with those obtained via physics approaches. (\cite{fraser1997weak}\cite{taylor2001strong}). \\

The paper is organized as follows:\\

Section 2 gives a review of the formalism of wall crossing structures, which is based on the original paper of Kontsevich and Soibelman \cite{kontsevich2013wall}. The review is tailored toward the application in section 4, thus some aspects of this formalism (for example, its connection to variation of Hodge structures) are missing in this biased review. We end the review in subsection 2.5 by analyzing in detail two examples of WCS, namely, the \textit{stability data} and the parametrized family of stability data. The latter appears naturally in the framework of the complex integrable systems.\\

In section 3 we review the notion of complex integrable systems and its connection to the WCS formalism. We focus on the geometry of the base  $\mathcal{B}$, which can be endowed with the structure of \textit{special K\"ahler geometry}, as well as $\mathbb{Z}$-affine structure. Then we can consider \textit{split attractor flows} that are certain gradient flow lines on $\mathcal{B}$.  \\

Finally in section 4, we apply the WCS to SW integrable systems for $SU(2)$ and $SU(3)$. The main ingredients of the WCS for $SU(2)$ are the results of physics (see for example \cite{bergman1998strings}\cite{fayyazuddin1997results}\cite{bergman1998string}). We just reformulate in terms of WCS.\\

The results of subsection 4.3, in which we constructed WCS for pure $SU(3)$ case, are new. The construction roughly goes as follows:\\

We know from \cite{klemm1994monodromies}\cite{klemm1996nonperturbative}\cite{klemm1995simple} the vanishing cycles associated to the discriminant locus, which help producing the so-called \textit{initial condition} of the WCS. The WCS is then studied by cutting the real four dimensional base $\mathcal{B}$ with two complementary hyperplanes, so that the situation is reduced to the real two dimensional case. On the resulting plane after cutting, the discriminant locus was cut into six singular points that fall into three pairs, each of which determines a situation similar to the $SU(2)$ case considered in section 4.2. This corresponds to three embeddings of the Lie algebra $\mathfrak{su}(2)\hookrightarrow\mathfrak{su}(3)$. Thus, by applying the WCS formalism to this situation, we get three families of BPS states with non-trivial DT-invariants (or BPS invariants) (see proposition 4.9). \\

We apply again the WCS formalism to another wall (its existence is discussed for example in \cite{taylor2001strong}\cite{hollowood1997strong}) in order to derive the results about the interaction of the above three families. This gives us new BPS states with DT invariants equal to 2 (see proposition 4.11). \\

In \cite{galakhov2013wild}, it is exhibited by using the method of \textit{spectral networks} that even in this relatively simple $SU(3)$ case, the so called \textit{wild spectrum} exists. The wild spectrum is characterized by the \textit{exponential growth} of the DT-invariants. It would be very interesting to be able to produce the wild spectrum by using the WCS formalism considered in this paper. We leave the investigation along this direction for later works.

\newpage

\section{Wall Crossing Structures}

Fix a lattice $ \Gamma \cong {\mathbb{Z}}^{\oplus n} $ of finite rank, which is endowed with the pairing $\langle\cdot,\cdot\rangle:\wedge^2\,\Gamma\to\mathbb{Z}$. We call $\Gamma$ the \textit{charge lattice}. The \textit{central charge} is an abelian groups homomorphism $ Z: \Gamma\to \mathbb{C}$. Let $\mathfrak{g}:=\bigoplus_{\gamma\in\Gamma}\,\mathfrak{g}_{\gamma}$ be the $\Gamma$-graded Lie algebra over $\mathbb{Q}$ ($\mathfrak{g}_{\gamma}$ is generated over $\mathbb{Q}$ by $e_{\gamma}$), with Lie bracket:
\begin{equation}
    [e_{\gamma_{i}},e_{\gamma_{j}}]=(-1)^{\langle\gamma_{i},\gamma_{j}\rangle}\langle\gamma_{i},\gamma_{j}\rangle\cdot e_{\gamma_{i}+\gamma_{j}}
\end{equation}

The \textit{support} of $\mathfrak{g}$ is given by $ Supp\, \mathfrak{g}:=\{\gamma\in\Gamma:\mathfrak{g}_{\gamma}\ne 0\}\subset\Gamma$. We make the assumption that it is finite and contained in an open half-space in $\Gamma_{\mathbb{R}}:=\Gamma\otimes_{\mathbb{Z}}\mathbb{R}$. In this case, $\mathfrak{g}$ is \textit{nilpotent}. We denote by $G$ the corresponding nilpotent Lie group.

\subsection{Nilpotent case}

 The \textit{wall} associated to $\gamma$ is the hyperplane $\gamma^{\perp}\subset\Gamma_{\mathbb{R}}^{*}$. Denote by $Wall_{\mathfrak{g}}$ the finite collection of these hyperplanes. We borrow the following definition from \cite{kontsevich2013wall}:
\begin{definition}
 A \textbf{(global) \textit{wall crossing structure}} (WCS for short) for $\mathfrak{g}$ is an assignment $(y_{1},y_{2})\to g_{y_{1},y_{2}}\in G$ such that for any $y_{1},y_{2}\in\Gamma_{\mathbb{R}}^{*}\backslash Wall_{\mathfrak{g}}$ which is locally constant in $y_{1},y_{2}$, and satisfies the following\textbf{ \textit{cocycle condition}}
\begin{equation}
g_{y_{1},y_{2}}\cdot g_{y_{2},y_{3}}=g_{y_{1},y_{3}}\qquad\forall y_{1},y_{2},y_{3}\in \Gamma_{\mathbb{R}}^{*}-Wall_{\mathfrak{g}}
\end{equation}
and such that in the case when the straight interval connecting $y_{1}$ and $y_{2}$ intersects only one wall $\gamma^{\perp}$, then we have that
\begin{equation}
\log(g_{y_{1},y_{2}})\in\,\bigoplus_{{\gamma}^{\prime}\parallel\gamma}\mathfrak{g}_{{\gamma}^{\prime}}
\end{equation}
\end{definition}
Denote the space of all WCS on $\mathfrak{g}$ by $WCS_{\mathfrak{g}}$. Some easy consequences can be drawn from the above definition.

\begin{lemma}
If the straight interval $\overline{y_{1}y_{2}}$ does not intersect any walls in $Wall_{\mathfrak{g}}$, then $g_{y_{1},y_{2}}=id\in G$. And for any $y_{1},y_{2}\in \Gamma_{\mathbb{R}}^{*}\backslash Wall_{\mathfrak{g}}$, we have $g_{y_{1},y_{2}}=g_{y_{2},y_{1}}^{-1}$. More generally, for $y_{1},y_{2},\cdot\cdot\cdot,y_{n}\in\Gamma_{\mathbb{R}}^{*}\backslash Wall_{\mathfrak{g}}$, we have 
\begin{equation}
g_{y_{1},y_{2}}\cdot g_{y_{2},y_{3}}\cdot \cdot \cdot g_{y_{n-1},y_{n}}\cdot g_{y_{n},y_{1}}=id
\end{equation}
\end{lemma}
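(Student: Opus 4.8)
The plan is to derive all three assertions directly from the defining properties of a WCS, namely the cocycle condition (2) and the locally constant dependence on $y_1, y_2$.

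First I would prove that $g_{y_1,y_2} = \mathrm{id}$ when the segment $\overline{y_1 y_2}$ misses every wall. The key observation is that $\Gamma_{\mathbb{R}}^* \backslash Wall_{\mathfrak{g}}$ is an open set, so the segment can be thickened to a neighborhood that still avoids all walls; since $g_{y_1,y_2}$ is locally constant in each argument and the complement of a finite hyperplane arrangement has connected components on which $g$ must be constant, moving $y_2$ toward $y_1$ within a wall-free region shows $g_{y_1,y_2} = g_{y_1,y_1}$. Then setting $y_1 = y_2 = y_3$ in the cocycle condition (2) gives $g_{y,y} \cdot g_{y,y} = g_{y,y}$, and since $G$ is a group this forces $g_{y,y} = \mathrm{id}$. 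Hence $g_{y_1,y_2} = \mathrm{id}$.

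Next, the inverse relation $g_{y_1,y_2} = g_{y_2,y_1}^{-1}$ follows immediately by taking $y_3 = y_1$ in (2): this yields $g_{y_1,y_2} \cdot g_{y_2,y_1} = g_{y_1,y_1} = \mathrm{id}$, where the last equality is the case just proved. Finally, the telescoping identity $g_{y_1,y_2} g_{y_2,y_3} \cdots g_{y_{n-1},y_n} g_{y_n,y_1} = \mathrm{id}$ is an induction on $n$ using the cocycle condition: repeatedly collapse adjacent pairs $g_{y_i,y_{i+1}} g_{y_{i+1},y_{i+2}} = g_{y_i,y_{i+2}}$ until only $g_{y_1,y_n} g_{y_n,y_1}$ remains, which equals $\mathrm{id}$ by the inverse relation.

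The only genuinely non-formal point, and the one I would take most care with, is the first step: justifying that $g_{y_1,y_2}$ only depends on which walls the segment $\overline{y_1 y_2}$ crosses (so that a wall-free segment behaves like the degenerate segment $y_1 = y_2$). This is where ``locally constant in $y_1, y_2$'' together with the cocycle condition does the work — one subdivides a path in the wall-free region into short pieces, each lying in a single connected component of $\Gamma_{\mathbb{R}}^*\backslash Wall_{\mathfrak g}$ where $g$ is constant, and composes using (2); the finiteness of $Wall_{\mathfrak{g}}$ guarantees such subdivisions exist. Everything after that is purely group-theoretic manipulation of (2).
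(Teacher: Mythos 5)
Your proposal is correct, and it supplies exactly the routine argument that the paper leaves implicit (the lemma is stated there as an ``easy consequence'' with no written proof). The three steps — deducing $g_{y,y}=\mathrm{id}$ from the cocycle condition with $y_{1}=y_{2}=y_{3}$, using local constancy on the (convex, hence connected) chamber containing a wall-free segment to reduce $g_{y_{1},y_{2}}$ to $g_{y_{1},y_{1}}$, and then obtaining the inverse relation and the telescoping identity purely group-theoretically — are precisely what is intended, and you have correctly isolated the only point requiring geometric input, namely that a segment missing all of the finitely many hyperplanes lies in a single connected component where $g$ is constant.
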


Thus with any codimension one stratum $\tau$, which is an open domain in $\gamma^{\perp}$ for some $\gamma\in Supp\,\mathfrak{g}$, one can associate a ``jump" $g_{\tau}:=g_{y_{1},y_{2}}$. Here points $y_{1}$ and $y_{2}$ are separated by the wall $\gamma^{\perp}$ and the straight segment $\overline{y_{1}y_{2}}$ intersects $\gamma^{\perp}$ at a point belonging to $\tau$. A WCS is uniquely determined by the collection of all jumps $(g_{\tau})_{codim\tau=1}$ that satisfies the cocycle condition for each stratum of codimension 2.

\subsection{Pronilpotent case}

Let $\Delta$ be strict sector (i.e., less than $180^{\circ}$) in $\Gamma_{\mathbb{R}}$, consider the pronilpotent Lie algebra $\mathfrak{g}_{\Delta}:=\prod_{\gamma\in\Delta\cap\Gamma\backslash\{0\}}\mathfrak{g}_{\gamma}$. There exists $\phi\in\Gamma_{\mathbb{R}}^{*}$ such that its restriction to $\Delta$ gives a proper map to $\mathbb{R}_{\geq 0}$. Then for $N>0$, consider the quotient $\mathfrak{g}_{\Delta,N}=\mathfrak{g}/\mathfrak{g}_{\Delta,\geq N}$, where $\mathfrak{g}_{\Delta,\geq N}\subset\mathfrak{g}_{\Delta}$ is the ideal consisting of elements with $\phi(\gamma)>N$ for $\gamma\in\Delta\cap\Gamma\backslash\{0\}$. We get $\mathfrak{g}_{\Delta}=\lim_{\stackrel{\longleftarrow}{N}}\mathfrak{g}_{\Delta,N}$. Then for each $N>0$, the WCS is defined on $\mathfrak{g}_{\Delta,N}$, and the WCS on $\mathfrak{g}$ corresponds to the one for the projective limit of $\mathfrak{g}_{\Delta,N}$. This can be done by using the sheaf theoretic description of WCS in subsection 2.4. Thus, for each $N>0$, we have a formula similar to (4), and by taking the limit $N\to\infty$, we obtain:
\begin{lemma}
  For any $y_{1},y_{2},\cdot\cdot\cdot,y_{n},\cdots\in\Gamma_{\mathbb{R}}^{*}- Wall_{\mathfrak{g}}$, we have the following \textbf{wall crossing formula}:
\begin{equation}
\left(\prod_{i=1}^{\infty}g_{y_{i},y_{i+1}}\right)\cdot g_{y_{\infty},y_{1}}=id
\end{equation}
\end{lemma}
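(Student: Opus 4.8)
The plan is to deduce the infinite wall crossing formula (5) from its finite-level counterpart, equation (4) of Lemma 2.1, by restricting the WCS to each nilpotent truncation $\mathfrak{g}_{\Delta,N}$ and then passing to the projective limit $\mathfrak{g}_{\Delta}=\varprojlim_{N}\mathfrak{g}_{\Delta,N}$. Throughout, $y_{\infty}:=\lim_{i\to\infty}y_{i}$ is the accumulation point of the sequence, which (as is implicit in the statement) we take to lie off every wall.

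First I would set up the truncations. Fix the strict sector $\Delta$ and the functional $\phi\in\Gamma_{\mathbb{R}}^{*}$ restricting to a proper map $\Delta\cap\Gamma\backslash\{0\}\to\mathbb{R}_{\geq 0}$, as in subsection 2.2, so that $\mathfrak{g}_{\Delta,\geq N}$ is an ideal and $\mathfrak{g}_{\Delta,N}=\mathfrak{g}_{\Delta}/\mathfrak{g}_{\Delta,\geq N}$ is nilpotent; let $G_{\Delta,N}$ be its nilpotent group and write $g^{(N)}_{y,y'}$ for the image of $g_{y,y'}$ under $G_{\Delta}\twoheadrightarrow G_{\Delta,N}$. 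By the sheaf-theoretic description of WCS recalled in subsection 2.4, the family $(g^{(N)}_{y,y'})$ is precisely the WCS on $\mathfrak{g}_{\Delta,N}$ induced by the given one, so it satisfies the cocycle condition (2) and the support condition (3), now relative to the \emph{finite} arrangement of hyperplanes $Wall_{\mathfrak{g}_{\Delta,N}}$. Since $Wall_{\mathfrak{g}_{\Delta,N}}$ is finite, $y_{\infty}$ lies in an open chamber $C$ of this arrangement, and because $y_{i}\to y_{\infty}$ there is $n=n(N)$ with $y_{i}\in C$ for all $i\geq n$. Chambers are convex and meet no wall, so for $i\geq n$ the segment $\overline{y_{i}y_{i+1}}$ lies in $C$, giving $g^{(N)}_{y_{i},y_{i+1}}=id$ by Lemma 2.1; similarly $\overline{y_{n}y_{\infty}}\subset C$ gives $g^{(N)}_{y_{n},y_{\infty}}=id$, hence $g^{(N)}_{y_{\infty},y_{1}}=g^{(N)}_{y_{n},y_{1}}$ by the cocycle condition. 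In particular the partial products of $\prod_{i\geq 1}g_{y_{i},y_{i+1}}$ stabilize modulo $\mathfrak{g}_{\Delta,\geq N}$ for every $N$, so the infinite product is a well-defined element of $G_{\Delta}$ whose image in $G_{\Delta,N}$ is the finite product $g^{(N)}_{y_{1},y_{2}}\cdots g^{(N)}_{y_{n-1},y_{n}}$.

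Now I would apply equation (4) of Lemma 2.1 to the finite tuple $(y_{1},y_{2},\dots,y_{n})$, all of whose members lie in $\Gamma_{\mathbb{R}}^{*}\backslash Wall_{\mathfrak{g}_{\Delta,N}}$:
\[
g^{(N)}_{y_{1},y_{2}}\cdot g^{(N)}_{y_{2},y_{3}}\cdots g^{(N)}_{y_{n-1},y_{n}}\cdot g^{(N)}_{y_{n},y_{1}}=id .
\]
By the previous paragraph, replacing $g^{(N)}_{y_{n},y_{1}}$ by $g^{(N)}_{y_{\infty},y_{1}}$ and padding with the trivial tail factors, the left-hand side equals the image in $G_{\Delta,N}$ of $\left(\prod_{i\geq 1}g_{y_{i},y_{i+1}}\right)\cdot g_{y_{\infty},y_{1}}$ (here one uses that the projection $G_{\Delta}\to G_{\Delta,N}$ is a group homomorphism, so it commutes with the convergent product). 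Since this vanishes for every $N>0$ and $G_{\Delta}=\varprojlim_{N}G_{\Delta,N}$, the element $\left(\prod_{i\geq 1}g_{y_{i},y_{i+1}}\right)\cdot g_{y_{\infty},y_{1}}$ projects to $id$ in every quotient, hence equals $id$ in $G_{\Delta}$, which is (5).

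The passage to the limit is formal; the real content — and the step I expect to be the main obstacle — is the finite-level bookkeeping: that finiteness of $Supp\,\mathfrak{g}_{\Delta,N}$ makes $Wall_{\mathfrak{g}_{\Delta,N}}$ an honest finite hyperplane arrangement with open convex chambers, that properness of $\phi|_{\Delta}$ is what makes $\mathfrak{g}_{\Delta,N}$ nilpotent (so that Lemma 2.1 is even applicable), and that $y_{\infty}$ is generic enough to sit in an open chamber, so that the tail of the product collapses and the closing factor can be transported from $y_{\infty}$ to $y_{n}$. Granting these — which is exactly the content of the sketch given just before the statement ("a formula similar to (4), and by taking the limit $N\to\infty$") — the argument above goes through.
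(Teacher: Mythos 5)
Your argument is a fleshed-out version of exactly the route the paper takes: the paper only sketches the proof in the paragraph preceding the lemma (reduce to the nilpotent quotients $\mathfrak{g}_{\Delta,N}$, apply the finite cocycle identity (4), and pass to the projective limit $N\to\infty$), and your proposal supplies the missing bookkeeping — finiteness of $Wall_{\mathfrak{g}_{\Delta,N}}$, stabilization of the tail in each truncation, and transport of the closing factor from $y_{\infty}$ to $y_{n}$ — without changing the strategy. The proof is correct and essentially identical in approach to the paper's.
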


The \textit{(numerical) Donaldson-Thomas invariants} proposed in \cite{kontsevich2008stability} gives us a map $\Omega: \Gamma\to \mathbb{Q}$ (conjectured to be integer-valued). Using \textit{KS-transform}:
$\mathcal{K}_{\gamma}:=exp\left(\sum_{n\geq 1}\frac{e_{n\gamma}}{n^{2}}\right)$, for the ray $\textit{l}_{\gamma}:=\mathbb{R}_{>0}\cdot Z(\gamma)$, we attach the element: \[\mathbb{S}(\textit{l}_{\gamma})=\prod_{Z(\gamma^{\prime})\in\textit{l}_{\gamma}}\mathcal{K}_{\gamma^{\prime}}^{\Omega(\gamma^{\prime})}=\prod_{\gamma^{\prime}\parallel\gamma}\mathcal{K}_{\gamma^{\prime}}^{\Omega(\gamma^{\prime})}=\prod_{\stackrel{Z(\mu)\in\textit{l}_{\gamma}}{m\geq 1}}\mathcal{K}_{m\mu}^{\Omega(m\mu)}\]

For strict sector $\Delta\subset\mathbb{C}^{*}\cong\mathbb{R}^{2}$ and fixed $N>0$, we have the truncation: $\mathbb{S}(\textit{l}_{\gamma})_{<N}\in G_{\Delta,N}$ (Lie group of $\mathfrak{g}_{\Delta,N}$), and thus the finite product $\mathbb{S}(\Delta)_{<N}:=\overrightarrow{\prod_{\textit{l}_{\gamma}\subset\Delta}}\,\mathbb{S}(\textit{l}_{\gamma})_{<N}\in G_{\Delta,N}$, where the right arrow indicates that the product is taken over all rays $\textit{l}_{\gamma}\subset\Delta$ in clockwise order. Let $N\to\infty$, we get: $\mathbb{S}(\Delta)=\overrightarrow{\prod_{\textit{l}_{\gamma}\subset\Delta}}\,\mathbb{S}(\textit{l}_{\gamma})\in G_{\Delta}$. Clearly, $\log\,\mathbb{S}(\textit{l}_{\gamma})\in\,\bigoplus_{{\gamma}^{\prime}\parallel\gamma}\mathfrak{g}_{{\gamma}^{\prime}}$. We identify the jump $g_{\tau}:=g_{y_{1},y_{2}}$ associated to $\gamma^{\perp}$ with $\mathbb{S}(\textit{l}_{\gamma})$.

\subsection{WCS on a topological space}

A WCS on $\mathfrak{g}$ assigns for any pair $y_{1},y_{2}\in\Gamma_{\mathbb{R}}^{*}\backslash Wall_{\mathfrak{g}}$ a group element $g_{y_{1},y_{2}}$. We want to assign group element to a single point $y\in\Gamma_{\mathbb{R}}^{*}\backslash\{0\}$. It was shown in \cite{kontsevich2013wall} (section 2.1.2) that there exists a \textbf{sheaf of wall crossing structures} over $\Gamma_{\mathbb{R}}^{*}$, denoted by $\mathcal{WCS}_{\mathfrak{g}}$, with the stalk over $y\in\Gamma_{\mathbb{R}}^{*}$ being $0$ if $y\in\Gamma_{\mathbb{R}}^{*}\backslash Wall_{\mathfrak{g}}$, and a nontrivial element $g_{y}$ such that $\log(g_{y})\in\,\bigoplus_{{\gamma}^{\prime}\parallel\gamma}\mathfrak{g}_{{\gamma}^{\prime}}$ if $y$ belongs to some wall. $WCS_{\mathfrak{g}}$ for $\mathfrak{g}$ with general $Supp\,\mathfrak{g}$ is defined as $ \mathcal{WCS}_{\mathfrak{g}_{\Delta}}:=\lim_{\stackrel{\longleftarrow}{N}}\mathcal{WCS}_{\mathfrak{g}_{\Delta,N}}$. 

Let $\mathcal{B}$ be a locally connected Hausdorff topological space. Given a locally continuous map $Y:\mathcal{B}\rightarrow\Gamma_{\mathbb{R}}^*$, 
let $\mathcal{WCS}_{\underline{\mathfrak{g}}, Y}$ be the pull back sheaf on $\mathcal{B}$, i.e., $\mathcal{WCS}_{\underline{\mathfrak{g}}, Y}:=Y^{*}(\mathcal{WCS}_{\mathfrak{g}})$.
\begin{definition}
\textbf{(WCS on $\mathcal{B}$)}: \textit{A (global) wall crossing structure on $\mathcal{B}$} is a global section of the sheaf $\mathcal{WCS}_{\underline{\mathfrak{g}}, Y}$.
\end{definition}

Given a section $\sigma\in\Gamma(U,\mathcal{WCS}_{\underline{\mathfrak{g}},Y})$ over open subset $U$, $ Supp\,\sigma$ (\textbf{support of $\sigma$}) is the minimal closed subset of $U\times \Gamma_{\mathbb{R}}$, conic in the direction of $\Gamma_{\mathbb{R}}$, that contains pairs $(b,\gamma)\in U\times \Gamma_{\mathbb{R}}$ such that $Y(b)(\gamma)=0$ and $\log\left(g_{(Y(b))}\right)_{}\in \mathfrak{g}_{\gamma}\backslash\{0\}.$

\begin{definition}
\textbf{(walls of second kind in $\mathcal{B}$)} For every $\gamma\in \Gamma\backslash\{0\}$, the wall of second kind associated to it is defined as the pull back of $\gamma^{\perp}$ through the map $Y$, that is: 
\begin{equation}
\mathcal{W}_{\gamma}^{2}:=\{b\in\mathcal{B}:\,Y(b)(\gamma)=0\}
\end{equation}
\end{definition}

\subsection{Wall Crossing Formulas}

The WCS on $\mathcal{B}$ is used to encode DT invariants $\Omega_{b}(\gamma)$ that depends on $b$. These invariants "jump" when certain walls on $\mathcal{B}$ are crossed. These walls are called the \textit{wall of the first kind} (\textit{walls of marginal stability} in physics literature).

\begin{definition}
For $\gamma_{1},\gamma_{2}\in\Gamma\backslash\{0\}$, the two $\mathbb{Q}$-linear independent elements, we introduce the following set
\begin{align}
    \mathcal{W}_{\gamma_{1},\gamma_{2}}:=\left \{b\in\mathcal{B}:\mathbb{R}_{>0}\cdot Z_{b}(\gamma_{1})=\mathbb{R}_{>0}\cdot Z_{b}(\gamma_{2})\right\}=\left\{b\in\mathcal{B}: Im\left(\frac{Z_{b}(\gamma_{1})}{Z_{b}(\gamma_{2}}\right)=0\right\}
\end{align}
Then the so-called \textbf{walls of the first kind} associated to a charge $\gamma\in\Gamma$ are defined as $\mathcal{W}_{\gamma}^{1}:=\bigcup_{\gamma=\gamma_{1}+\gamma_{2}} \mathcal{W}_{\gamma_{1},\gamma_{2}}$.
\end{definition}

The "jump" of DT invariants is controlled by the \textbf{wall crossing formula} of Kontsevich and Soibelman (\cite{kontsevich2008stability}): \textit{Given an strict sector $\Delta$, $\mathbb{S}(\Delta)_{b}$ does not depend on $b$ as long as there is no element $\gamma\in\,Supp\,\mathfrak{g}_{b}$ such that $Z_{b}(\gamma)$ crosses the boundary $\partial\Delta$, i.e., it is locally constant as a function in $b$ near $b_{0}$.} Thus, when approaching to $b_{0}$ on the side of $\mathcal{W}_{\gamma_{1},\gamma_{2}}$ where $\textit{l}_{\gamma_{1}}$ proceeds $\textit{l}_{\gamma_{2}}$ in the clockwise order, $\textit{l}_{\gamma_{1}},\textit{l}_{\gamma_{2}}$ would coalesce, and change the order after crossing the wall. But $\mathbb{S}(\Delta)$ stays constant by the above discussion. By taking $b\to b_{0}$ on both sides, the jump of $\Omega_{b}(\gamma)$ can then be determined. 

Consider the sub-lattice $\Gamma_{0}\subset\Gamma$ generated by the positive cone $\mathcal{C}_{0}:=\mathbb{Z}_{\geq0}\cdot\gamma_{1}\oplus\mathbb{Z}_{\geq0}\cdot\gamma_{2}=\{m\gamma_{1}+n\gamma_{2}:m,n\in\mathbb{Z}_{\geq0}\}$. Write $\gamma=m\gamma_{1}+n\gamma_{2}$ simply as $(m,n)$, the KSWCF then states that $\prod_{\textit{l}_{\gamma}\subset\Delta_{\gamma_{1},\gamma_{2}}}^{\longrightarrow}\mathbb{S}_{b_{-}}(\textit{l}_{\gamma})=\prod_{\textit{l}_{\gamma}\subset\Delta_{\gamma_{1},\gamma_{2}}}^{\longrightarrow}\mathbb{S}_{b_{+}}(\textit{l}_{\gamma})$, which by using KS-transformation, it reads as
\begin{equation}
\prod_{m,n\geq0;\,m/n\nearrow}\mathcal{K}_{(m,n)}^{\Omega_{b_{-}}(m,n)}=\prod_{m,n\geq0;\,m/n\searrow}\mathcal{K}_{(m,n)}^{\Omega_{b_{+}}(m,n)}
\end{equation}
Notice that the product on LHS (RHS) is taken over all coprime $m,n$ in the increasing (decreasing) order of $m/n\in\mathbb{Q}$.
 
\noindent\textbf{Examples}: The form of KSWCF depends on $k=\langle\gamma_{1},\gamma_{2}\rangle$. $k=1$ gives the \textit{pentagon identity}
\begin{equation}
    \mathcal{K}_{\gamma_{1}}\mathcal{K}_{\gamma_{2}}=\mathcal{K}_{\gamma_{2}}\mathcal{K}_{\gamma_{1}+\gamma_{2}}\mathcal{K}_{\gamma_{1}}
\end{equation}
$k=2$ gives the following which is relevant to the representation of the Kronecker quiver $K_{2}:=\{\bullet\rightrightarrows\bullet\}$:
\begin{equation}
    \mathcal{K}_{\gamma_{2}}\,\mathcal{K}_{\gamma_1}=\left(\prod_{n=1}^{\infty}\mathcal{K}_{n\gamma_{1}+(n-1)\gamma_{2}}\right)\,\mathcal{K}_{\gamma_{1}+\gamma_{2}}^{-2}\,\left(\prod_{n=\infty}^{1}\mathcal{K}_{(n-1)\gamma_{1}+n\gamma_{2}}\right)
\end{equation}
Finally, we point out that the formula (8) is equivalent to the WCF (5). Indeed, consider a loop around $b\in\mathcal{W}_{\gamma_{1},\gamma_{2}}$ that intersects countably many walls of second kind $\mathcal{W}_{m\gamma_{1}+n\gamma_{2}}^{2}\, \text{for}\,\,(m,n)\in\mathbb{Z}_{\geq0}\times\mathbb{Z}_{\geq0}\,\,\text{and}\,\, m+n>0$. Then (8) can be informally interpreted as the triviality of the monodromy of $\mathbb{S}(\textit{l}_{\gamma})$ along the small loop, i.e.,$\stackrel{\circlearrowright}{\underset{m/n\nearrow}{{\prod}}}\mathbb{S}(\textit{l}_{m\gamma_{1}+n\gamma_{2}})=id$.

\subsection{Examples of WCS}

A \textbf{stability data} on $\mathfrak{g}$ (see \cite{kontsevich2008stability}) is a pair $\sigma=(Z,a)$ consisting of a central charge and a collection of elements $a(\gamma)\in\mathfrak{g}_{\gamma}$ that satisfies the \textit{support property} (we omit it here). Denote by $Stab(\mathfrak{g})$ the space of all stability data on $\mathfrak{g}$. Since $\mathbb{S}(\textit{l}_{\gamma})=\exp\left\{\sum_{Z(\gamma)\in\textit{l}_{\gamma}}a(\gamma)\right\}$, where $a(\gamma)=\sum_{k|\gamma,k\ge1}\frac{\Omega(\gamma/k)}{k^2}e_{\gamma}\in\mathfrak{g}_{\gamma}$, $Stab(\mathfrak{g})$ is the same as the set $\widehat{Stab}(\mathfrak{g})$ which consists of pairs $(Z,\mathbb{S})$, where $\mathbb{S}$ a collection of elements $(\mathbb{S}_{\Delta})_{\Delta\in\mathcal{S}}$, with $\mathbb{S}_{\Delta}\in G_{\Delta}$. 

\textbf{Example a)}: Take $\mathcal{B}=S_{\theta}=\mathbb{R}/2\pi\mathbb{Z}$ and consider $Y: S_{\theta} \to \Gamma_{\mathbb{R}}^{*},\,\,\theta\mapsto Y_{\theta}(\gamma):=Im(e^{-i\theta}Z(\gamma))$, $\theta\in\mathbb{R}/2\pi\mathbb{Z},\gamma\in\Gamma$.
 
\begin{proposition}
 The WCS on $\mathbb{R}/2\pi\mathbb{Z}$ induced by $Y$ above is the same as a stability data.
\end{proposition}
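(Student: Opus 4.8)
The plan is to exhibit a bijection between global sections of the sheaf $\mathcal{WCS}_{\underline{\mathfrak{g}},Y}$ on $S_\theta=\mathbb{R}/2\pi\mathbb{Z}$ and stability data $(Z,a)$ on $\mathfrak{g}$, with $Z$ already fixed by the construction of $Y$. First I would unwind the map $Y$: for fixed $\theta$, the covector $Y_\theta\in\Gamma^*_{\mathbb{R}}$ satisfies $Y_\theta(\gamma)=0$ precisely when $Z(\gamma)\in\mathbb{R}_{>0}e^{i\theta}\cup\mathbb{R}_{<0}e^{i\theta}$, i.e. when the ray $\textit{l}_\gamma$ (or $\textit{l}_{-\gamma}$) points in the direction $\theta$. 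Hence the walls of the second kind pulled back to $S_\theta$ are exactly the finitely many (in the nilpotent case) angles $\theta$ that are arguments of some $Z(\gamma)$, $\gamma\in Supp\,\mathfrak{g}$; equivalently, a point $\theta\in S_\theta$ lies on a wall iff the ray $e^{i\theta}\mathbb{R}_{>0}$ coincides with some $\textit{l}_\gamma$. This identifies the stalk data of a section of $\mathcal{WCS}_{\underline{\mathfrak{g}},Y}$ with an assignment, to each ray $\textit{l}$ in $\mathbb{C}^*$, of an element $\mathbb{S}(\textit{l})\in G$ with $\log\mathbb{S}(\textit{l})\in\bigoplus_{\gamma:\,Z(\gamma)\in\textit{l}}\mathfrak{g}_\gamma$.

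Next I would match the two notions of "consistency". A global section of the pulled-back sheaf is, by the sheaf-theoretic description in subsection 2.4 together with Lemma 2.2 (and its pronilpotent refinement Lemma 2.4), the same as a collection of jumps $g_\tau$ satisfying the cocycle/wall-crossing condition around every codimension-two stratum; on the circle $S_\theta$ there is a single relevant global relation, namely that the ordered product of all the jumps $\mathbb{S}(\textit{l})$ as $\textit{l}$ sweeps once around $\mathbb{C}^*$ (clockwise, say) equals the identity — this is exactly formula (5) read on the circle. On the stability-data side, the defining condition is precisely that for every strict sector $\Delta$ the element $\mathbb{S}(\Delta)=\overrightarrow{\prod}_{\textit{l}\subset\Delta}\mathbb{S}(\textit{l})$ lies in $G_\Delta$ and that these are compatible under refining/enlarging sectors — equivalently (using the identity $\mathbb{S}(\textit{l}_\gamma)=\exp\{\sum_{Z(\gamma)\in\textit{l}_\gamma}a(\gamma)\}$ recalled just before the proposition, and the identification $\widehat{Stab}(\mathfrak{g})\cong Stab(\mathfrak{g})$) the consistency of the ordered rayproduct around the whole plane. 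So both structures reduce to: a ray-indexed family $(\mathbb{S}(\textit{l}))$ with the correct graded support whose total clockwise product over $\mathbb{C}^*$ is trivial. I would spell out that the "support property" appearing in the definition of stability data corresponds on the sheaf side to the conic closed support condition on $Supp\,\sigma$ from subsection 2.4, so that the support property is automatically encoded and no information is lost in either direction.

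Then the construction of the bijection is direct: given a WCS section on $S_\theta$, read off $\mathbb{S}(\textit{l}_\theta):=g_\theta$ at each wall-angle $\theta$ and set $a(\gamma)$ via $a(\gamma)=\sum_{k|\gamma}\frac{\Omega(\gamma/k)}{k^2}e_\gamma$ after extracting $\Omega$ from the KS-transform factorization $\mathbb{S}(\textit{l})=\prod_{\gamma:\,Z(\gamma)\in\textit{l}}\mathcal{K}_\gamma^{\Omega(\gamma)}$; conversely, given stability data $(Z,a)$, the collection $(\mathbb{S}(\textit{l}))$ defines compatible local sections over the arcs of $S_\theta\setminus Wall$, which glue to a global section because the sector-consistency of $(Z,a)$ furnishes exactly the cocycle relations at the wall points. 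One checks the two assignments are mutually inverse, which is formal once the dictionary of the previous paragraph is in place.

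The main obstacle, and the only genuinely non-formal point, is the bookkeeping of \emph{orientation and order}: one must verify that the clockwise ordering of rays used to define $\mathbb{S}(\Delta)$ in subsection 2.2–2.4 is the same ordering that one obtains by traversing the circle $S_\theta$ in a fixed direction, and that the local constancy / cocycle condition of Definition 2.1, when transported through $Y$, produces the product relation (5) with the factors in precisely the order demanded by the definition of stability data (rather than its reverse or some reshuffling). In the pronilpotent case one also has to confirm that passing to the projective limit $\mathfrak{g}_\Delta=\varprojlim\mathfrak{g}_{\Delta,N}$ commutes with pullback along $Y$, i.e. that the identification holds at each truncation level $N$ and is compatible with the transition maps — but this is guaranteed by the definition $\mathcal{WCS}_{\mathfrak{g}_\Delta}:=\varprojlim\mathcal{WCS}_{\mathfrak{g}_{\Delta,N}}$ recalled in subsection 2.4, so it reduces to the nilpotent statement plus a routine limit argument.
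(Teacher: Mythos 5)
Your proposal is correct and follows essentially the same route as the paper's proof: identify the walls pulled back to $S_{\theta}$ with the ray directions $\textit{l}_{\gamma}$, read off $a(\gamma)$ as the $\gamma$-component of $\log\mathbb{S}(\textit{l}_{\gamma})$ at each wall-angle, and recover the section from $(Z,a)$ via the consistency of the ray-ordered products. You spell out more of the dictionary (the cocycle-versus-sector-compatibility matching, the orientation bookkeeping, and the pronilpotent limit) than the paper, which simply cites Kontsevich--Soibelman for the converse direction, but the underlying argument is the same.
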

\begin{proof}
  Given $s\in\Gamma(\mathcal{WCF}_{\mathfrak{g},Y})$, recall that $\left(\mathcal{WCS}_{\underline{\mathfrak{g}},Y}\right)_{\theta}=\left(\mathcal{WCS}_{\underline{\mathfrak{g}}}\right)_{Y(\theta)}$. Thus, if $Y(\theta)\in\gamma^{\perp}$, i.e., $Y(\theta)(\gamma)=0$, we can associate to $\gamma$ a ``jump" $\mathbb{S}(\textit{l}_{\gamma})\in G$ such that $\log\,\mathbb{S}(\textit{l}_{\gamma})\in\bigoplus_{\gamma^{\prime}\parallel\gamma}\mathfrak{g_{\gamma^{\prime}}}$, where $\mathbb{S}(\textit{l}_{\gamma})$ is the group element associated to $\textit{l}_{\gamma}=Z(\gamma)\cdot\mathbb{R}_{>0}=e^{i\theta}\cdot\mathbb{R}_{>0}$. Then, the components of $\log\,\mathbb{S}(\textit{l}_{\gamma})$ belonging to $g_{\gamma}$ is deemed as $a(\gamma)$. Conversely, given $(Z,a(\gamma))\in Stab\,(\mathfrak{g})$, it was shown in \cite{kontsevich2013wall} (see section 2.1.2) that there exists element $g_{+,-}\in G$ that determines uniquely a section of $\mathcal{WCS}_{\mathfrak{g}}$, which after pulling back through $Y$, gives arise to a section of $\mathcal{WCS}_{\underline{\mathfrak{g}},Y}$. 
\end{proof}
In order to recover $Stab\,(\mathfrak{g})$, we consider the WCS on the space $\widehat{\mathcal{B}}:=S_{\theta}\times Hom_{\mathbb{Z}}(\,\Gamma,\mathbb{C})$ via the map
\[Y:S_{\theta}\times Hom_{\mathbb{Z}}(\,\Gamma,\mathbb{C})\to\Gamma_{\mathbb{R}}^{*}\qquad (\theta,Z)\mapsto Y(\theta,Z)(\gamma):=Im(e^{-i\theta}Z(\gamma)),\,\,\forall\gamma\in\Gamma\backslash\{0\}\]
\begin{proposition}
 A section of the sheaf $\mathcal{WCS}_{\underline{\mathfrak{g}},Y}$ on the space  $\widehat{\mathcal{B}}$ is same as a family of stability data on $\mathfrak{g}$.
\end{proposition}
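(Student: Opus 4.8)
My plan is to bootstrap from the previous proposition by restricting a global section to the circles $S_\theta\times\{Z\}$, and then to recognize the extra information carried by the $Hom_{\mathbb Z}(\Gamma,\mathbb C)$-factor as precisely the variation of the stability datum with $Z$. So let $s$ be a global section of $\mathcal{WCS}_{\underline{\mathfrak g},Y}$ on $\widehat{\mathcal B}$ and fix $Z\in Hom_{\mathbb Z}(\Gamma,\mathbb C)$; writing $\iota_Z\colon S_\theta\hookrightarrow\widehat{\mathcal B}$, $\theta\mapsto(\theta,Z)$, the composite $Y\circ\iota_Z$ is exactly the map $\theta\mapsto Y_\theta$ of Example a) for the central charge $Z$, so functoriality of the pull-back gives $\iota_Z^{*}\mathcal{WCS}_{\underline{\mathfrak g},Y}=\mathcal{WCS}_{\underline{\mathfrak g},Y\circ\iota_Z}$ and $\iota_Z^{*}s$ is a global section of the latter over $S_\theta$. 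By the previous proposition this is the same as a stability datum $(Z,a_Z)\in Stab(\mathfrak g)$ with central charge $Z$, so $s$ yields a map $Z\mapsto(Z,a_Z)$ over $\mathrm{id}_{Hom_{\mathbb Z}(\Gamma,\mathbb C)}$. I would then check that this map is continuous in the Kontsevich-Soibelman topology on $Stab(\mathfrak g)$ — i.e. that it is a family of stability data — by noting that after pull-back $s$ is locally constant off $Wall_{\mathfrak g}$, that its jumps across the finitely many walls met by $Y$ near any $Y(\theta_0,Z_0)$ depend continuously on $(\theta,Z)$, and that the support property for $(Z,a_Z)$ (uniform on compacta in $Z$) is forced by the local finiteness of $Wall_{\mathfrak g}$ built into $\mathcal{WCS}_{\mathfrak g}$ at each truncation $\mathfrak g_{\Delta,N}$.

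Conversely, given a family of stability data, i.e. a continuous assignment $Z\mapsto(Z,a_Z)$, I would build a global section by gluing local ones. For each $Z$ the Kontsevich-Soibelman recipe of section 2.1.2 of \cite{kontsevich2013wall} (the element $g_{+,-}$ attached to a stability datum) produces a section $\sigma_Z$ of $\mathcal{WCS}_{\mathfrak g}$ on $\Gamma_{\mathbb R}^{*}$, and pulling back along $Y\circ\iota_Z$ gives a section $t_Z$ on the slice $S_\theta\times\{Z\}$. Around any $(\theta_0,Z_0)$ pick a basic neighbourhood $A\times B$, with $A\subset S_\theta$ an arc and $B\subset Hom_{\mathbb Z}(\Gamma,\mathbb C)$ a ball, small enough that only finitely many $\gamma$ have $\mathcal W_\gamma^{2}\cap(A\times B)\neq\varnothing$ and that these $\gamma$ lie in a common strict sector; on such a neighbourhood the finitely many $a_\gamma$'s together with the same recipe assemble into a section $t_{A\times B}$ of $\mathcal{WCS}_{\underline{\mathfrak g},Y}$ on $A\times B$ restricting to $t_Z$ on each slice $A\times\{Z\}$. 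On overlaps any two such local sections agree, since they have the same stalks (both restrict to the $t_Z$'s on slices), so by the gluing axiom they patch to a global section whose slice restrictions return the family we started with. The input that makes $t_{A\times B}$ well defined as a single section — rather than a mere collection of slice sections — is the Kontsevich-Soibelman wall-crossing formula (8): it says exactly that $\mathbb S(\Delta)$, hence each reconstructed jump $\mathbb S(\textit{l}_\gamma)$, does not change when $Z$ moves and the rays $\textit{l}_\gamma$ reorder inside $\Delta$.

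That the two assignments are mutually inverse is then formal: one composite is the previous proposition applied slice by slice, and for the other one uses that a sheaf section is determined by its stalks, so a section of $\mathcal{WCS}_{\underline{\mathfrak g},Y}$ is determined by its restrictions to the slices $S_\theta\times\{Z\}$, which cover $\widehat{\mathcal B}$. I expect the genuine obstacle to be not the slice-wise bijection (which is the previous proposition) but the matching of topologies in the $Hom_{\mathbb Z}(\Gamma,\mathbb C)$-direction: one must verify that continuity of a section of $\mathcal{WCS}_{\underline{\mathfrak g},Y}$ over a $Z$-family is equivalent to continuity of $Z\mapsto(Z,a_Z)$ in the sense of \cite{kontsevich2008stability}, including that the support property holds locally uniformly in $Z$. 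This is precisely where the wall-crossing formula does the work, guaranteeing that crossing a wall of the first kind in the $Z$-parameter produces no discontinuity in the collection of jumps, so that the locally defined sections are automatically compatible.
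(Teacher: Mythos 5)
Your proposal is correct and follows the same route as the paper: restrict the section to the slices $S_\theta\times\{Z\}$, apply the previous proposition to get a stability datum for each $Z$, and observe that varying $Z$ locally yields a (germ of a) family of stability data. The paper's own proof is only this two-sentence sketch; your additional discussion of the converse gluing and of why the Kontsevich--Soibelman wall-crossing formula guarantees compatibility of the slice-wise data as $Z$ moves supplies exactly the details the paper leaves implicit.
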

\begin{proof}
 For fixed $Z\in Hom_{\mathbb{Z}}(\,\Gamma,\mathbb{C})$, by restricting $\mathcal{WCS}_{\underline{\mathfrak{g}},Y}$ to $\mathcal{B}_{Z}=S_{\theta}\times\{Z\}\subset\widehat{\mathcal{B}}$, we get a stability data by above proposition. Moving $Z$ locally, we get a germ of universal family of stability data with central charges near $Z$.
\end{proof}
\textbf{Example b): Continuous family of stability data recovered}

In application, we usually use a complex manifold $\mathcal{B}$ to parametrize $Stab\,(\mathfrak{g})$, which can be realized as the WCS on $\mathcal{B}_{\theta}:=\mathcal{B}\times S_{\theta}$ through the map
\begin{equation}
Y:\mathcal{B}\times S_{\theta}\to\Gamma_{\mathbb{R}}^*\quad (b,\theta)\mapsto Y_{\theta}(b)(\gamma):=Im(e^{-i\theta}Z_{b}(\gamma))\,\,\,\forall\gamma\in\Gamma\backslash\{0\}
\end{equation}

\begin{proposition}
  A section of the sheaf $\mathcal{WCS}_{\underline{\mathfrak{g}},Y}$ on $\mathcal{B}_{\theta}$ is same as a WCS on $\mathcal{B}$.
\end{proposition}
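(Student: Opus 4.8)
The plan is to exhibit mutually inverse constructions between sections of $\mathcal{WCS}_{\underline{\mathfrak{g}},Y}$ on $\mathcal{B}_{\theta}=\mathcal{B}\times S_{\theta}$ and global sections of $\mathcal{WCS}_{\underline{\mathfrak{g}},Y}$ on $\mathcal{B}$ itself (i.e.\ WCS on $\mathcal{B}$ in the sense of Definition 2.4). The key observation is that for a fixed $b\in\mathcal{B}$, the restriction of the present map to $\{b\}\times S_{\theta}$ is precisely the map $Y_\theta(\gamma)=\mathrm{Im}(e^{-i\theta}Z_b(\gamma))$ of Example a), so Proposition 2.9 already identifies sections over $\{b\}\times S_\theta$ with stability data $(Z_b,a_b)$ on $\mathfrak{g}$; and by the discussion in subsection 2.5 a stability data is the same thing as the collection of jumps $\mathbb{S}_b(l_\gamma)$, equivalently the DT-invariants $\Omega_b(\gamma)$. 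On the other hand, a WCS on $\mathcal{B}$ is, after unwinding Definition 2.4 together with subsections 2.3--2.4, exactly an assignment $b\mapsto$ (the stalk data at $Y(b)$), i.e.\ a $b$-dependent central charge $Z_b$ together with jumps $g_{(Y(b))}$ along the walls of second kind $\mathcal{W}^2_\gamma$. So both objects amount to the same underlying data; what must be checked is that the sheaf/cocycle conditions match up.

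First I would set up the comparison map. Given a section $s$ of $\mathcal{WCS}_{\underline{\mathfrak{g}},Y}$ on $\mathcal{B}_\theta$, for each $b$ restrict to the circle $\{b\}\times S_\theta$ and apply Proposition 2.9 to get a stability datum $\sigma_b=(Z_b,a_b)\in Stab(\mathfrak{g})$, hence via $\widehat{Stab}(\mathfrak{g})$ the jumps $\mathbb{S}_b(\Delta)$. I would then show that $b\mapsto\sigma_b$ varies "continuously" in the sheaf-theoretic sense — precisely, that it assembles into a section of the pulled-back sheaf on $\mathcal{B}$ — by a local argument: near any $b_0$, the map $Y$ is locally continuous, the support of $s$ is locally finite (conic-closed, from the definition of $Supp\,\sigma$ in subsection 2.3), and so only finitely many walls of second kind are relevant; on such a neighborhood the section $s$ is locally constant away from those walls, which is exactly the defining property of a section over $\mathcal{B}$. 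Conversely, given a WCS $\sigma$ on $\mathcal{B}$, at each $b$ I read off $(Z_b, \{g_{(Y(b))}\})$, turn it into a stability datum via the equivalence $Stab(\mathfrak{g})\cong\widehat{Stab}(\mathfrak{g})$ recalled in subsection 2.5, and feed it into Proposition 2.9 (the converse direction, using the existence of $g_{+,-}$ from \cite{kontsevich2013wall}) to get a section over $\{b\}\times S_\theta$; gluing over $b$ gives a section over $\mathcal{B}_\theta$. The two constructions are inverse to each other because on each circle $\{b\}\times S_\theta$ they reduce to the (already established) bijection of Proposition 2.9, and the gluing data on the $\mathcal{B}$-direction is literally the same sheaf $Y^*\mathcal{WCS}_{\mathfrak{g}}$ in both descriptions.

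The main obstacle I anticipate is the bookkeeping needed to see that the $S_\theta$-direction is "redundant" in a functorial way — i.e.\ that taking the section over the whole product $\mathcal{B}\times S_\theta$ genuinely carries no more information than a section over $\mathcal{B}$, rather than something like an $S_\theta$-family of them. This comes down to the fact that the stalk of $\mathcal{WCS}_{\mathfrak{g}}$ at $y\in\Gamma^*_{\mathbb{R}}$ depends only on which wall $\gamma^\perp$ contains $y$ (and on the germ of the central charge), so that rotating $\theta$ merely re-parametrizes which walls $\mathcal{W}^2_\gamma\times S_\theta$ one crosses, in lockstep with the clockwise ordering of the rays $l_\gamma$; tracking this correspondence between the circular order on $S_\theta$ and the clockwise product $\overrightarrow{\prod}$ defining $\mathbb{S}(\Delta)$ is the delicate point, but it is exactly the content already handled in the proof of Proposition 2.9, now applied fiberwise and checked to be compatible with restriction maps of the sheaf in the $\mathcal{B}$-direction. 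Everything else — the cocycle condition (2), Lemma 2.2, the pronilpotent limit of subsection 2.2 — is inherited formally from the corresponding statements for $\mathcal{WCS}_{\mathfrak{g}}$ on $\Gamma^*_{\mathbb{R}}$ via pullback, since pullback of sheaves preserves the property of being a (global) section.
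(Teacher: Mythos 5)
Your core mechanism is the same as the paper's: restrict the section to each circle $\{b\}\times S_{\theta}$, invoke the earlier proposition identifying such a restriction with a stability datum $(Z_{b},a_{b})$ on $\mathfrak{g}$, and then let $b$ vary locally to obtain a continuously varying (germ of a universal) family of stability data, which is what the paper means by a WCS on $\mathcal{B}$. The one caution is your framing of the target object: you anchor the bijection on ``global sections of $\mathcal{WCS}_{\underline{\mathfrak{g}},Y}$ on $\mathcal{B}$ itself,'' but no map $Y:\mathcal{B}\to\Gamma_{\mathbb{R}}^{*}$ is defined on $\mathcal{B}$ alone here --- the absence of such a map is precisely why the auxiliary circle $S_{\theta}$ is introduced, and the proposition (together with the remark following it) should be read as saying that a section over $\mathcal{B}_{\theta}$ \emph{is} the $\mathcal{B}$-parametrized family of stability data, i.e.\ the local embedding $\mathcal{B}\hookrightarrow Stab(\mathfrak{g})$, rather than a section of some pullback sheaf living on $\mathcal{B}$; with that reinterpretation your argument coincides with the paper's.
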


\begin{proof}
   Fix $b_{0}\in\mathcal{B}$, we get a stability data on $\mathfrak{g}_{b_{0}}$. As $b$ varies locally near $b_{0}$, $Z_{b}$ also varies locally around $Z_{b_{0}}$ (as $Z$ depends holomorphically on $b$). Thus by proposition 2.4, we get a germ of universal family of stability data near $b_{0}$. But this is the same as a germ of universal family of WCS near $b_{0}$.
\end{proof}

Given a WCS on $\mathcal{B}_{\theta}$, we get a locally constant map $a:tot(\underline{\Gamma})\to tot(\underline{\mathfrak{g}}),\,(b,\gamma)\mapsto a_{b}(\gamma)\in\underline{\mathfrak{g}}_{b,\gamma}$, where $a_{b}(\gamma)$ is the $\gamma$-component of the corresponding section of $\mathcal{WCS}_{\underline{\mathfrak{g}},Y}$. Then $a_{b}(\gamma)$ is non-trivial only if there exists $\theta\in S_{\theta}$ such that $Y_{\theta}(b)(\gamma)=0$. Let $\mathcal{B}^{\prime}:=\{(b,\gamma)\in tot\,(\underline{\Gamma}):\exists\,\theta  (\text{fixed}),\,Y_{\theta}(b)(\gamma)=0\}\subset tot\,(\underline{\Gamma})$, we see that $a$ restricts to $\mathcal{B}^{\prime}$.

\begin{remark}
Since the map $a:\mathcal{B}^{\prime}\to tot\,(\underline{\mathfrak{g}})$ depends only on the $b$-component of $\mathcal{B}_{\theta}$, so we will just call a WCS on $\mathcal{B}_{\theta}$ as a WCS on $\mathcal{B}$, with the role of the circle $S_{\theta}$ being understood implicitly.
\end{remark}

 Recall from the definition 2.3, given $\gamma\in\Gamma$, the wall of second kind associated to it is given by\[\mathcal{W}_{\gamma}^{2}:=\gamma^{\perp}=\{(b,\theta)\in\mathcal{B}_{\theta}:\,Y_{\theta}(b)(\gamma)=0\}=\{(b,\theta)\in\mathcal{B}_{\theta}:\,Im(e^{-i\theta}Z_{b}(\gamma)=0\}\]

which is seen to be a hypersurface in $\mathcal{B}_{\theta}$. And the pull back of the wall of second kind in $\Gamma_{\mathbb{R}}^{*}$ through the map $Y$. We can project this wall down to a hypersurface in $\mathcal{B}$, also denoted by $\mathcal{W}_{\gamma}^{2}$, i.e.,
\begin{equation}
\mathcal{W}_{\gamma}^{2}=\{b\in\mathcal{B}:\,\exists\,\theta\,(\text{fixed})\,s.t.,\,\,Arg(Z_{b}(\gamma))=\theta\}\subset\mathcal{B}
\end{equation}

\section{Complex integrable systems and WCS}

In this section, we review the notion of complex integrable systems following the treatment in \cite{kontsevich2013wall} with emphasis on the connection to WCS. We will see that a WCS can be induced from the data of the complex integrable system. This produces an algorithm for computing DT-invariants by employing certain trees on its base given in terms of the split attractor flows.

\subsection{Complex integrable system and its geometry}

A \textit{complex integrable system} (see section 4 of \cite{kontsevich2013wall}) is a holomorphic surjective map $\pi:(X,\omega^{2,0})\to \mathcal{B}$ with smooth fibers being holomorphic Lagrangian submanifolds, where $(X,\omega^{2,0})$ is a holomorphic symplectic manifold of complex dimension $2n$, and the base $\mathcal{B}$ is a complex manifold of dimension $n$. Denote by $\mathcal{B}^{0}\subset\mathcal{B}$ the dense open subset over which the fibers being smooth, and by $\Delta=\mathcal{B}^{sing}:=\mathcal{B}-\mathcal{B}^0$ the \textit{discriminant locus}.  We make the assumption that the smooth fibres are compact, so they are actually holomorphic Lagrangian tori. It is well-known that:  
\begin{proposition}
 The base of the complex integrable system $\pi:(X^{0},\omega^{2,0})\to\mathcal{B}^{0}$ is a $\mathbb{Z}$-affine manifold.
\end{proposition}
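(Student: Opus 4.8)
The plan is to construct the $\mathbb{Z}$-affine structure on $\mathcal{B}^{0}$ directly from the period coordinates of the integrable system, following the classical construction in the theory of action-angle variables and its complexification. First I would recall that for each point $b\in\mathcal{B}^{0}$ the fiber $X_b=\pi^{-1}(b)$ is a compact holomorphic Lagrangian torus, so the local system $\underline{\Gamma}:=R^1\pi_*\mathbb{Z}$ over $\mathcal{B}^{0}$ (equivalently $H_1$ of the fibers by Poincaré duality on the torus) is a lattice local system of rank $2n$. Over a contractible open $U\subset\mathcal{B}^{0}$ choose a basis of sections $\gamma_1,\dots,\gamma_{2n}$ of $\underline{\Gamma}|_U$. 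The holomorphic symplectic form $\omega^{2,0}$ is closed and restricts to zero on the Lagrangian fibers, so it defines a relative cohomology class and, via a choice of local symplectic potential / the Liouville-type form, a fiberwise closed holomorphic $1$-form; integrating it over the cycles $\gamma_i$ produces holomorphic period functions $z_i:U\to\mathbb{C}$. The key step is to verify that $(dz_1,\dots,dz_{2n})$ — or rather a suitably chosen $n$-dimensional subset / the differentials $dz_i$ themselves after noting the relations forced by $\omega$ being symplectic — are $\mathbb{C}$-linearly independent, hence give a holomorphic coordinate chart on $U$, and that on overlaps $U\cap U'$ the two sets of period coordinates differ by the transition matrices of the local system $\underline{\Gamma}$, which lie in $GL_{2n}(\mathbb{Z})$, plus constant shifts coming from the monodromy of the symplectic potential. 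That is exactly the data of a $\mathbb{Z}$-affine structure: an atlas whose transition maps lie in $\mathbb{Z}^{n}\rtimes GL_n(\mathbb{Z})$ (here one passes from the $2n$ periods to $n$ affine coordinates by using that half of them — the "$a$-periods" — form a local coordinate system while the other half are determined, via a prepotential, and in the special-Kähler setup one uses the real and imaginary parts appropriately; the cleanest route is to note that the central charge map $Z$ together with its conjugate gives $2n$ real affine coordinates with integral transition matrices).

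In more detail, the steps I would carry out are: (i) set up the local system $\underline{\Gamma}$ and the relative de Rham description of $\omega^{2,0}$, choosing on a contractible $U$ a primitive $\lambda$ with $d\lambda=\omega^{2,0}$ and restricting to get the fiberwise $1$-form; (ii) define $z_i(b)=\oint_{\gamma_i(b)}\lambda$ and show $z_i$ is holomorphic in $b$ (differentiation under the integral sign, using that the $\gamma_i$ are flat sections so the cycle does not move in homology); (iii) compute $dz_i$ in terms of the variation of the relative class and show the $z_i$ — or an appropriate half of them — form a local holomorphic coordinate system, using that $\omega^{2,0}$ is nondegenerate and that the fibers are Lagrangian (this is where the Lagrangian condition is essential: it forces the period map to be a local isomorphism onto its image, a complexified Liouville–Arnold statement); (iv) on overlaps, observe that changing the basis of $\underline{\Gamma}$ by $M\in GL_{2n}(\mathbb{Z})$ changes $(z_i)$ by $M$, and changing the primitive $\lambda$ by a closed form changes $(z_i)$ by additive constants, so the transition functions are affine with $GL(\mathbb{Z})$ linear part; (v) conclude that $\mathcal{B}^{0}$ carries a $\mathbb{Z}$-affine structure.

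I would present the special-Kähler refinement only as a remark, since Proposition 3.1 as stated only asserts the $\mathbb{Z}$-affine structure: once one has the periods $z_i$ and the dual periods $z^i_D$ (obtained by pairing with the dual basis of $\underline{\Gamma}$ under $\langle\cdot,\cdot\rangle$), the $2n$ real functions $\mathrm{Re}\,z_i$, $\mathrm{Im}\,z_i$ (equivalently the components of the central charge) give real affine coordinates whose transition matrices, being those of the integral symplectic local system $\underline{\Gamma}$, lie in $Sp_{2n}(\mathbb{Z})\subset GL_{2n}(\mathbb{Z})$; the compatibility of the two is the statement that $Z$ is the central charge of the associated family of stability data, tying back to Example b) of Section 2.5.

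The main obstacle I expect is step (iii): showing that the period functions genuinely furnish a local coordinate system, i.e.\ that the period map $b\mapsto (z_1(b),\dots)$ is an immersion. This is the complexified Arnold–Liouville theorem and requires using the nondegeneracy of $\omega^{2,0}$ together with the Lagrangian property of the fibers in an essential way — purely formal manipulation of integrals is not enough, one needs the symplectic geometry input that the normal directions to a Lagrangian torus are identified with $H^1$ of the torus and that $\omega^{2,0}$ pairs base directions with fiber directions nondegenerately. Everything else (holomorphicity, the $GL_n(\mathbb{Z})$ transition behaviour) is bookkeeping with the local system, which is routine. I would therefore isolate the immersion statement as the one lemma that genuinely needs the hypotheses of the proposition and cite the treatment in \cite{kontsevich2013wall}, Section 4, for the details.
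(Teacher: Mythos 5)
Your construction is essentially the paper's own (implicit) argument: the paper states this proposition as ``well-known'' without giving a proof, but the surrounding text defines exactly your period coordinates $z_i$ via $dz_i=\alpha_i=\oint_{\gamma_i}\omega^{2,0}$, Remark 3.1 records that $\{\mathrm{Re}(a^i),\mathrm{Re}(a_{D,j})\}$ are the $\mathbb{Z}$-affine coordinates, and the immersion property you rightly isolate as the key step is supplied by the nondegeneracy condition $\sqrt{-1}\,\langle dZ,d\bar Z\rangle>0$ of Proposition 3.2. The only blemish is the dimension count in your first paragraph (the base has real dimension $2n$, so one keeps all $2n$ real affine functions $\mathrm{Re}\,z_1,\dots,\mathrm{Re}\,z_{2n}$ with transitions in $Sp_{2n}(\mathbb{Z})\ltimes\mathbb{R}^{2n}$, rather than passing to $n$ coordinates with $GL_n(\mathbb{Z})$ linear part), which you essentially correct yourself at the end of that paragraph and in your closing remark.
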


\begin{definition}
We have a local system of lattices $\underline{\Gamma}$ over $\mathcal{B}^{0}$, with its ``stalk" $\underline{\Gamma}_{b}$ at $b\in\mathcal{B}^{0}$ being the first homology group $H_{1}(\pi^{-1}(b),\mathbb{Z})$. $\underline{\Gamma}$ will be called the local system of \textbf{charge lattices}.
\end{definition}
The one forms $\alpha_{i}:=\oint_{\gamma_{i}}\omega^{2,0},\,\,i=1,\cdots,2n$ are closed, So $\exists$ locally holomorphic functions $z_{i}$ s.t. $dz_{i}=\alpha_{i}$, which define an holomorphic map $(z_{1},\cdots,z_{2n}): U\to\mathbb{C}^{2n}$ for $U$ an open neighborhood of $b$. 
\begin{definition}
(\cite{kontsevich2013wall}) The collection of one forms $\alpha_{i}$ gives rise to an element $\delta\in H^{1}(\mathcal{B}^{0},\underline{\Gamma}^{\vee}\otimes\mathbb{C})$. If $\delta$ is zero, there exists a section $Z\in\Gamma(\mathcal{B}^{0},\underline{\Gamma}\otimes\mathscr{O}_{\mathcal{B}^{0}})$, called \textbf{central charge} such that $dZ(\gamma_{i})=dz_{i}=\alpha_{i}$\, for $1\leq i\leq2n$.
\end{definition}

If the fibers of $\pi$ are endowed with a covariantly constant integer polarization. We call such system the \textit{polarized complex integrable system}. A polarization gives rise to $\langle\cdot,\cdot\rangle: \Lambda^{2}\,\underline{\Gamma}\to\underline{\mathbb{Z}}\,_{\mathcal{B}^{0}}$. Denote by $(\omega_{ij}):=(\langle\gamma_{i},\gamma_{j}\rangle)$ the corresponding matrix and by $(\omega^{ij})$ its inverse. We have that (section 4.1.1 of \cite{kontsevich2013wall}):

\begin{proposition}
  a) $\sum_{i,j}\omega^{ij}dz_{i}\wedge dz_{j}=\sum_{i,j}\omega^{ij}\alpha_{i}\wedge\alpha_{j}=0$\[b)\,\sqrt{-1}\sum_{i,j}\omega^{ij}dz_{i}\wedge d\bar{z}_{j}=\sqrt{-1}\sum_{i,j}\omega^{ij}\alpha_{i}\wedge\bar{\alpha}_{j} >0.\]
\end{proposition}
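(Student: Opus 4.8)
The plan is to work locally on $\mathcal{B}^0$ in the holomorphic coordinates $z_1,\dots,z_{2n}$ determined by $dz_i=\alpha_i=\oint_{\gamma_i}\omega^{2,0}$, and to exploit the two structural facts available: first, that the fibres are holomorphic Lagrangian, so $\omega^{2,0}$ restricts to zero on each fibre; second, that the polarization $\langle\cdot,\cdot\rangle$ is the intersection form on $H_1$ of the torus fibre, so the matrix $(\omega_{ij})$ is the standard symplectic intersection matrix in the basis $\gamma_1,\dots,\gamma_{2n}$ and $(\omega^{ij})$ its inverse. The key point is a Riemann-bilinear-relations computation on the fibre $T_b=\pi^{-1}(b)$: integrating products of the closed $1$-forms $\alpha_i|_{T_b}$ (equivalently, the periods $dz_i$) over $T_b$ against the intersection pairing reproduces $\sum_{ij}\omega^{ij}\,(\cdot)\wedge(\cdot)$ after contracting with the dual basis. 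Concretely, if $\eta_1,\dots,\eta_{2n}$ is the basis of $H^1(T_b,\mathbb{C})$ Poincaré-dual to $\gamma_1,\dots,\gamma_{2n}$, then $\int_{T_b}\eta_i\wedge\eta_j=\langle\gamma_i,\gamma_j\rangle=\omega_{ij}$, and one checks that the holomorphic $1$-form $\sum_i dz_i\,\eta_i$ (the restriction of a suitable primitive of $\omega^{2,0}$, or rather the $(1,0)$-part built from the period map) is, up to normalization, the Seiberg-Witten differential; the statements (a) and (b) are then exactly the two Riemann bilinear relations for this form.

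The steps I would carry out, in order: (1) Fix $b\in\mathcal{B}^0$ and trivialize $\underline{\Gamma}$ near $b$, so $\gamma_1,\dots,\gamma_{2n}$ is a symplectic basis of $H_1(T_b,\mathbb{Z})$ with $(\langle\gamma_i,\gamma_j\rangle)=(\omega_{ij})$ constant. (2) Observe $\sum_{ij}\omega^{ij}\,dz_i\wedge dz_j=\sum_{ij}\omega^{ij}\,\alpha_i\wedge\alpha_j$ is immediate since $dz_i=\alpha_i$; the content is that this $2$-form on $\mathcal{B}^0$ vanishes. For this, identify this expression (via the period map $b\mapsto(z_i(b))$ and the inverse polarization) with the pullback to $\mathcal{B}^0$ of the holomorphic symplectic form, or directly with $\int_{T_b}\Xi\wedge\Xi$ where $\Xi=\sum_i \eta_i\,dz_i$ is a $(1,0)$-form on $T_b$ depending holomorphically on $b$; since a $(1,0)$-form on a complex $n$-torus satisfies $\Xi\wedge\Xi=0$ pointwise when... — more robustly, use that holomorphic Lagrangian means $\omega^{2,0}|_{T_b}=0$, write $\omega^{2,0}$ in coordinates adapted to the fibration, and contract with $\omega^{ij}$ to get the antisymmetrization that kills the wedge. (3) For (b), the same identification reduces the claim to $\sqrt{-1}\int_{T_b}\Xi\wedge\bar\Xi>0$, i.e.\ $\sqrt{-1}\sum_{ij}\omega^{ij}\,\Big(\int_{T_b}\eta_i\wedge\bar\eta_j\Big)\,dz_i\otimes d\bar z_j$ paired appropriately — and $\sqrt{-1}\int_{T_b}\Xi\wedge\bar\Xi$ is the (Hodge-theoretic) positive-definite Hermitian form attached to the holomorphic $1$-form $\Xi$, hence strictly positive provided $\Xi\neq 0$; non-vanishing of $\Xi$ follows because the periods $z_i$ form a local holomorphic coordinate system on $\mathcal{B}^0$ (equivalently, $\delta=0$ and the period map is a local immersion, which is part of the $\mathbb{Z}$-affine structure of Proposition 3.1).

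The main obstacle, and where I would spend the most care, is step (2)/(3): making precise the identification of $\sum_{ij}\omega^{ij}\alpha_i\wedge\alpha_j$ and $\sqrt{-1}\sum_{ij}\omega^{ij}\alpha_i\wedge\bar\alpha_j$ with fibre integrals $\int_{T_b}\Xi\wedge\Xi$, $\sqrt{-1}\int_{T_b}\Xi\wedge\bar\Xi$ of a single globally-defined-on-the-fibre holomorphic $1$-form $\Xi$, so that the Riemann bilinear relations apply verbatim. This requires knowing that $\omega^{2,0}$ can be written, in a neighborhood of $T_b$ in $X$, as $\sum_i dz_i\wedge(\text{angle-coordinate }1\text{-forms})$ — the action–angle description of the integrable system — and that the angle $1$-forms restrict on $T_b$ to a basis of $H^1(T_b,\mathbb{C})$ dual to the $\gamma_i$ with respect to the intersection form; granting that, (a) is the vanishing statement $\int_{T_b}\omega^{2,0}\wedge\omega^{2,0}|_{T_b}=0$ repackaged (automatic from $\omega^{2,0}|_{T_b}=0$, since the fibre is Lagrangian), and (b) is positivity of the natural polarization form on $H^{1,0}(T_b)$, which is the classical Riemann inequality. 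Once this dictionary is set up the computation is routine bookkeeping with the matrices $(\omega_{ij})$ and $(\omega^{ij})$.
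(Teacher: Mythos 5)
The paper does not actually prove this proposition: it is quoted verbatim from Section 4.1.1 of Kontsevich--Soibelman with a citation and no argument, so there is no in-paper proof to compare against. Judged on its own, your strategy is the standard one and is essentially the right mechanism: because the fibre $T_b$ is holomorphic Lagrangian, contracting $\omega^{2,0}$ with a lift of $v\in T_b\mathcal{B}^0$ gives a well-defined class $\beta(v)\in H^{1,0}(T_b)$ whose $\gamma_i$-periods are $\alpha_i(v)=dz_i(v)$; injectivity of $v\mapsto\beta(v)$ comes from nondegeneracy of $\omega^{2,0}$; and then a) and b) are exactly the two Hodge--Riemann bilinear relations for the polarized weight-one Hodge structure on $H^1(T_b)$ (isotropy of $H^{1,0}$, and positivity of $\sqrt{-1}\langle\xi,\bar\xi\rangle$ on $H^{1,0}\setminus\{0\}$).

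That said, two steps in your write-up would fail as stated. First, the identity $\int_{T_b}\eta_i\wedge\eta_j=\langle\gamma_i,\gamma_j\rangle$ on which your dictionary rests only makes sense when the fibre has complex dimension one: for an abelian fibre of complex dimension $g>1$ (already the case for $SU(3)$, where the fibre is $Jac(\mathcal{C}_{\mathbf u})$, a real $4$-torus) the product $\eta_i\wedge\eta_j$ is a $2$-form and cannot be integrated over $T_b$. You must instead work with the abstract skew pairing on $H^1(T_b,\mathbb{C})$ induced by the given polarization $\langle\cdot,\cdot\rangle$ (equivalently, cup against the $(g-1)$-st power of the polarization class), and keep track of the fact that in the basis dual to $\gamma_i$ this pairing has matrix $(\omega^{ij})$, not $(\omega_{ij})$ --- the inverse matters for the sign in b). Second, your argument for a) trails off: ``$\Xi\wedge\Xi=0$ pointwise'' is not the reason (it is false as a statement about a vector-valued form), and ``$\omega^{2,0}|_{T_b}=0$'' by itself only tells you that $\beta(v)$ is a well-defined $(1,0)$-class; you still need the first Riemann relation (isotropy of $H^{1,0}$ for the polarization) to conclude $\sum_{ij}\omega^{ij}\alpha_i\wedge\alpha_j=0$. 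With those two repairs the proof goes through, and the positivity in b) also requires the injectivity of $v\mapsto\beta(v)$, which you correctly reduce to the period map being a local immersion.
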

In terms of the central charge $Z$, the condition a) and b) above can be rewritten as
\begin{equation*}
a)^{\prime} \,\,\,\langle dZ,dZ\rangle=0\,\,\,(\text{transversality})\qquad b)^{\prime}\,\,\, \sqrt{-1}\,\langle dZ,d\bar{Z}\rangle>0\,\,\,(\text{nondegeneracy})
\end{equation*}

Choose \textit{symplectic basis} $\{\alpha^{i},\beta_{j}\}$ for $\underline{\Gamma}$ near $b\in\mathcal{B}^{0}$. Then $Z=\sum_{i=1}^{n}a^{i}\alpha_{i}+\sum_{i=1}^{n}a_{D,i}\beta^{i}$, where $a^{i}$, $a_{D,i}\in\mathscr{O}_{\mathcal{B}^{0}}$ are locally holomorphic on $\mathcal{B}^{0}$, called the \textit{special coordinates} (see \cite{freed1999special} for its definition). Then condition $a^{\prime})$ specialize into $d\left(\sum_{i}a_{D,i}\,da^{i}\right)=0$, from which we deduce that $\sum_{i}a_{D,i}\,da^{i}=d\mathcal{F}$ for some holomorphic function $\mathcal{F}$, called the \textit{prepotential}. And we have: $a_{D,i}=\frac{\partial\mathcal{F}}{\partial a^{i}},\,\,i=1,\cdots,n$.

Condition $b^{\prime})$ becomes $Im\left(\sum_{i}d{a}_{D,i}\wedge d\bar{a}^{i}\right)>0$, thus we have the following K\"ahler metric on $\mathcal{B}^{0}$:
\begin{equation}
    g_{\mathcal{B}^{0}}=Im\left(\sum_{i}d{a}_{D,i}\, d\bar{a}^{i}\right)=\sum_{i,j}\,Im(\tau_{ij})da^{i}d\bar{a}^{j}
\end{equation}
where the matrix $\tau=(\tau_{ij})$ is given by $\tau_{ij}:=\frac{\partial a_{D,i}}{\partial a^{j}}=\frac{\partial^{2}\mathcal{F}}{\partial a^{i}\partial a^{j}}$.

\begin{remark}
 Define $x^{i}:=Re(a^{i})$ and $y_{j}:=Re(a_{D,j})$, then $\{x^{i},y_{j}\}$ are $\mathbb{Z}$-affine coordinates on $\mathcal{B}^{0}$ (see proposition 3.1). We can also use $\{Im\, a^{i},Im\,a_{D,j}\}$ as $\mathbb{Z}$-affine coordinates (\textit{dual affine structure}). They differ by a $\frac{\pi}{2}$-rotation of the central charge $Z$. More generally, we can consider the rotated $\mathbb{Z}$-affine coordinates: $\{Re(e^{-i\theta}a^{i})\}$ or $\{Im(e^{-i\theta}a_{D,i})\}$.
\end{remark}

 Near $\Delta$, the fibration $\pi$ degenerates. Some cycles, which we call the \textit{vanishing cycles} shrink to zero. We give the following\textbf{ $A_{1}$-singularity assumption} (section 4.5 of \cite{kontsevich2013wall}). 
 
 Assume that $\Delta=\mathcal{B}\backslash\mathcal{B}^{0}$ is an analytic divisor, and there exists an analytic divisor $\Delta^{1}\subset\Delta$ such that $dim \Delta^{1}\leq dim\mathcal{B}^{0}-2$, and the complement $\Delta^{0}:=\Delta\backslash\Delta^{1}$ is smooth, then we have:
\begin{enumerate}
    \item There exist local coordinates $\{z_{1},\cdots,z_{n}\}$ near $\Delta^{0}$ such that $\Delta^{0}=\{z_{1}=0\}$.
    \item $Z:\mathcal{B}^{0}\to\underline{\Gamma}^{\vee}\otimes_{\mathbb{Z}}\mathbb{C}\cong\mathbb{C}^{2n}$ is a multi-valued and is given by $(z_{1},\cdots,z_{n})\mapsto(z_{1},\cdots,z_{n},\partial_{1}F_{0},\cdots,\partial_{n}F_{0})$, where $\partial_{i}:=\frac{\partial}{\partial z_{i}}$, and $F_{0}=\frac{1}{2\pi i}\frac{z_{1}^{2}}{2}\log z_{1}+G(z_{1},\cdots,z_{n})$, where $G$ is holomorphic, and the \textit{prepotential} $F_{0}$ satisfies the \textit{positivity condition}: $i\langle dZ,\overline{dZ}\rangle>0$.
    \item The monodromy of $\underline{\Gamma}$ about $\Delta_{0}$ is given by the \textit{Picard-Lefshetz type formula}: $\gamma\mapsto\gamma+\langle\gamma,\gamma_{0}\rangle\gamma_{0}$, where $\gamma_{0}$ is the vector such that $\langle\gamma_{0},\cdot\rangle\in\underline{\Gamma}^{\vee}$ is a primitive covector. 
\end{enumerate}

\subsection{Attractor flow and its properties}

We introduce attractor flows on the base. They first appeared in the study of super-gravity (\cite{moore1998arithmetic}\cite{denef2000supergravity}).  Given $\gamma\in\underline{\Gamma}$ near $b_{0}\in\mathcal{B}^{0}$, consider the function $F_{\gamma}(\mathbf{u}):=Re(Z_{\mathbf{u}}(\gamma))$, where $\mathbf{u}$ is the complex coordinate on $\mathcal{B}^{0}$.

\begin{definition}
The \textbf{attractor flow} associated to $(b_{0},\gamma)\in tot(\underline{\Gamma})$ is given by the gradient flow of the function $F_{\gamma}(\mathbf{u})$, namely $\dot{\mathbf{u}}+\nabla F_{\gamma}(\mathbf{u})=0$, where $\dot{\mathbf{u}}$ denotes the derivative with respect to the ``time" parameter $t$, and the gradient is taken with respect to the K\"ahler metric (13), i.e., its $i$-th component is given by $\sum_{j}g^{i\bar{j}}\,\overline{\partial}_{j}F_{\gamma}(\mathbf{u})$.
\end{definition}

 As $Z({\gamma})$ is a holomorphic, $F_{\gamma}(\mathbf{u})$ is decreasing and $Im(Z({\gamma}))$ stays constant along the flow lines. In the affine structure corresponding to $\theta=Arg Z_{b_{0}}(\gamma)$, the flow line equation takes the form: $Im\left(e^{-i\theta}Z_{b(t)}(\gamma)\right)=0$.
\begin{proposition}
The phase $Arg Z(\gamma)$ of the central charge is constant along the attractor flow. \end{proposition}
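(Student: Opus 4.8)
The plan is to identify the phase-rotated imaginary part of the central charge as a first integral of the gradient flow, and then to upgrade this conservation law to constancy of the argument. Fix $(b_0,\gamma)\in tot(\underline{\Gamma})$, set $\theta:=Arg\,Z_{b_0}(\gamma)$, and recall from Remark 3.1 that the attractor flow attached to $(b_0,\gamma)$ is to be read in the $\mathbb{Z}$-affine structure rotated by $\theta$; concretely it is the gradient flow, for the K\"ahler metric (13), of the function $F_\gamma(\mathbf{u})=Re\bigl(e^{-i\theta}Z_{\mathbf{u}}(\gamma)\bigr)$. Write $W:=e^{-i\theta}Z(\gamma)$. By Definition 3.2 together with the $A_1$-singularity assumption, $Z$ is a (locally defined) holomorphic section of $\underline{\Gamma}\otimes\mathscr{O}_{\mathcal{B}^0}$, so $W$ is locally a holomorphic function on $\mathcal{B}^0$; in particular $\overline{\partial}_jW=0$ and $\overline{\partial}_jF_\gamma=\tfrac12\,\overline{\partial_jW}$.

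First I would differentiate $W$ along a flow line. From the flow equation $\dot u^i=-\sum_j g^{i\bar j}\,\overline{\partial}_jF_\gamma$ of Definition 3.3 and the chain rule for the holomorphic function $W$ (so that $\tfrac{d}{dt}W=\sum_i\partial_iW\,\dot u^i$) one gets
\[
\frac{d}{dt}W \;=\; -\frac12\sum_{i,j}g^{i\bar j}\,\partial_iW\,\overline{\partial_jW}\;=:\;-\tfrac12\,S .
\]
The key point is that $S$ is real and nonnegative: Hermiticity of the metric (13) gives $\overline{g^{i\bar j}}=g^{j\bar i}$, so relabelling the summation indices shows $\overline{S}=S$, while positive-definiteness of $(g^{i\bar j})$ — equivalently the nondegeneracy condition $b)'$ — gives $S\ge 0$. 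Hence $\tfrac{d}{dt}Im(W)=0$ and $\tfrac{d}{dt}Re(W)=-\tfrac12S\le 0$: the quantity $Im(W)=Im\bigl(e^{-i\theta}Z_{b(t)}(\gamma)\bigr)$ is constant and $F_\gamma$ is non-increasing along the flow. The computation uses only that the coordinates are holomorphic and the metric Hermitian positive, hence it is independent of the chosen holomorphic chart; carried out in the special coordinates $a^i$ it is literally the manipulation with $g_{i\bar j}=Im(\tau_{ij})$.

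It remains to convert "$Im(W)$ constant" into "$Arg\,Z(\gamma)$ constant". At $t=0$ we have $W(b_0)=|Z_{b_0}(\gamma)|>0$, so $Im(W)\equiv 0$ along the flow, i.e. $e^{-i\theta}Z_{b(t)}(\gamma)$ is real for every $t$; being a non-increasing real function that starts positive, it cannot change sign without first vanishing. Therefore, as long as the trajectory stays in $\mathcal{B}^0$ with $Z_{b(t)}(\gamma)\ne 0$ — that is, until it reaches the discriminant $\Delta$, where the relevant cycle may shrink — we have $e^{-i\theta}Z_{b(t)}(\gamma)>0$, i.e. $Arg\,Z_{b(t)}(\gamma)=\theta=Arg\,Z_{b_0}(\gamma)$, which is the assertion; equivalently, the flow is confined to the wall of the second kind $\mathcal{W}^2_\gamma$ written in the $\theta$-rotated affine structure, $\{\,Im(e^{-i\theta}Z(\gamma))=0\,\}$, on which $Arg\,Z(\gamma)\in\{\theta,\theta+\pi\}$, and continuity together with $F_\gamma>0$ pins the value to $\theta$. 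The computation in the second paragraph is a two-line manipulation and I expect no difficulty there; the only genuine subtlety — and the main conceptual obstacle — is the normalization, namely reconciling the statement of Definition 3.3 with the phase-rotated convention, since the \emph{unrotated} gradient flow of $Re\,Z(\gamma)$ conserves $Im\,Z(\gamma)$ rather than $Im\bigl(e^{-i\theta}Z(\gamma)\bigr)$, and the argument is constant only when that conserved constant equals $0$.
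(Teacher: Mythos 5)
Your proof is correct and follows essentially the same route as the paper: the paper's justification is precisely the observation that, since $Z(\gamma)$ is holomorphic, the gradient flow of $Re\bigl(e^{-i\theta}Z(\gamma)\bigr)$ with respect to the K\"ahler metric conserves $Im\bigl(e^{-i\theta}Z(\gamma)\bigr)$, and the flow line lies in $\{Im(e^{-i\theta}Z_{b(t)}(\gamma))=0\}$ for $\theta=Arg\,Z_{b_0}(\gamma)$. You merely carry out this computation explicitly, and your closing remark about the phase normalization is exactly the point the paper addresses by working in the $\theta$-rotated affine structure.
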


\begin{definition}
Since $F_{\gamma}(\mathbf{u})$ is decreasing along the flow, the attractor flow would converge to the local minima of the function $F_{\gamma}(\mathbf{u})$, these terminate points of the flow line will be called the \textbf{attractor points}.
\end{definition}

\begin{remark}
Since $Im\left(e^{-i\theta}Z_{b}(\gamma)\right)$ vanishes identically along the flow line, $|Z_{b}(\gamma)|=Re\left(e^{-i\theta}Z_{b}(\gamma)\right)$. This suggest that the attractor points correspond to the minimum of the function $|Z_{}(\gamma)|$, which has the meaning as ``mass" for BPS particles in physics. Thus, we call $Re\left(e^{-i\theta}Z_{b}(\gamma)\right)$ the \textbf{mass function}.
\end{remark}

The possible minima of $|Z_{b}(\gamma)|$ are the zeros of $Z(\gamma)$. The vanishing cycle $\gamma$ causes $Z_{b}(\gamma)\to 0\, \text{as}\, b\to\Delta$. Thus we expect that there are attractor points belonging to $\Delta$. We need to study the behavior of attractor flows near $\Delta$. By $A_{1}$-singularity assumption, $\Delta=\{z_{1}=0\}$, where $\{z_{1},\cdots,z_{n}\}$ are local coordinates. Reducing to the two dimensional case, denote by $\gamma_{0}$ the vanishing cycle and $\gamma_{1}$ the remaining basis element, we see that near $\Delta$, we have that \[Z_{u}(\gamma_{0})=u,\,\,\,Z_{u}(\gamma_{1})=\frac{1}{2\pi i}\left(u\log u+\frac{u}{2}\right)+\partial_{u}G(u).\] 

The cycle $\gamma_{0}$, under $T\mathcal{B}^{0}\cong \underline{\Gamma}$, corresponds to the \textit{invariant direction} under the Picard-Lefshetz monodromy.
\begin{proposition}
Under $A_{1}$-singularity assumption, the flow line associated to vanishing cycle $\gamma_{0}$ terminates at the singularity (thus an attractor point). For charges other than $\gamma_{0}$, the flow line can avoid this singularity.  
\end{proposition}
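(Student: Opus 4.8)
The plan is to work in the local two-dimensional model furnished by the $A_{1}$-singularity assumption, in which $\Delta^{0}=\{u=0\}$, the vanishing cycle $\gamma_{0}$ has $Z_{u}(\gamma_{0})=u$, and the complementary basis element satisfies $Z_{u}(\gamma_{1})=\frac{1}{2\pi i}\big(u\log u+\frac{u}{2}\big)+\partial_{u}G(u)$ with $G$ holomorphic; one then combines two facts already in hand. First, by Proposition 3.3 the phase $Arg\,Z_{b(t)}(\gamma)$ is constant along the attractor flow, so the flow line through $b_{0}$ lies on the wall of the second kind $\mathcal{W}^{2}_{\gamma}=\{u:Im(e^{-i\theta}Z_{u}(\gamma))=0\}$ with $\theta:=Arg\,Z_{b_{0}}(\gamma)$, on the branch where $Arg\,Z_{u}(\gamma)=\theta$ (this is just the flow-line equation recorded above). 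Second, on that branch $Re(e^{-i\theta}Z_{u}(\gamma))=|Z_{u}(\gamma)|$ is the mass function, which the attractor flow decreases monotonically; hence the flow terminates at a local minimum of $|Z_{u}(\gamma)|$ along this branch, and — $Z_{u}(\gamma)$ being holomorphic — such a terminal point is either a zero of $Z(\gamma)$, an interior branch point where $\partial_{u}Z_{u}(\gamma)=0$, or a point where the flow line exits the coordinate patch. The whole argument is then just a matter of locating these possibilities relative to $\{u=0\}$.

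For the vanishing cycle $\gamma=\gamma_{0}$: since $Z_{u}(\gamma_{0})=u$, for every base point $b_{0}=u_{0}$ in the patch the relevant branch of $\mathcal{W}^{2}_{\gamma_{0}}$ is the open radial ray $\{u=re^{i\theta}:r>0\}$ with $\theta=Arg\,u_{0}$; along it $F_{\gamma_{0}}=r$, which has no interior critical point, and decreasing this function drives $r\downarrow 0$ (the flow cannot run off toward the outer edge of the patch). So the flow line terminates at $u=0$, which is the unique zero of $Z(\gamma_{0})$, hence an attractor point. As a refinement I would verify that the flow reaches $u=0$ in finite time: by (13) the conformal factor of the K\"ahler metric is $Im\,\tau_{11}$, and from $\tau_{11}=\partial_{u}Z_{u}(\gamma_{1})=\frac{1}{2\pi i}\big(\log u+\tfrac{3}{2}\big)+G''(u)$ one gets $Im\,\tau_{11}\sim-\tfrac{1}{2\pi}\log|u|\to+\infty$ as $u\to0$ (which, incidentally, re-confirms positivity of (13) near $\Delta$); integrating $\dot r=-(Im\,\tau_{11})^{-1}$ then gives a finite arrival time, the logarithmic blow-up being just mild enough to remain integrable.

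For a charge $\gamma=m\gamma_{0}+n\gamma_{1}$ with $n\neq0$, i.e., not proportional to $\gamma_{0}$: here $Z_{u}(\gamma)\to n\,\partial_{u}G(0)=:c_{\gamma}$ as $u\to0$, and $c_{\gamma}\neq0$ for generic $G$ — indeed $\Delta^{0}$ lies in the zero locus of $Z(\gamma)$ exactly when $\gamma\parallel\gamma_{0}$. The same limit $c_{\gamma}$ is obtained on every branch of $\log u$, because Picard-Lefschetz monodromy changes $Z_{u}(\gamma)$ only by integer multiples of $nu\to0$. Now pick $b_{0}$ at a fixed positive distance from $\{u=0\}$ and generic, so that $e^{-i\theta}c_{\gamma}$ is not real; then $Im\big(e^{-i\theta}Z_{u}(\gamma)\big)\to Im\big(e^{-i\theta}c_{\gamma}\big)\neq0$ as $u\to0$ on every branch, so $\mathcal{W}^{2}_{\gamma}$ — and therefore the flow line it contains — stays at distance $\geq\delta$ from $\{u=0\}$ for some $\delta>0$. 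Hence this flow line cannot terminate at the singularity: its attractor point (a zero of $Z(\gamma)$ or a branch point at distance $\geq\delta$, or an exit from the patch) lies away from $\{u=0\}$. This shows flow lines of charges not proportional to $\gamma_{0}$ can avoid the singularity — and, for generic base point, always do.

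The main obstacle, and the only place where genuine care is needed, is justifying the word \emph{terminates}/\emph{attractor point}: one must be sure the trajectory actually converges, rather than drifting indefinitely, and control the degeneration of the metric (13) as $b\to\Delta$. For $\gamma_{0}$ this is handled precisely by the finite-arrival-time estimate above. For $\gamma\nparallel\gamma_{0}$ the issue dissolves once one knows the flow line avoids a fixed neighborhood of $\{u=0\}$: on that region the metric (13) is uniformly comparable to the flat one and the strictly decreasing, bounded function $F_{\gamma}$ forces convergence to a critical point by standard gradient-flow arguments — and in any case the conclusion ``avoids this singularity'' is then immediate. A secondary, minor point is the genericity of $G$ and of $b_{0}$ used in the second case; were $c_{\gamma}$ to vanish accidentally one would have to pass to the next term in the expansion of $Z_{u}(\gamma)$ at $u=0$, but this degeneracy does not occur for generic data, which is all that the word ``can'' in the statement asks for.
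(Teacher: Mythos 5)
Your argument is correct and follows the same overall strategy as the paper (reduce to the local two-dimensional model of the $A_{1}$-singularity and exploit the explicit forms $Z_{u}(\gamma_{0})=u$ and $Z_{u}(\gamma_{1})=\frac{1}{2\pi i}(u\log u+\frac{u}{2})+\partial_{u}G(u)$ together with Proposition 3.3), but the two halves are executed differently and, in each case, more carefully than in the paper. For $\gamma_{0}$ the paper simply observes that $Z_{u}(\gamma_{0})=u\to 0$ is the minimum; your finite-arrival-time estimate via $Im\,\tau_{11}\sim-\frac{1}{2\pi}\log|u|$ is a genuine addition that settles the convergence issue the paper glosses over (and confirms positivity of the metric near $\Delta$). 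For charges not proportional to $\gamma_{0}$ the paper argues that constancy of the phase is ``impossible by the particular form of $Z_{u}(\gamma_{1})$,'' whereas you argue that $Z_{u}(\gamma)\to c_{\gamma}=n\,\partial_{u}G(0)$ is a nonzero limit unaffected by the Picard--Lefschetz monodromy, so that for generic $\theta$ the wall $\{Im(e^{-i\theta}Z_{u}(\gamma))=0\}$ stays a definite distance from $\{u=0\}$; this is the cleaner mechanism, since the phase of $Z_{u}(\gamma)$ does in fact converge (to $Arg\,c_{\gamma}$) rather than oscillate, and your version makes precise why the word ``can'' is the right quantifier. The one hypothesis you use that the paper leaves implicit is $c_{\gamma}\neq 0$; you flag it correctly, and it is exactly the condition that $\gamma$ not acquire vanishing central charge at $\Delta^{0}$, which is part of what the $A_{1}$-assumption is meant to encode (only $\gamma_{0}$ is a vanishing cycle there). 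No gaps.
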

\begin{proof}
$Z_{u}(\gamma_{0})=u$ goes to zero as $u\to0$, which is the minimum of $Re\left(Z_{u}(\gamma_{0})\right)$, so the flow line terminates at the singularity $\{u=0\}$. On the other hand, if the flow associated to $Re\left(Z_{u}(\gamma_{1})\right)$ terminates at the singularity, then by proposition 3.3, $Arg(Z(\gamma_{0}))$ keeps constant, but this is impossible by the particular form of $Z_{u}(\gamma_{1})$ given above. 
\end{proof}
\textbf{Split attractor flows}: If the flow associated to $\gamma$ hits the wall $\mathcal{W}_{\gamma_{1},\gamma_{2}}$ where $Arg (Z(\gamma))=Arg (Z(\gamma_{1}))=Arg (Z(\gamma_{2}))$, as $Arg (Z(\gamma))$ stays constant along the flow, the flow splits into two flow lines associated to $\gamma_{1}$ and $\gamma_{2}$ at this split point. Denote by $\mathcal{L}_{\gamma}$ the flow of $\gamma$, thus we have schematically: $\mathcal{L}_{\gamma}=\mathcal{L}_{\gamma_{1}}+\mathcal{L}_{\gamma_{2}}$. Clearly, the process can be iterated until the resulting flow lines all terminate at the attractor points. This gives a tree on $\mathcal{B}^{0}$, called the \textbf{attractor tree}.

\textbf{Connection to WCS}: To produce a collection of numbers $\Omega_{b}(\gamma)$ for $(b,\gamma)\in tot\,\underline{\Gamma}$ satisfying KSWCF, we start with the \textbf{``initial data"}(i.e., DT-invariants of vanishing cycles at $\Delta$ equals 1). And consider all attractor trees rooted at $b$ that terminate at $\Delta$. Assuming their number being finite, they form a graph without oriented cycles. Then move toward the root, and apply KSWCF at each internal vertex $b_{*}$ belonging to the first kind walls, we will arrive at $\Omega_{b}(\gamma)$ inductively. We prove the following result due to Kontsevich and Soibelman (see section 2.7 of \cite{kontsevich2008stability})

\begin{proposition}
The WCS on $\mathcal{B}$ gives rise to a local embedding $\mathcal{B}^{0}\hookrightarrow Stab(\mathfrak{g}_{b})$ for each $b\in\mathcal{B}^{0}$.
\end{proposition}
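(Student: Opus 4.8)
The plan is to build the map $\mathcal{B}^{0}\to Stab(\mathfrak{g}_{b})$ by hand out of the WCS data, and then recognise it as a local embedding by comparing it with the forgetful (``central charge'') map $Stab(\mathfrak{g})\to Hom_{\mathbb{Z}}(\Gamma,\mathbb{C})$. First I would fix $b\in\mathcal{B}^{0}$ and a contractible open neighbourhood $U\ni b$ in $\mathcal{B}^{0}$. On $U$ the local system $\underline{\Gamma}$ is trivial, so we may identify $\underline{\Gamma}_{b'}\cong\Gamma:=\underline{\Gamma}_{b}=H_{1}(\pi^{-1}(b),\mathbb{Z})$ for all $b'\in U$; the covariantly constant polarization then produces a single skew pairing on $\Gamma$, hence a single graded Lie algebra $\mathfrak{g}=\mathfrak{g}_{b}$ with $\mathfrak{g}_{b'}\cong\mathfrak{g}$ for every $b'\in U$. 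Restricting the given WCS on $\mathcal{B}$ to $U$ and unwinding Propositions 2.4, 2.5, 2.6 together with Remark 2.2, the WCS over $U$ is the same datum as a family of stability data on $\mathfrak{g}$, namely the assignment $\Phi\colon b'\mapsto (Z_{b'},a_{b'})$, where $Z_{b'}\colon\Gamma\to\mathbb{C}$ is the central charge of Definition 3.4 and $a_{b'}(\gamma)\in\mathfrak{g}_{\gamma}$ is read off from the $\gamma$-graded components of the corresponding section of $\mathcal{WCS}_{\underline{\mathfrak{g}},Y}$. That each $(Z_{b'},a_{b'})$ genuinely satisfies the support property, so that $\Phi(b')$ is an honest point of $Stab(\mathfrak{g})$ and not a merely formal pair, is precisely what the WCS hypothesis buys: it holds for the initial data at $\Delta$ and is propagated to all of $U$ by the Kontsevich--Soibelman wall-crossing formula.

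Next I would invoke the structural theorem on stability data from \cite{kontsevich2008stability}: the forgetful map $\pi_{Z}\colon Stab(\mathfrak{g})\to Hom_{\mathbb{Z}}(\Gamma,\mathbb{C})$, $(Z,a)\mapsto Z$, is a local homeomorphism, the topology on $Stab(\mathfrak{g})$ being defined exactly so that this holds. Consequently continuity of $\Phi$ into $Stab(\mathfrak{g})$ is equivalent to continuity of $\pi_{Z}\circ\Phi\colon b'\mapsto Z_{b'}$ into $Hom_{\mathbb{Z}}(\Gamma,\mathbb{C})$, which is immediate since $Z$ depends holomorphically on $b'$. Shrinking $U$ so that $\pi_{Z}$ restricts to a homeomorphism from a neighbourhood of $\Phi(b)$ onto an open subset of $Hom_{\mathbb{Z}}(\Gamma,\mathbb{C})$, the claim ``$\Phi|_{U}$ is an embedding'' reduces to ``$b'\mapsto Z_{b'}$ is an embedding of $U$ into $Hom_{\mathbb{Z}}(\Gamma,\mathbb{C})$''.

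Finally I would settle this last statement using special geometry. Choosing a symplectic basis $\{\alpha^{i},\beta_{j}\}$ for $\underline{\Gamma}$ near $b$, one has $Z_{b'}=\sum_{i}a^{i}(b')\,\alpha_{i}+\sum_{i}a_{D,i}(b')\,\beta^{i}$ with $a_{D,i}=\partial\mathcal{F}/\partial a^{i}$, so that in the induced identification $Hom_{\mathbb{Z}}(\Gamma,\mathbb{C})\cong\mathbb{C}^{2n}$ the map $b'\mapsto Z_{b'}$ reads $(a^{i})\mapsto(a^{i},\partial_{i}\mathcal{F})$; since the special coordinates $a^{i}$ are themselves local holomorphic coordinates on $\mathcal{B}^{0}$ (Remark 3.2), the first $\mathbb{C}^{n}$ factor already separates points of $U$ and the differential contains an identity block, so this is a holomorphic embedding. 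Equivalently, the nondegeneracy condition $\sqrt{-1}\langle dZ,d\bar{Z}\rangle>0$ of Proposition 3.3 forces $dZ_{b}$ to be injective on $T_{b}\mathcal{B}^{0}$. Combining with the previous paragraph, $\Phi$ is a local embedding $\mathcal{B}^{0}\hookrightarrow Stab(\mathfrak{g}_{b})$.

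I expect the genuine obstacle to be the middle step rather than the geometry: one must carefully set up and cite the Kontsevich--Soibelman local-homeomorphism property of $Stab(\mathfrak{g})$, and check that the pairs $(Z_{b'},a_{b'})$ produced by the WCS really land in $Stab(\mathfrak{g})$ --- that is, verify the support property uniformly in $b'\in U$, which is exactly where finiteness of the attractor trees and stability of the support property under the KSWCF enter. The special-geometry input, by contrast, is essentially the coordinate computation above.
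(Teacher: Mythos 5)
Your proposal is correct, and on the point that actually carries the proposition's content --- that the map is a \emph{local embedding} --- it goes further than the paper does. The paper's own proof shares your opening move: via Propositions 2.4--2.6 the WCS on $\mathcal{B}_{\theta}=\mathcal{B}\times S_{\theta}$ is unwound into a family of stability data, and restricting to $\{b\}\times S_{\theta}$ gives a point of $Stab(\mathfrak{g}_{b})$. But the paper then spends its remaining effort on the attractor-tree mechanism that produces the numbers $\Omega_{b}(\gamma)$ satisfying KSWCF, and never argues injectivity or immersivity of $b'\mapsto(Z_{b'},a_{b'})$; the ``local embedding'' clause is effectively delegated to the citation of Section 2.7 of \cite{kontsevich2008stability}. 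You supply exactly the two missing ingredients: (i) the Kontsevich--Soibelman theorem that the forgetful map $Stab(\mathfrak{g})\to Hom_{\mathbb{Z}}(\Gamma,\mathbb{C})$ is a local homeomorphism, which reduces the embedding claim to the period map $b'\mapsto Z_{b'}$; and (ii) the special-geometry fact that in a symplectic basis this map reads $(a^{i})\mapsto(a^{i},\partial_{i}\mathcal{F})$ with the $a^{i}$ serving as local holomorphic coordinates, hence is a holomorphic embedding (equivalently, $\sqrt{-1}\langle dZ,d\bar{Z}\rangle>0$ forces $dZ$ to be injective). Your closing caveat --- that one must verify the support property so the pairs $(Z_{b'},a_{b'})$ genuinely land in $Stab(\mathfrak{g})$ --- correctly identifies where the real work sits; the paper's attractor-tree discussion is its informal substitute for that verification. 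In short: same starting identification, but your route actually proves the embedding statement, whereas the paper's route emphasizes the computational mechanism and takes the embedding largely for granted.
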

\begin{proof}
As $b$ varies, we get a family of stability data on $\mathfrak{g}_{b}$ identified with the WCS on $\mathcal{B}_{\theta}:=\mathcal{B}\times S_{\theta}$ via (11). Given $(b,\gamma)$, consider \textbf{all} good attractor trees rooted at $b$. As the phase of $Z_{b}$ stays constant along the flow, $a$ or $\Omega:tot\,\underline{\Gamma}\to\underline{\mathbb{Q}}$ restricts to $\mathcal{B}^{0\prime}:=\{(b,\gamma):Y_{\theta}(b)(\gamma)=0\}$. Thus the flow lines sit in the wall $\mathcal{W}^{2}_{\gamma}$ (see definition 2.3). By the procedure above, we obtain a collection $\Omega_{b}(\gamma)$ that satisfy KSWCF. The attractor trees on $\mathcal{B}$ can be lifted to trees on $\mathcal{B}\times S_{\theta}$. In particularly, for $b\in\mathcal{B}^{0}$, we have a WCS on $\{b\}\times S_{\theta}\cong S_{\theta}$, which is the same as a stability data on $\mathfrak{g}_{b}$. 
\end{proof}

\section{WCS in Seiberg-Witten integrable systems}

 We review Seiberg-Witten integrable systems associated to $SU(n)$ (\cite{donagi1997seiberg}\cite{witten1996supersymmetric}) which originated from the study of supersymmetric Yang-Mills theories (\cite{lerche1998introduction} \cite{seiberg1994electric}). We will then apply the WCS formalism to the $n=2$ and $n=3$ cases.

\subsection{Seiberg-Witten integrable systems for SU(n)}

Let $\mathcal{W}_{A_{n-1}}(x,\mathbf{u})\equiv x^{n}-\sum_{k=0}^{n-2}u_{k+2}(\mathbf{u})\,x^{n-2-k}$ be the simple singularities associated to $SU(n)$. The SW curve is then $\mathcal{C}_{\mathbf{u}}: y^{2}=P_{n}(x)=\left(\mathcal{W}_{A_{n-1}}(x,\mathbf{u})\right)^{2}-\Lambda^{2n},\, x,y\in\mathbb{C}$, and $\Lambda$ is a real parameter. The discriminant locus $\Delta_{\Lambda}$ is the variety of the discriminant $\delta_{\lambda}$ of the polynomial $P_{n}(x)$. 

The \textbf{SW integrable system} $\pi:X\to\mathcal{B}$ has the base $\mathcal{B}=\mathbb{C}^{n-1}$, with the fiber over $b\in\mathcal{B}^{0}$ being the Jacobian $Jac(\mathcal{C}_{b})$. The charge lattices $\underline{\Gamma}$ are formed by $\Gamma_{b}:=H_{1}(\mathcal{C}_{b},\mathbb{Z})$, endowed with the constantly covariant intersection parings $\langle\cdot,\cdot\rangle: \Lambda^{2}\,\underline{\Gamma}\to\underline{\mathbb{Z}}_{\mathcal{B}^{0}}$. Let $\{\omega_{1},\cdots,\omega_{g}\}$ be a basis of $H^{0}(\mathcal{C}_{b},\Omega_{\mathcal{C}_{b}}^{1})$, and $\left\{\alpha^{i},\beta_{j}\right\}_{1\leq i,j\leq g}$ be a symplectic basis of $\underline{\Gamma}_{b}$ with the intersection matrix $\mathbf{J}=\left(\begin{array}{cc}
          \mathbf{0} & \mathbf{I}_{g} \\
         -\mathbf{I}_{g} & \mathbf{0} 
    \end{array}\right)$. The \textbf{period matrix} $\mathbf{\Omega}=(\mathbf{A}\,\,\mathbf{B})$, in which $\mathbf{A}$ and $\mathbf{B}$ are  matrix with entries given respectively by $\mathbf{A}_{ij}:=\oint_{\alpha^{i}}\omega_{j}\,\,\mathbf{B}_{ij}:=\oint_{\beta_{i}}\omega_{j}$, satisfies the following relations:
\begin{equation*}
  \text{Riemann $1^{st}$ bilinear relation}\,\,\,\mathbf{\Omega}\mathbf{J}^{t}\mathbf{\Omega}=0
  \end{equation*}
  \begin{equation}
  \text{Riemann $2^{nd}$ bilinear relation}\,\,\,\sqrt{-1}\,\mathbf{\Omega}\mathbf{J}^{t}\bar{\mathbf{\Omega}}>0
 \end{equation}

\begin{definition}
Define the $\tau$-matrix to be $\tau:=\mathbf{A}^{-1}\mathbf{B}$. Riemann's $2^{nd}$ bilinear relation implies that $\tau$ is symmetric and its imaginary part is positive definite, i.e., $Im\tau>0$, which can be used to define the K\"ahler metric (see (13)) on $\mathcal{B}^{0}$.
\end{definition} 
    
\begin{proposition}
The torus fibration $\pi:X\rightarrow\mathcal{B}$ defines a complex integrable system with central charge.
\end{proposition}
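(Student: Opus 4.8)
The plan is to exhibit the Seiberg--Witten differential, use it to realize the central charge, and then write down the holomorphic symplectic form on the total space, so that the axioms of a polarized complex integrable system with central charge get verified one by one.

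First I would recall the \emph{Seiberg--Witten differential} $\lambda_{SW}$ on the curve $\mathcal{C}_{\mathbf{u}}$; for pure $SU(n)$ one may take $\lambda_{SW}=\frac{1}{2\pi i}\,x\,d\log\!\frac{\mathcal{W}_{A_{n-1}}(x,\mathbf{u})-y}{\mathcal{W}_{A_{n-1}}(x,\mathbf{u})+y}$ (equivalently $x\,dw/w$ after $w+\Lambda^{2n}w^{-1}=2\mathcal{W}_{A_{n-1}}$). The two inputs I need are that $\lambda_{SW}$ has vanishing residues, so that $\oint_{\gamma}\lambda_{SW}$ is well defined for every $\gamma\in\Gamma_{b}=H_{1}(\mathcal{C}_{b},\mathbb{Z})$, and that the derivatives $\partial_{u_{k}}\lambda_{SW}$, $k=2,\dots,n$, represent, modulo exact forms, a basis of $H^{0}(\mathcal{C}_{b},\Omega^{1}_{\mathcal{C}_{b}})$. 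Setting $Z_{b}(\gamma):=\oint_{\gamma}\lambda_{SW}$ then defines a multivalued section of $\underline{\Gamma}\otimes\mathscr{O}_{\mathcal{B}^{0}}$ (multivalued because of the monodromy of $\underline{\Gamma}$ about $\Delta_{\Lambda}$), and differentiating under the integral sign gives $dZ_{b}(\gamma_{i})=\oint_{\gamma_{i}}d\lambda_{SW}=\alpha_{i}$, which is exactly the closed one-form of Definition 3.3. Hence the obstruction class $\delta\in H^{1}(\mathcal{B}^{0},\underline{\Gamma}^{\vee}\otimes\mathbb{C})$ vanishes and $Z$ is a central charge.

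Next I would build the holomorphic symplectic form on $X=Jac(\mathcal{C}/\mathcal{B}^{0})$. In a local symplectic frame $\{\alpha^{i},\beta_{j}\}$ of $\underline{\Gamma}$ with the associated special coordinates $a^{i}=\oint_{\alpha^{i}}\lambda_{SW}$, $a_{D,i}=\oint_{\beta_{i}}\lambda_{SW}$, I write the fiber as $\mathbb{C}^{g}/(\mathbb{Z}^{g}+\tau\mathbb{Z}^{g})$ with linear coordinates $\phi_{i}$ and put $\omega^{2,0}=\sum_{i}da^{i}\wedge d\phi_{i}$. The verifications: (i) the form does not depend on the symplectic frame, because two special-coordinate charts on $\mathcal{B}^{0}$ differ by an $Sp(2g,\mathbb{Z})$ transformation acting compatibly on $(a^{i},a_{D,i})$ and on $(\phi_{i})$, so $\omega^{2,0}$ glues to a globally defined form on $X$; (ii) it is holomorphic and closed, which reduces to $\tau_{ij}=\partial a_{D,i}/\partial a^{j}$ being symmetric, i.e.\ to Riemann's first bilinear relation --- equivalently Definition 4.1, equivalently the transversality condition $\langle dZ,dZ\rangle=0$ together with the existence of the prepotential $\mathcal{F}$ with $a_{D,i}=\partial\mathcal{F}/\partial a^{i}$; (iii) it is nondegenerate, which is manifest from the shape $\sum da^{i}\wedge d\phi_{i}$. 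So $(X,\omega^{2,0})$ is a holomorphic symplectic manifold of complex dimension $2(n-1)$ and $\pi$ is a holomorphic surjection onto the $(n-1)$-dimensional base $\mathcal{B}=\mathbb{C}^{n-1}$. Moreover Riemann's second bilinear relation --- equivalently $Im\,\tau>0$ of Definition 4.1, equivalently the positivity $\sqrt{-1}\langle dZ,d\bar Z\rangle>0$ underlying the K\"ahler metric (13) --- makes each $Jac(\mathcal{C}_{b})$ a genuine compact complex torus; restricting $\omega^{2,0}$ to a fiber, where the $a^{i}$ are constant, gives zero, and since $\dim_{\mathbb{C}}Jac(\mathcal{C}_{b})=g=n-1=\tfrac{1}{2}\dim_{\mathbb{C}}X$, the fibers are holomorphic Lagrangian tori; and the covariantly constant intersection pairing on $\underline{\Gamma}=H_{1}(\mathcal{C}/\mathcal{B}^{0},\mathbb{Z})$ supplies the covariantly constant integer polarization. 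Assembling (i)--(iii) with these facts yields Proposition 4.3.

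The step I expect to be the main obstacle is (i)--(ii): producing $\omega^{2,0}$ as an \emph{intrinsic, global} holomorphic symplectic form on $X$ rather than a chart-wise expression, and proving it is closed --- precisely where the symmetry of $\tau$, i.e.\ the existence of the prepotential, enters, and it rests on the Riemann bilinear relations already recorded in Definition 4.1. Everything else --- surjectivity and holomorphy of $\pi$, compactness of the fibers, the Lagrangian property, and the covariant constancy of the polarization --- is either formal or a direct restatement of those relations. One should also check carefully that $\lambda_{SW}$ genuinely has vanishing residues (otherwise $Z_{b}(\gamma)$ is not well defined on $H_{1}$) and that $\mathcal{B}^{0}$ is exactly $\mathbb{C}^{n-1}$ minus the discriminant locus $\Delta_{\Lambda}$, so the construction is valid over all of $\mathcal{B}^{0}$.
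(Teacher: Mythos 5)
Your proposal is correct and follows essentially the same route as the paper: write the holomorphic symplectic form in the special/angle coordinates as $\omega^{2,0}=\sum_i da^i\wedge dw_i$, observe that it vanishes on the Jacobian fibers (so they are Lagrangian tori), and obtain the central charge from the periods via $dZ(\gamma)=\oint_\gamma\omega^{2,0}$. The extra verifications you flag as the main obstacles --- the residue-freeness of $\lambda_{SW}$, the identification of $\partial_{u_k}\lambda_{SW}$ with a basis of holomorphic differentials, and the symmetry and positivity of $\tau$ coming from the Riemann bilinear relations --- are exactly the content the paper distributes over Proposition 3.2, Definition 4.1, Proposition 4.4 and Lemmas 4.3--4.4, so you are simply front-loading material the paper defers rather than taking a different approach.
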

 
\begin{proof}
The smooth fiber, being the Jacobian torus, is endowed with the angle coordinates $\left\{\theta^{i},\theta_{j}\right\}_{1\leq i,j\leq g}$ such that $\oint_{\alpha^{i}}d\theta_{j}=\delta_{j}^{i}$\, and
    $\oint_{\beta_{i}}d\theta^{j}=\delta_{j}^{i}$. Let $w_{i}:=\theta_{i}+\sum_{j}\tau_{ij}\,\theta^{j}$ be the complex coordinates along the fibers. Together with the coordinates $a^{i}$ on $\mathcal{B}$, the holomorphic symplectic form can be written as: $ \omega^{2,0}=\sum_{i}da^{i}\wedge dw_{i}$, which vanishes when restricted to generic fibers. By definition 3.2, the central charge $Z$ is then given through $dZ(\gamma)=\oint_{\gamma}\,\omega^{2,0}$.
\end{proof}

Recall that we have defined (section 3.1) the symmetric matrix $\tau=(\tau_{ij})=\left(\frac{\partial a_{D,i}}{\partial a^{j}}\right)=\left(\frac{\partial^{2}\mathcal{F}}{\partial a^{i}\partial a^{j}}\right)$, then we have

\begin{proposition}
The matrix above can be identified with that defined in definition 4.1 via the period matrix $\mathbf{\Omega}$.
\end{proposition}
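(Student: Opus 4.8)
The plan is to produce an explicit primitive of the holomorphic symplectic form, the Seiberg--Witten differential, and to exploit its defining feature: its derivatives along the base are holomorphic one-forms on the fibre. Recall from Proposition 4.1 and its proof that on the total space $\omega^{2,0}=\sum_i da^i\wedge dw_i$ vanishes on the fibres and $dZ(\gamma)=\oint_\gamma\omega^{2,0}$; equivalently there is a Seiberg--Witten one-form $\lambda$ on the (universal) curve with $Z_b(\gamma)=\oint_\gamma\lambda$, so that, up to the usual sign and basis conventions, the special coordinates are the periods $a^i=\oint_{\alpha^i}\lambda$ and $a_{D,i}=\oint_{\beta_i}\lambda$ over the symplectic basis $\{\alpha^i,\beta_j\}$ of $\underline\Gamma_b=H_1(\mathcal C_b,\mathbb Z)$. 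The goal is then to show that the matrix $\tau_{ij}=\partial a_{D,i}/\partial a^j=\partial^2\mathcal F/\partial a^i\partial a^j$ recalled from Section 3.1 equals the period matrix $\tau=\mathbf A^{-1}\mathbf B$ of Definition 4.1.

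I would differentiate under the integral sign: since a period depends only on the de Rham class, $\oint_{\alpha^i}\partial_{a^j}\lambda=\partial_{a^j}\oint_{\alpha^i}\lambda=\partial_{a^j}a^i=\delta^i_j$ and $\oint_{\beta_i}\partial_{a^j}\lambda=\partial_{a^j}a_{D,i}=\tau_{ij}$. The essential input is the Seiberg--Witten property: $\partial_{a^j}\lambda$ is, modulo exact forms, a holomorphic one-form on $\mathcal C_b$. Concretely, for the hyperelliptic curve $y^2=\mathcal W_{A_{n-1}}(x,\mathbf u)^2-\Lambda^{2n}$ of genus $n-1$, differentiating the explicit Seiberg--Witten form with respect to a modulus $u_{k}$ produces $x^{\,n-2-k}\,dx/y$ up to an exact form, and these span $H^0(\mathcal C_b,\Omega^1)$; the chain rule converts $\partial_{u_k}$ into $\partial_{a^j}$. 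Hence $\tilde\omega_j:=\partial_{a^j}\lambda$ is precisely the basis of holomorphic differentials normalised against the $\alpha$-cycles, $\oint_{\alpha^i}\tilde\omega_j=\delta^i_j$, and $\tau_{ij}=\oint_{\beta_i}\tilde\omega_j$ is its matrix of $\beta$-periods, i.e. the period matrix in normalised form.

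To finish, I would relate $\{\tilde\omega_j\}$ to the reference basis $\{\omega_j\}$ of Definition 4.1: writing $\tilde\omega_j=\sum_l M_{lj}\,\omega_l$, the normalisation $\oint_{\alpha^i}\tilde\omega_j=\delta^i_j$ forces $\mathbf A\,\mathbf M=\mathbf I$, so $\mathbf M=\mathbf A^{-1}$, and therefore $\tau_{ij}=\oint_{\beta_i}\tilde\omega_j=(\mathbf B\,\mathbf A^{-1})_{ij}$, which agrees with $\mathbf A^{-1}\mathbf B$ of Definition 4.1 once one accounts for the transpose/index--ordering conventions in the definitions of $\mathbf A$ and $\mathbf B$; the symmetry $\tau=\tau^{\top}$ and $\mathrm{Im}\,\tau>0$ then follow from the first and second Riemann bilinear relations, recovering conditions $a')$ and $b')$. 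The main obstacle is the middle step: establishing that $\partial_{\mathbf u}\lambda$ is holomorphic modulo exact forms requires writing down the Seiberg--Witten differential explicitly (for instance $\lambda=x\,d\log\!\big((\mathcal W_{A_{n-1}}-y)/(\mathcal W_{A_{n-1}}+y)\big)$ up to normalisation) and carrying out the hyperelliptic residue computation, together with the harmless but fiddly bookkeeping that exact terms drop out of all periods and that the change-of-basis matrix is the \emph{same} $\mathbf A^{-1}$ on both the $a$-side and the $a_D$-side.
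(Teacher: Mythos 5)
Your proposal is correct and follows essentially the same route as the paper: both arguments hinge on expressing $a^i$ and $a_{D,i}$ as periods of the Seiberg--Witten differential $\lambda_{SW}$ and on the key property that $\partial_{\mathbf u}\lambda_{SW}$ is a holomorphic one-form modulo exact forms (the paper's $\partial\lambda_{SW}/\partial u_k = f_k(\mathbf u)\,\omega_k$, established in Lemma 4.3), after which the $\tau$-matrix from the prepotential is identified with the normalised period matrix $\mathbf A^{-1}\mathbf B$. Your version is in fact slightly more careful about the matrix algebra (the change-of-basis matrix $\mathbf M=\mathbf A^{-1}$ appearing on both the $\alpha$- and $\beta$-period sides), where the paper's chain-rule expression $\sum_k \oint_{\beta_i}\partial_{u_k}\lambda_{SW}\big/\oint_{\alpha^j}\partial_{u_k}\lambda_{SW}$ is written only informally.
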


\begin{proof}
Using $\alpha_{\text{\,can}}=\sum_{i}a_{D,i}\wedge d\theta^{i}+a^{i}\wedge d\theta_{i}$ so that $d\alpha_{\text{\,can}}=\omega^{2,0}$, we write  $a^{i}=\oint_{\alpha^{i}}\, \alpha_{\text{\,can}},\,\,
      a_{D,j}=\oint_{\beta_{j}}\,\alpha_{\text{\,can}}$. We can construct a meromorphic form $\lambda_{SW}$ (\textbf{Seiberg-Witten differential}) with vanishing residues (lemma 4.3 below) in the class of $[\alpha_{\text{\,can}}]$ so that $a^{i}=\oint_{\alpha^{i}}\, \lambda_{SW},\,\,a_{D,j}=\oint_{\beta_{j}}\,\lambda_{SW}$. Then we have \[\tau_{ij}=\frac{da_{D,i}}{da^{j}}=\sum_{k}\frac{\partial a_{D,i}}{\partial u_{k}}\Bigm/\frac{\partial a^{j}}{\partial u_{k}}=\sum_{k}\oint_{\beta_{i}}\frac{\partial\lambda_{SW}}{\partial u_{k}}\Bigm/\oint_{\alpha^{i}}\frac{\partial\lambda_{SW}}{\partial u_{k}}\] 
      
      Thus, for $\tau$ matrix to match that defined via $\mathbf{\Omega}$, we need $\frac{\partial\lambda_{SW}}{\partial u_{k}}=f_{k}(\mathbf{u})\,\omega_{k}$ for some holomorphic functions $f_{k}(\mathbf{u})$, which means that $\left\{\frac{\partial\lambda_{SW}}{\partial u_{k}}\right\}_{1\leq k\leq g}$ form a basis of $H^{0}\left(\mathcal{C}_{\mathbf{u}},\Omega_{\mathcal{C}_{u}}^{1}\right)$ up to a scalar multiplication induced by $f_{k}(\mathbf{\mathbf{u}})$. 
      \end{proof}
      We see that the central charge can be given as $Z(\gamma)=\oint_{\gamma}\,\lambda_{SW}$. Choose the basis of $H^{0}\left(\mathcal{C}_{\mathbf{u}},\Omega_{\mathcal{C}_{u}}^{0}\right)$ to be the Abelian differentials of the first kind $\omega_{k}=\frac{x^{n-k}dx}{y},\,\,\, k=2,\cdots,n$. Following \cite{klemm1994monodromies}\cite{lerche1998introduction}, we have that:

\begin{lemma}
The Seiberg-Witten differential can be chosen to be 
\begin{equation}
    \lambda_{SW}=constant\cdot \left(\frac{\partial}{\partial x}\mathcal{W}_{A_{n-1}}(x,u_{i})\right)\frac{x\,dx}{y}
\end{equation}
up to an addition of exact forms.
\end{lemma}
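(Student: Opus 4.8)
The plan is to verify that the one-form displayed in the statement has the two properties that were invoked in the proof of Proposition 4.2: (i) its derivative $\partial\lambda_{SW}/\partial u_k$ equals a holomorphic constant times the Abelian differential $\omega_k = x^{n-k}\,dx/y$ modulo exact forms, so that by the discussion there $\lambda_{SW}$ is a legitimate representative whose periods reproduce the special coordinates $a^i,a_{D,j}$; and (ii) $\lambda_{SW}$ has vanishing residues on $\mathcal{C}_{\mathbf{u}}$, so that these periods depend only on the homology class and $Z$ is well-defined. Throughout write $W=W(x,\mathbf{u}):=\mathcal{W}_{A_{n-1}}(x,\mathbf{u})$, so that the curve is $y^2=W^2-\Lambda^{2n}$, set $W_x:=\partial_x W$ and $y_x:=\partial_x y$, and normalize $\lambda_{SW}=c\,x\,W_x\,dx/y$ with $c$ a constant.

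For (i) I would differentiate in $u_k$ at fixed $x$, using the three elementary identities $\partial_{u_k}W=-x^{n-k}$ (immediate from the definition of $\mathcal{W}_{A_{n-1}}$, which is why $k$ runs from $2$ to $n$), $\partial_{u_k}y = W\,\partial_{u_k}W/y=-W x^{n-k}/y$ (from $y^2=W^2-\Lambda^{2n}$), and $W\,W_x=\tfrac12\partial_x(W^2)=\tfrac12\partial_x(y^2)=y\,y_x$. A short computation then gives
\[
\partial_{u_k}\lambda_{SW}=c\left(-(n-k)\frac{x^{n-k}}{y}+\frac{x^{n-k+1}\,y_x}{y^2}\right)dx = c\,\omega_k-c\,d\!\left(\frac{x^{n-k+1}}{y}\right),
\]
the last equality being checked by expanding $d(x^{n-k+1}/y)$ along the curve. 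Since $\{\omega_k\}_{k=2}^{n}$ was chosen as the basis of $H^0(\mathcal{C}_{\mathbf{u}},\Omega^1_{\mathcal{C}_{\mathbf{u}}})$, this is exactly the requirement $\partial\lambda_{SW}/\partial u_k=f_k\,\omega_k$ with $f_k=c$ constant used in the proof of Proposition 4.2, and the exact term does not affect periods — this is the ``up to an addition of exact forms'' in the statement.

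For (ii) I would first put $\lambda_{SW}$ in logarithmic form: from $(W+y)(W-y)=\Lambda^{2n}$ and $dy=W\,W_x\,dx/y$ one gets $W_x\,dx/y=d\log(W+y)$, hence $\lambda_{SW}=c\,x\,d\log(W+y)$. The only possible poles are at the two points $P_\pm$ over $x=\infty$ (at the branch points $dx/y$ is regular, and $W\pm y$ has no finite zero because $W^2-y^2=\Lambda^{2n}\neq0$). Near $P_+$ take the local coordinate $\xi=1/x$; then $W=\xi^{-n}\bigl(1-u_2\xi^2-u_3\xi^3-\cdots-u_n\xi^n\bigr)$, and — this is the key point — there is no $\xi^{1}$ term, precisely because $\mathcal{W}_{A_{n-1}}$ contains no $x^{n-1}$ term (the $SU(n)$ tracelessness). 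Since $W+y=2W+O(\xi^{n})$ near $P_+$, we obtain $\log(W+y)=\mathrm{const}-n\log\xi+O(\xi^2)$, so $d\log(W+y)=-n\,d\xi/\xi+O(\xi)\,d\xi$ and $\lambda_{SW}=c\,\xi^{-1}\bigl(-n\,d\xi/\xi+O(\xi)\,d\xi\bigr)$ has no $d\xi/\xi$ term: $\operatorname{Res}_{P_+}\lambda_{SW}=0$. The same holds at $P_-$, either by the identical expansion with $W+y=\Lambda^{2n}/(W-y)=O(\xi^n)$, or because the residues of a meromorphic form on the compact curve sum to zero and $P_\pm$ are the only poles.

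The main obstacle — indeed the only genuinely non-routine step — is (ii), and specifically the observation that the residue vanishes because of the absence of the $x^{n-1}$ term in $\mathcal{W}_{A_{n-1}}$: had one included a term $-u_1 x^{n-1}$, the expansion would acquire a contribution $-u_1\xi$ to $\log(W+y)$ and $\lambda_{SW}$ would pick up a residue proportional to $u_1$ at infinity. Everything else is the bookkeeping above, together with the remark that adding an exact form changes neither the residues nor the periods, so that ``up to an addition of exact forms'' is the correct latitude in the statement.
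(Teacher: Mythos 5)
Your proof is correct, but it does considerably more than the paper's own argument, and the comparison is worth recording. The paper's proof of this lemma is a one-line algebraic expansion: it sets $f_k(\mathbf{u}):=-(n-k)$, writes $\lambda_{SW}=-(n-2)u_2\,\omega_2-\cdots-u_{n-1}\,\omega_{n-1}+n x^n\,\frac{dx}{y}$, and observes that collecting terms reproduces $\left(\partial_x\mathcal{W}_{A_{n-1}}\right)\frac{x\,dx}{y}$; it never differentiates in $u_k$ and never exhibits an exact form. Your step (i) is the rigorous version of what the proof of Proposition 4.2 actually requires: since $\omega_k=x^{n-k}dx/y$ itself depends on $\mathbf{u}$ through $y$, the desired relation cannot be obtained by formally integrating $\partial_{u_k}\lambda_{SW}=f_k\,\omega_k$ in $u_k$, and your computation $\partial_{u_k}\lambda_{SW}=c\,\omega_k-c\,d\!\left(x^{n-k+1}/y\right)$ is precisely the content of the phrase ``up to an addition of exact forms,'' which the paper leaves implicit. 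Your step (ii) is really the paper's separate Lemma 4.4, whose stated proof (``$\lambda_{SW}$ is a linear combination of $\{\omega_i\}$, which are residues free'') is inadequate as written: the term $n x^n\,dx/y$ is not among the holomorphic differentials $\omega_2,\dots,\omega_n$ and has double poles at the two points over $x=\infty$, so a genuine residue computation is needed. Your argument via $\lambda_{SW}=c\,x\,d\log(W+y)$, and in particular the observation that the vanishing of the residue at infinity hinges on the absence of the $x^{n-1}$ term in $\mathcal{W}_{A_{n-1}}$, supplies exactly what the paper omits. In short: your route proves the same statement, but replaces the paper's verification-by-expansion and unsupported residue claim with a complete argument.
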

 \begin{proof}
 Choosing $f_{k}(\mathbf{u}):=-(n-k)$, then $\lambda_{SW}=-(n-2)u_{2}\,\omega_{2}-(n-3)u_{3}\,\omega_{3}-\cdots-u_{n-1}\,\omega_{n-1}+nx^{n}$\[=\left(-(n-2)u_{2}\,x^{n-3}-(n-3)u_{3}\,x^{n-4}-\cdots-u_{n-1}+nx^{n-1}\right)\frac{x\,dx}{y}\]
 \end{proof}
 \begin{lemma}
 The Seiberg-Witten differential $\lambda_{SW}$ constructed above is residues free. Thus, when pairing with cycles, the value is invariant under continuous deformation of the cycle when crossing the poles of $\lambda_{SW}$.
 \end{lemma}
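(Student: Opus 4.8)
Proof proposal for Lemma 4.3 (residue-freeness of $\lambda_{SW}$).

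The plan is to compute the residues of $\lambda_{SW}$ at all its poles on the hyperelliptic curve $\mathcal{C}_{\mathbf{u}}: y^2 = P_n(x) = \mathcal{W}_{A_{n-1}}(x,\mathbf{u})^2 - \Lambda^{2n}$ and show they vanish. First I would locate the poles: since $\lambda_{SW} = \text{const}\cdot \left(\partial_x \mathcal{W}_{A_{n-1}}(x,\mathbf{u})\right)\frac{x\,dx}{y}$, and $\partial_x\mathcal{W}$ is a polynomial in $x$, the only candidate poles on the smooth affine part are where $y = 0$ — but those are branch points, and $\frac{dx}{y}$ extends holomorphically across branch points (in the local parameter $t$ with $x - x_0 = t^2$, we have $dx/y \sim dt$), so there is no pole there. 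Hence the only poles live over $x = \infty$. So the real content is the analysis at the two points (or one point, depending on parity of $n$ and leading behavior) lying above $x = \infty$.

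Next I would introduce a local coordinate at infinity, $x = 1/s$, and expand. Since $\mathcal{W}_{A_{n-1}}(x,\mathbf{u}) = x^n - \sum u_{k+2} x^{n-2-k}$ is monic of degree $n$, we get $P_n(x) = x^{2n}(1 + O(x^{-2}))$, so $y = \pm x^n\sqrt{1 + O(x^{-2})}$, and above $x=\infty$ there are two distinct points $\infty_{\pm}$ (the curve is unramified at infinity for $n\ge 2$). On each sheet $y = \pm x^n(1 + c_1 x^{-2} + \cdots)$, and $\partial_x\mathcal{W}_{A_{n-1}} = n x^{n-1} + (\text{lower order})$. Therefore
\[
\lambda_{SW} = \text{const}\cdot \frac{(\partial_x\mathcal{W})\,x}{y}\,dx = \text{const}\cdot \frac{n x^n + \cdots}{\pm x^n(1 + \cdots)}\,dx = \pm\,\text{const}\cdot n\left(1 + a_1 x^{-1} + a_2 x^{-2} + \cdots\right)dx.
\]
The key point is then to extract the coefficient of $dx/x$ in this expansion (equivalently, after $x = 1/s$, the coefficient of $ds/s$, which is the residue at $\infty_\pm$). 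Because $P_n(x) = \mathcal{W}^2 - \Lambda^{2n}$ is an even-ish perturbation — more precisely $\sqrt{P_n} = \mathcal{W}\sqrt{1 - \Lambda^{2n}/\mathcal{W}^2} = \mathcal{W} - \frac{\Lambda^{2n}}{2\mathcal{W}} + O(\mathcal{W}^{-3})$ — one finds $\frac{(\partial_x\mathcal{W})x}{\sqrt{P_n}} = \frac{(\partial_x\mathcal{W})x}{\mathcal{W}}\left(1 + O(x^{-2n})\right)$, and $\frac{(\partial_x\mathcal{W})x}{\mathcal{W}} = x\,\partial_x\log\mathcal{W} = x\,\frac{d}{dx}\left(n\log x + \log(1 + O(x^{-2}))\right) = n + (\text{even powers of }x^{-1})$. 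The crucial observation is that $\partial_x\log\mathcal{W}$ expands in a series whose $x^{-1}$ coefficient is exactly $n$ times something — no, more carefully: $x\,\partial_x\log\mathcal{W}$ has no $x^{-1}$ term because $\log\mathcal{W} = n\log x + \sum_{j\ge 1} b_j x^{-2j}$ (only even negative powers, since $u_{k}$ multiply $x^{n-2-k}$ so $\mathcal{W}/x^n = 1 - \sum u_{k+2}x^{-2-k}$ — need to check this gives only the right parity), hence $x\partial_x\log\mathcal{W} = n - \sum 2j\, b_j x^{-2j}$, which has vanishing $x^{-1}$-coefficient. Combined with the $O(x^{-2n})$ correction contributing no $x^{-1}$ term either, the residue at each $\infty_\pm$ is zero.

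The main obstacle I anticipate is the bookkeeping of which powers of $x^{-1}$ actually appear in $\mathcal{W}/x^n$: the summand $u_{k+2}x^{n-2-k}$ for $k = 0,\dots,n-2$ gives $x^{-2}, x^{-3}, \dots, x^{-n}$ after dividing by $x^n$, so in fact \emph{both} parities occur, and the clean "only even powers" argument is too naive. The honest fix is that we do not need all powers to be absent — we only need the coefficient of $x^{-1}$ in $x\,\partial_x\log\mathcal{W}$ to vanish, which it does automatically because $x\partial_x(x^{-m}) = -m x^{-m}$ for $m\ge 2$ never produces an $x^{-1}$, and $\log\mathcal{W}$ has no $x^{-1}$ term to begin with (the most singular correction is $x^{-2}$). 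So the residue computation reduces to checking that $\mathcal{W}/x^n - 1$ is $O(x^{-2})$, which is immediate from the definition of $\mathcal{W}_{A_{n-1}}$ (the coefficient of $x^{n-1}$ vanishes — this is the $SU(n)$ tracelessness condition). Finally, I would add the remark that since the only poles are at $\infty_\pm$ and the total sum of residues on a compact curve is zero, it would in fact suffice to check one of them; and the stated consequence — invariance of $\oint_\gamma\lambda_{SW}$ under deformations of $\gamma$ across poles — follows immediately from the residue theorem applied to the difference of two homologous-up-to-small-loops cycles.
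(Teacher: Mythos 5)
Your proposal is correct, but it takes a genuinely different route from the paper. The paper's proof is a one-line appeal to the decomposition obtained in the preceding lemma: $\lambda_{SW}$ is written as a combination of the Abelian differentials of the first kind $\omega_{k}=x^{n-k}\,dx/y$, $k=2,\dots,n$, which are holomorphic and hence trivially residue-free. You instead carry out the local analysis directly: you rule out poles at the branch points (where $dx/y$ extends holomorphically), reduce everything to the two points $\infty_{\pm}$ over $x=\infty$, and show that the coefficient of $x^{-1}$ in the Laurent expansion of $(\partial_{x}\mathcal{W})\,x/y$ vanishes because $\mathcal{W}/x^{n}=1+O(x^{-2})$ (the $SU(n)$ tracelessness of the $x^{n-1}$ coefficient) and because the correction $\sqrt{1-\Lambda^{2n}/\mathcal{W}^{2}}=1+O(x^{-2n})$ cannot contribute to that coefficient. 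Your version is longer but it is also more complete: the decomposition in the paper actually leaves over the top term $n\,x^{n}\,dx/y$, which is \emph{not} one of the holomorphic $\omega_{k}$ (its numerator has degree $n>g-1=n-2$) and is only a differential of the second kind, so the paper's one-liner as literally stated does not cover it. Your expansion at $\infty_{\pm}$ handles that term together with the rest and is therefore the argument one would need to make the paper's proof airtight; your closing remarks (that the global sum of residues already forces one of the two residues once the other is known, and that the deformation-invariance of $\oint_{\gamma}\lambda_{SW}$ is the residue theorem) are correct and consistent with the paper's intended use of the lemma.
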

 
 \begin{proof}
 By the above lemma, $\lambda_{SW}$ is a linear combination of $\{\omega_{i}\}$, which are residues free.
 \end{proof}

\subsection{WCS in SU(2) SW integrable system}

Here $\mathcal{B}\cong\mathbb{C}$ (with coordinate $u$), $\Delta_{\Lambda}=\{\pm\Lambda^{2}\}$. The curve over $u\in\mathcal{B}^{0}$ is $\mathcal{C}_{u}: y^{2}=(x^{2}-u)^{2}-\Lambda^{4}$. The local system splits as $\underline{\Gamma}_{u}=\mathbb{Z}\alpha\oplus\mathbb{Z}\beta$, where $\{\alpha, \beta\}$ are standard basis of $\mathcal{C}_{u}$. So charge vector $\gamma$ can be written as $\gamma=q\,\alpha+g\,\beta$, where $g$ and $q$ are integers, called the ``electric " and ``magnetic" coordinates of $\gamma$. The vanishing cycles $\gamma_{\pm\Lambda^{2}}$ at $\pm\Lambda^{2}$ are given by (\cite{klemm1996nonperturbative}):$ \gamma_{-\Lambda^{2}}=\beta-2\alpha=(-2,1)$ and $\gamma_{+\Lambda^{2}}=\beta=(0,1)$ respectively. 

Let $a(u):=Z_{u}(\alpha), \, a_{D}(u):=Z_{u}(\beta)$, then $Z_{u}(\gamma)=q\,a(u)+g\,a_{D}(u)=q\oint_{\alpha}\lambda_{SW}+g\oint_{\beta}\lambda_{SW}$, where $\lambda_{SW}=\frac{2x^{2}\,dx}{y}$ (see (15)). The only wall of first kind (\cite{bilal1996curves}) in this case is given by \[\mathcal{W}^{1}=\left\{u\in\mathcal{B}: Im\left(\frac{a_{D}(u)}{a(u)}\right)=0\right\}.\]
It was shown (\cite{bilal1996curves}, or see \cite{barrett1998problem} for a rigorous mathematical treatment) that $\mathcal{W}^{1}$ (figure 1) is a closed curve passing through $\pm\Lambda^{2}$ that divides $\mathcal{B}$ into the \textit{strong coupling region} $\mathcal{B}_{strong}$, and the \textit{weak coupling region} $\mathcal{B}_{weak}$.
\begin{figure}[h]
    \centering
    \includegraphics[height=2cm, width=3.8cm]{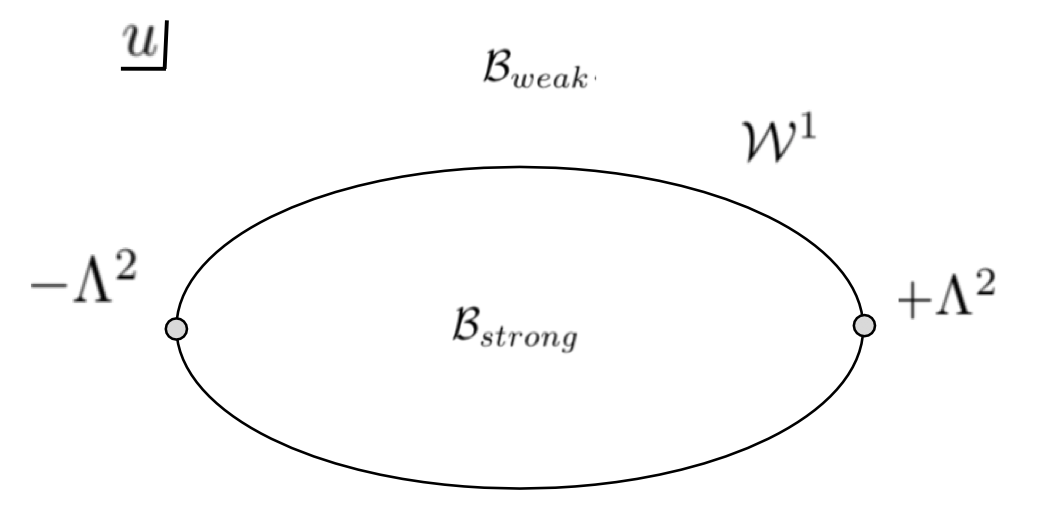}
    \caption{Curve of marginal stability in $SU(2)$ case}
\end{figure}

\textbf{Problem}: The BPS states for $SU(2)$ case are given in physics literature (c.f.,\cite{fraser1997weak}\cite{bilal1996curves}). We want to encode the charges of these states and the corresponding DT-invariants by using the WCS formalism. 
\begin{proposition}
    The attractor points in $SU(2)$ case coincide with the discriminant locus $\Delta_{\Lambda}=\{\pm\Lambda^{2}\}$.
\end{proposition}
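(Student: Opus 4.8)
The plan is to combine Proposition 3.5 (attractor flows associated to a vanishing cycle terminate at the corresponding singularity, while flows associated to other charges can avoid it) with the explicit structure of the $SU(2)$ discriminant locus $\Delta_\Lambda = \{\pm\Lambda^2\}$. First I would argue the ``$\supseteq$'' inclusion: the two points $\pm\Lambda^2$ are attractor points. Near $u = +\Lambda^2$ the vanishing cycle is $\gamma_{+\Lambda^2} = \beta = (0,1)$, so by the $A_1$-singularity assumption the central charge has the local form $Z_u(\gamma_{+\Lambda^2}) \sim (u - \Lambda^2)$ in a suitable coordinate, and $\mathrm{Re}(e^{-i\theta} Z_u(\gamma_{+\Lambda^2}))$ attains its minimum (namely $0$) exactly at $u = \Lambda^2$; hence the attractor flow of $\gamma_{+\Lambda^2}$ terminates there, exhibiting $+\Lambda^2$ as an attractor point. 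The same argument with $\gamma_{-\Lambda^2} = \beta - 2\alpha = (-2,1)$ handles $u = -\Lambda^2$.

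Next I would argue the ``$\subseteq$'' inclusion, which is the substantive direction: there are no attractor points in $\mathcal{B}^0 = \mathbb{C} \setminus \{\pm\Lambda^2\}$. An attractor point is a local minimum of the mass function $|Z_u(\gamma)|$, equivalently (by Remark 3.3) of $\mathrm{Re}(e^{-i\theta} Z_u(\gamma))$ along the flow. Since $Z_u(\gamma) = q\, a(u) + g\, a_D(u)$ is holomorphic in $u$ on $\mathcal{B}^0$, the function $|Z_u(\gamma)|^2$ is subharmonic, and so (by the minimum principle for the modulus of a holomorphic function) it cannot have an interior local minimum on $\mathcal{B}^0$ unless $Z_u(\gamma)$ vanishes at that point. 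So any would-be attractor point $u_* \in \mathcal{B}^0$ must satisfy $Z_{u_*}(\gamma) = 0$ for the relevant charge $\gamma$. Then I would invoke the known analytic description of $a(u)$ and $a_D(u)$ on $\mathcal{B}^0$ (the $SU(2)$ Seiberg-Witten periods, e.g. via hypergeometric functions / the relation to $\mathcal{W}^1$ recalled just above the proposition): the only common zeros, and indeed the only zeros of any primitive $Z_u(\gamma)$, occur at $u = \pm\Lambda^2$, because $a$ and $a_D$ have no simultaneous zeros in $\mathcal{B}^0$ and $a_D/a$ is non-real off $\mathcal{W}^1$ while on $\mathcal{W}^1$ the only places where an actual linear combination $q a + g a_D$ vanishes are the endpoints $\pm\Lambda^2$. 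Hence no attractor point lies in $\mathcal{B}^0$, and combined with the first part this forces the set of attractor points to equal $\Delta_\Lambda$.

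The main obstacle is the ``$\subseteq$'' direction, specifically ruling out that some non-primitive or dyonic charge $\gamma = (q,g)$ could produce a local minimum of $|Z_u(\gamma)|$ at an interior point of $\mathcal{B}^0$ where $Z_u(\gamma) \neq 0$. The subharmonicity/minimum-principle argument cleanly disposes of interior minima with $Z_u(\gamma)\neq 0$, so the residual work is purely the zero-locus analysis of $Z_u(\gamma)$ on $\mathcal{B}^0$, which reduces to the classical fact that the $SU(2)$ periods $a$, $a_D$ vanish only at the monopole and dyon points $\pm\Lambda^2$; I would cite \cite{bilal1996curves}\cite{klemm1996nonperturbative}\cite{barrett1998problem} for this rather than rederive it. I expect the write-up to be short: one paragraph for each inclusion, with the holomorphicity/minimum-principle observation doing most of the conceptual work and Proposition 3.5 supplying the behavior at the singularities.
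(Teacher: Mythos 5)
Your argument is essentially the paper's: the proof given there likewise reduces the statement to the observation that the mass $m_u(\gamma)=|Z_u(\gamma)|$ decreases along the flow and can only vanish at points supporting vanishing cycles, namely $\pm\Lambda^2$. Your minimum-modulus-principle step together with the appeal to the classical zero-locus analysis of $a$, $a_D$ plays exactly the role of the paper's Remark 4.2 (excluding termination at a ``regular zero'' of $m_u(\gamma)$), so the route is the same, just spelled out in slightly more detail.
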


\begin{proof}
As the ``mass" $m_{u}(\gamma):=|Z_{u}(\gamma)|$ is decreasing along the flow line $\mathcal{L}_{\gamma}$, the flow lines terminate at points where $m_{u}(\gamma)$ vanishes. But we know that only $\pm\Lambda^{2}$ support the vanishing cycles $\gamma_{\pm\Lambda^{2}}$.
\end{proof}

\begin{remark}
It is possible that the attractor flow lines terminate at a ``regular zero" $u_{m}$ of $m_{u}(\gamma)$, however, this situation is excluded by the following reason. Suppose that $Z_{u_{m}}(\gamma)$ vanishes, since $\lambda_{SW}$ does not blow up at $u_{m}$, $\gamma$ must be the vanishing cycles over $u_{m}$. But we have shown that the only possible vanishing cycles are situated at $\pm\Lambda^{2}$.
\end{remark}

\textbf{Algorithm for computing DT invariants}: The WCS is induced by $Y:\mathcal{B}\times\ S_{\theta}\to\Gamma_{\mathbb{R}}^{*}\,\,(u,\,\theta)\longmapsto Im(e^{-i\theta}Z_{u}(\gamma))$. Intuitively, we ``scan over" all possible charge vectors $\gamma$ over all points in all directions to see which charge vectors can be assumed by the BPS states. By proposition 3.5, this amounts to the (local) embedding $\mathcal{B}\hookrightarrow Stab(\mathfrak{g})$. We have the following \textbf{algorithm} for enumerate BPS states and the corresponding DT invariants:

\begin{itemize}
    \item Initial data: at the attractor points $\pm\Lambda^{2}$, we have BPS states with charges $(0,1)$ and $(2,-1)$. We assign the corresponding DT invariants to be $\Omega_{\pm\Lambda^{2}}(\gamma_{\gamma_{\pm\Lambda^{2}}})=1$. 
    \item Constructing attractor trees: Given $(u,\gamma)\in\underline{\Gamma}_{u}$, we consider all attractor trees on $\mathcal{B}$ with root at $u$, and the external vertexes ending at the attractor points. All edges of the trees are given by the attractor flows associated with $(u,\gamma)$, and each internal vertex should have valency at least three and lies in the wall of the first kind $\mathcal{W}^{1}$. Moreover, the ``\textbf{balance condition}" at each internal vertex $v$ should be satisfied, that is: If the incoming edge at $v$ is the attractor flow associated to $\gamma^{in}$, and the out coming edges are associated to charges $\gamma_{i}^{out}$, then we should have that at each internal vertex $v$: $\gamma^{in}=\sum_{i}\gamma_{i}^{out}$.
    
    \item Determining $\Omega_{u}(\gamma)$: Assuming that the number of the attractor trees is finite, we start at the attractor points, and move backward toward the root $u$. At each inner vertex, apply the KSWCF, so that $\Omega_{v}(\gamma^{in})$ can be computed from $\Omega_{v}(\gamma^{out})$. We will get $\Omega_{u}(\gamma)$ inductively.
\end{itemize}

\textbf{WCS in the strong coupling region}: Given $(u_{0},\gamma)\in\underline{\Gamma}_{u_{0}}$, where $u_{0}\in\mathcal{B}_{strong}$. In order for the line $\mathcal{L}_{u_{0},\gamma}$ to end at the attractor points, the only possible choices of $\gamma$ are the vanishing cycles $\gamma_{\pm\Lambda^{2}}$ at the attractor points $\pm\Lambda^{2}$. It was proved in \cite{fayyazuddin1997results} that $\mathcal{L}_{u_{0},\gamma}$ is the unique line that connect $u_{0}$ to either $\pm\Lambda^{2}$ depending on if $\gamma=\gamma_{\pm\Lambda^{2}}$ respectively. 

We thus conclude that in the strong coupling region $\mathcal{B}_{strong}$, the only attractor flow issuing from a point that terminates at attractor points is the straight line (in affine structure) that connects it to the corresponding attractor point $\pm\Lambda^{2}$. Consequently, within $\mathcal{B}_{strong}$, the only BPS charges are those corresponding to vanishing cycles $\gamma_{\pm\Lambda^{2}}$ at $\pm\Lambda^{2}$, namely, the magnetic monopole with charge vector $(0,1)$ and the dyon with charge vector $(2,-1)$.

\textbf {WCS in the weak coupling region}: By the algorithm described before, we need to consider all rooted attractor trees with root at $u_{0}\in\mathcal{B}_{weak}$ that terminates at the attractor points $\pm\Lambda^{2}$. We expect the following scenario:

\begin{itemize}
    \item For $\gamma=\gamma_{\pm\Lambda^{2}}$, the flow line $\mathcal{L}_{u_{0},\gamma}$ will directly flow into the attractor points $\{\pm\Lambda^{2}\}$ without splitting, which means the two states with charges $\gamma_{\pm\Lambda^{2}}$ persists in the weak coupling region.
    \item For other charge $\gamma$, the flow line $\mathcal{L}_{u_{0},\gamma}$  will first intersect $\mathcal{W}^{1}$ at the point $u_{*}$. At this point, the flow splits into two flows $\mathcal{L}_{u_{*},m\gamma_{+\Lambda^{2}}}$ and $\mathcal{L}_{u_{*},n\gamma_{-\Lambda^{2}}}$ that start at $u_{*}$ and terminate at the attractor points $\pm\Lambda^{2}$ respectively. i.e., $\mathcal{L}_{u_{0},\gamma}=\mathcal{L}_{u_{*},m\gamma_{+\Lambda^{2}}}+\mathcal{L}_{u_{*},n\gamma_{-\Lambda^{2}}}$ corresponding to $\gamma=m\gamma_{+\Lambda^{2}}+n\gamma_{-\Lambda^{2}}$. 
    If such a \textit{split attractor flow} associate to $\gamma$ exists with $\Omega_{u_{0}}(\gamma)\neq 0$, then the corresponding BPS state with charge $\gamma$ must exist in $\mathcal{B}_{weak}$, and its DT-invariant $\Omega_{u_{0}}(\gamma)$ can be computed by applying the KSWCF at the split point $u_{*}$.
\end{itemize}

\begin{lemma}
The flow lines $\mathcal{L}_{u_{*},m\gamma_{+\Lambda^{2}}}$ and $\mathcal{L}_{u_{*},n\gamma_{-\Lambda^{2}}}$ coming from $\pm\Lambda^{2}$ can only intersect at the wall $\mathcal{W}^{1}$.
\end{lemma}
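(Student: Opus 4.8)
The plan is to reduce the claim to the one rigidity property of attractor flows, proposition~3.3: the phase $\operatorname{Arg}Z(\gamma)$ is constant along $\mathcal{L}_{b_{0},\gamma}$, together with the trivial identity $\operatorname{Arg}Z(m\gamma)=\operatorname{Arg}Z(\gamma)$ for $m\in\mathbb{Z}_{>0}$. First I would use that both flow lines emanate from the same split point $u_{*}$, and that $u_{*}$ lies on the wall of the first kind at which the two branches are born, i.e.\ $u_{*}\in\mathcal{W}_{\gamma_{+\Lambda^{2}},\gamma_{-\Lambda^{2}}}\subset\mathcal{W}^{1}$. By definition~2.4 this means exactly that $Z_{u_{*}}(\gamma_{+\Lambda^{2}})$ and $Z_{u_{*}}(\gamma_{-\Lambda^{2}})$ point in the same direction, so they have a common phase $\theta:=\operatorname{Arg}Z_{u_{*}}(\gamma_{+\Lambda^{2}})=\operatorname{Arg}Z_{u_{*}}(\gamma_{-\Lambda^{2}})$. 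Applying proposition~3.3 along each branch (and the scaling identity) then shows that $\mathcal{L}_{u_{*},m\gamma_{+\Lambda^{2}}}$ is contained in $\{u:\operatorname{Arg}Z_{u}(\gamma_{+\Lambda^{2}})=\theta\}$ and $\mathcal{L}_{u_{*},n\gamma_{-\Lambda^{2}}}$ in $\{u:\operatorname{Arg}Z_{u}(\gamma_{-\Lambda^{2}})=\theta\}$ --- this is the same observation, used in the proof of proposition~3.5, that an attractor flow sits inside the relevant wall of the second kind.

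Given any point $u'$ lying on both flow lines, membership in the first locus gives $\operatorname{Arg}Z_{u'}(\gamma_{+\Lambda^{2}})=\theta$ and membership in the second gives $\operatorname{Arg}Z_{u'}(\gamma_{-\Lambda^{2}})=\theta$; hence $Z_{u'}(\gamma_{+\Lambda^{2}})$ and $Z_{u'}(\gamma_{-\Lambda^{2}})$ are positive real multiples of one another, i.e.\ $\operatorname{Im}\!\big(Z_{u'}(\gamma_{+\Lambda^{2}})/Z_{u'}(\gamma_{-\Lambda^{2}})\big)=0$. By definition~2.4 this is precisely the condition $u'\in\mathcal{W}_{\gamma_{+\Lambda^{2}},\gamma_{-\Lambda^{2}}}\subset\mathcal{W}^{1}$, which is the lemma. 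The same one-line argument also covers the degenerate case in which the two flows overlap along an arc instead of meeting at isolated points: every point of such an arc satisfies the same phase equality, hence lies on $\mathcal{W}^{1}$. (If $u'$ is one of the attractor points $\pm\Lambda^{2}$ there is nothing to prove, since $\mathcal{W}^{1}$ passes through $\pm\Lambda^{2}$.)

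The one place to be careful --- and really the only genuine subtlety --- is that $Z$ is multivalued about the discriminant $\Delta_{\Lambda}=\{\pm\Lambda^{2}\}$, so that the constancy of $\operatorname{Arg}Z(\gamma)$ along the flow has content only after a continuous branch has been fixed. I would deal with this by continuing the branch of $Z$ along each flow line out of its common value at $u_{*}$: since each of the two flow lines is a path issuing from $u_{*}$ that stays in $\mathcal{B}^{0}$ away from its endpoint, analytic continuation along it is unambiguous, and the two continuations agree at $u_{*}$ by construction, so $\theta$ is a well-defined number and the comparison at $u'$ above must be read along a path joining $u'$ back to $u_{*}$ inside $\mathcal{B}^{0}$ (equivalently, one works on the universal cover, or simply localizes the whole picture in a simply connected region). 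No wall-crossing input is needed: the statement is a pure consequence of proposition~3.3 and the defining equation of $\mathcal{W}^{1}$.
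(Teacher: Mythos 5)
Your proposal is correct and follows essentially the same route as the paper: both arguments rest on the fact that each flow line lies in the constant-phase locus $\{u:\operatorname{Arg}Z_{u}(\gamma_{\pm\Lambda^{2}})=\theta_{0}\}$ (Proposition 3.3 plus the scaling identity), so any common point satisfies $\operatorname{Arg}Z_{u}(\gamma_{+\Lambda^{2}})=\operatorname{Arg}Z_{u}(\gamma_{-\Lambda^{2}})$ and hence lies on $\mathcal{W}^{1}$. Your version is in fact slightly more careful than the paper's (containment rather than set equality, and the remark on fixing a branch of the multivalued $Z$), but the underlying idea is identical.
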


\begin{proof} Since $\mathcal{L}_{u_{*},m\gamma_{+\Lambda^{2}}}=\left\{u\in\mathcal{B}:Arg\,Z_{u}(\gamma_{+\Lambda^{2}})=\theta_{0}\right\}$ and $\mathcal{L}_{u_{*},n\gamma_{-\Lambda^{2}}}=\left\{u\in\mathcal{B}:Arg\,Z_{u}(\gamma_{-\Lambda^{2}})=\theta_{0}\right\}$, we see that $\mathcal{L}_{u_{*},m\gamma_{+\Lambda^{2}}}\cap\mathcal{L}_{u_{*},n\gamma_{-\Lambda^{2}}}=\left\{u^{*}\in\mathcal{B}:Arg\,Z_{u^{*}}(\gamma_{-\Lambda^{2}})=Arg\,Z_{u^{*}}(\gamma_{+\Lambda^{2}})\right\}$, which belongs to $\mathcal{W}^{1}$.
\end{proof}  

\begin{lemma}
The intersection point $u^{*}$ coincides with the splitting point $u_{*}$.
\end{lemma}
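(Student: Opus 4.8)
The plan is to show that the split attractor flow of $\gamma = m\gamma_{+\Lambda^2} + n\gamma_{-\Lambda^2}$, once it reaches the wall $\mathcal{W}^1$, actually splits there, and that the two daughter flows are exactly the ones emanating from $\pm\Lambda^2$ that meet at the point $u^*$ produced by Lemma~4.7. The two statements — ``$u_*$ lies on $\mathcal{W}^1$'' and ``the daughter flows meet at $u_*$'' — together with the uniqueness of attractor flows in the strong coupling region (established in the $SU(2)$ discussion above, citing \cite{fayyazuddin1997results}) should pin down $u_* = u^*$.

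First I would recall that along the flow $\mathcal{L}_{u_0,\gamma}$ the phase $\theta_0 := \mathrm{Arg}\,Z_u(\gamma)$ is constant (Proposition~3.3). The splitting point $u_*$ is by definition the first point on this flow lying in $\mathcal{W}^1 = \mathcal{W}_{\gamma_{+\Lambda^2},\gamma_{-\Lambda^2}}$, so at $u_*$ we have $\mathrm{Arg}\,Z_{u_*}(\gamma_{+\Lambda^2}) = \mathrm{Arg}\,Z_{u_*}(\gamma_{-\Lambda^2})$; since $Z$ is additive and $\gamma = m\gamma_{+\Lambda^2}+n\gamma_{-\Lambda^2}$ with $m,n>0$, this common phase must also equal $\mathrm{Arg}\,Z_{u_*}(\gamma) = \theta_0$. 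Hence the daughter flows $\mathcal{L}_{u_*,m\gamma_{+\Lambda^2}}$ and $\mathcal{L}_{u_*,n\gamma_{-\Lambda^2}}$ issued from $u_*$ carry exactly this phase $\theta_0$, and by Lemma~4.6 they flow into the attractor points $+\Lambda^2$ and $-\Lambda^2$ respectively (each is a straight line in the affine structure determined by $\theta_0$, by the strong-coupling uniqueness). Running these two lines backwards from $\pm\Lambda^2$, they are precisely the flows whose intersection is described in Lemma~4.7, because that lemma characterizes the intersection as the locus where the two phases agree and equal $\theta_0$. Therefore $u_*$ belongs to $\mathcal{L}_{u_*,m\gamma_{+\Lambda^2}}\cap\mathcal{L}_{u_*,n\gamma_{-\Lambda^2}}$, which by Lemma~4.7 is the single point $u^*$ on $\mathcal{W}^1$, giving $u_* = u^*$.

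To make the converse direction airtight, I would also argue that $u^*$ cannot lie ``beyond'' $u_*$ or ``before'' it along $\mathcal{L}_{u_0,\gamma}$: the flows from $\pm\Lambda^2$ with phase $\theta_0$ reach $\mathcal{W}^1$ at the unique point where their phases coincide, and the segment of $\mathcal{L}_{u_0,\gamma}$ from $u_0$ to its first hit of $\mathcal{W}^1$ is, by the phase-constancy and the matching of $Z$-phases across the decomposition $\gamma=m\gamma_{+\Lambda^2}+n\gamma_{-\Lambda^2}$, forced to coincide with the concatenation of those two daughter segments run in reverse. Uniqueness of the attractor flow through a given point with a given phase (again the strong-coupling rigidity, or more basically the fact that the flow equation $\dot{\mathbf{u}}+\nabla F_\gamma(\mathbf{u})=0$ has unique solutions) then forbids any other candidate for the split point.

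The main obstacle I anticipate is the rigidity/uniqueness input: one must be sure that in a neighbourhood of $\mathcal{W}^1$ on the weak-coupling side there is no competing attractor tree — e.g.\ a flow that crosses $\mathcal{W}^1$ without splitting, or that splits at a different point — which would be consistent with the phase constraints. This is exactly the content that the $SU(2)$ literature (\cite{fayyazuddin1997results}, \cite{bilal1996curves}) supplies: the curve $\mathcal{W}^1$ is a single closed curve through $\pm\Lambda^2$, and attractor flows are straight lines in the appropriate affine coordinates, so two generic lines with prescribed slopes meet in exactly one point. Granting that geometric picture, the identification $u_*=u^*$ is then a matter of tracking the constant phase $\theta_0$ through the decomposition, which is the routine part of the argument.
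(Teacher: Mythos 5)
Your argument is correct and is essentially the paper's own: both rest on the constancy of the phase $\theta_{0}=\mathrm{Arg}\,Z(\gamma)$ along the flow, so that $u_{*}$ and $u^{*}$ are each characterized as the point of $\mathcal{W}^{1}$ where the common phase of $Z(\gamma_{\pm\Lambda^{2}})$ equals $\theta_{0}$, forcing them to coincide. Your version is somewhat more careful than the paper's one-line contradiction — in particular you make explicit the additivity step ($m,n>0$ implies the common phase on the wall equals $\mathrm{Arg}\,Z_{u_{*}}(\gamma)$) and flag that the single-point nature of the intersection is an input from the $SU(2)$ straight-line picture rather than from Lemma 4.7 itself — but the underlying mechanism is the same.
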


\begin{proof}
Suppose that $u^*\ne u_*$, then the phase of $Z_{u_{*}}(\gamma)$ would be different from that of $Z_{u^{*}}(\gamma_{\pm\Lambda^{2}})$, contradicting to the fact that $Arg\,Z_{u_{*}}(\gamma)= Arg\,Z_{u^{*}}(\gamma_{\pm\Lambda^{2}})=\theta_{0}$.
\end{proof}

Now we ask the question: When will a splitting attractor flow tree like above exist? Putting it in another way: given a point $u_{0}\in\mathcal{B}_{weak}$, for which charges $\gamma\in\underline{\Gamma}_{u_{0}}$, does the attractor flow tree like above exist?

To answer this question, we need to reverse the logic and consider instead the two attractor flow lines $\mathcal{L}_{+\Lambda^{2}}$ and  $\mathcal{L}_{-\Lambda^{2}}$ coming from the attractor points $\pm\Lambda^{2}$, which correspond to the same $\theta$-value. As the lemmas above show, these two ``incoming rays" $\mathcal{L}_{\pm\Lambda^{2}}$ will intersect exactly at one point $u_{*}\in\mathcal{W}^{1}$. And by KSWCF, the two rays ``scatter" at $u_{*}$ into possibly infinity many rays $\mathcal{L}_{m,n}$ with all possible charges $m\gamma_{+\Lambda^{2}}+n\gamma_{-\Lambda^{2}}$. Together with the incoming rays, they form the so-called \textit{scattering diagram}.

Then by considering a small loop around the scattering point $u_{*}$, KSWCF is equivalent (informally) to the triviality of the monodromy, i.e., $\stackrel{\circlearrowright}{\underset{t_{i}}{{\prod}}}\,\mathbb{S}(\textit{l}_{\gamma_{i}})=\,id$, where the product is taken in increasing order of elements $t_{i}$. Written in another way, this read as: $\prod_{m,n\geq0;\,m/n\nearrow}\mathcal{K}_{(m,n)}^{\Omega_{u^{*}_{-}}(m,n)}=\prod_{m,n\geq0;\,m/n\searrow}\mathcal{K}_{(m,n)}^{\Omega_{u^{*}_{+}}(m,n)}$  (see (8)). In our case, $k=\langle\gamma_{-\Lambda^{2}},\gamma_{+\Lambda^{2}}\rangle=2$, the above specializes into (formula (10)):
\begin{equation}
    \mathcal{K}_{2,-1}\,\mathcal{K}_{0,1}=(\mathcal{K}_{0,1}\,\mathcal{K}_{2,1}\,\mathcal{K}_{4,1}\,\cdot\cdot\cdot)\,\mathcal{K}_{2,0}^{-2}\,(\cdot\cdot\cdot\,\mathcal{K}_{6,-1}\,\mathcal{K}_{4,-1}\,\mathcal{K}_{2,-1})
\end{equation}

We see that after ``scattering", we get countably many BPS states with the charge vectors indicated by the sub-index on the RHS of the above formula, while the corresponding BPS invariants are indicated by the super-index on the RHS of the formula above.

Using WCS formalism, we thus produce the following \textbf{algorithm} that  ``foresees" these BPS states. Given $(u_{0},\gamma)\in\underline{\Gamma}_{u_{0}}$, where $u_{0}\in\mathcal{B}_{weak}$, to test if the BPS state with charge vector $\gamma$ exists or not (and if it exists, determine its BPS invariant $\Omega_{u_{0}}(\gamma)$) at $u_{0}$, we do the followings

\begin{itemize}
    \item First, determine $\theta_{0}=arg\,Z_{u_{0}}(\gamma)$, and consider the gradient flow associated to $F_{\gamma}(u)=Re(e^{-i\theta_{0}}Z_{\gamma}(u))$.
    \item Second, the rooted attractor flow tree with root $u_{0}$ would be generically a tree that splits only at some point $u_{*}\in\mathcal{W}^{1}$, with the two external edges ending at the two attractor points $\{\pm\Lambda^{2}\}$.
    \item Third, start with the attractor points, tracing along the external edges till the scattering point $u_{*}$, where we use KSWCF, from which we can tell if the charge $\gamma$ coincides one of the charges on the RHS of (16). If the answer is YES, then the state with charge vector $\gamma$ exists at $u_{0}$, and the DT invariant $\Omega_{u_{0}}(\gamma)$ can be read off from (16); If NO, then at $u_{0}$ there does not exist BPS state with charge $\gamma$.  
\end{itemize}

\subsection{WCS in SU(3) SW integrable system}

 In this case, $\mathcal{B}=\mathbb{C}^{2}$ with coordinate $\mathbf{u}:=(u,v)$, and $\mathcal{C}_{\mathbf{u}}:\, y^{2}=(x^{3}-ux-v)^{2}-\Lambda^{6}$ with $\lambda_{SW}=\frac{(3x^{2}-u)\,x\,dx}{y}$. The discriminant locus is given by $\Delta_{\Lambda}:=\Delta_{\Lambda}^{+}\cup\Delta_{\Lambda}^{-}=\left\{(u,v)\in\mathbb{C}^{2}:\,4u^{3}=27(v\pm\Lambda^{3})^{2}\right\}$. With a choice of symplectic basis $\{\alpha^{1},\alpha^{2};\beta_{1},\beta_{2}\}$ of $\underline{\Gamma}_{\mathbf{u}}$, a charge can be written as $\gamma=(\mathbf{g}\,\,\mathbf{q})=g^{1}\beta_{1}+g^{2}\beta_{2}+q_{1}\alpha^{1}+q_{2}\alpha^{2}$, where $\mathbf{g}=\left(g^{1}\,\,g^{2}\right)$ denotes the ``magnetic charge" of $\gamma$ while $\mathbf{q}=\left(q_{1}\,\,q_{2}\right)$ the ``electric charge" of $\gamma$. The central charge is given by $Z_{\mathbf{u}}(\gamma)=\oint_{\gamma}\,\lambda_{SW}= \mathbf{g}\cdot\mathbf{a}_{D}+\mathbf{q}\cdot\mathbf{a}$, where $\mathbf{a}=\left(\oint_{\alpha^{1}}\,\lambda_{SW}\,\,\oint_{\alpha^{1}}\,\lambda_{SW}\right)^{T}$ and $\mathbf{a}_{D}:=\left(\oint_{\beta_{1}}\,\lambda_{SW}\,\,\oint_{\beta_{2}}\,\lambda_{SW}\right)^{T}$.

\noindent\textbf{Strong coupling spectrum}: The \textit{strong coupling region} in $SU(3)$ case is characterized by small values of $u$ and $v$ (c.f.,\cite{galakhov2013wild}) . It is known that in this region (\cite{klemm1994monodromies}\cite{klemm1996nonperturbative}\cite{klemm1995simple}) there exists only six BPS states with vanishing ``mass" function. These states correspond to the six vanishing cycles $\nu_{i}^{\pm}$,  $i=1,2,3$ that are the invariant cycles under the Picard-Lefschetz monodromies around the six singular lines $\Sigma_{\pm}^{i}$, $i=1,2,3$ below. This can be seen by cutting the base by hyperplane $H=\{Imv=0\}$, in which $\Delta_{\Lambda}$ becomes three pairs of lines $ \mathscr{L}_{k}=\Sigma^{k}_{+}\cup \Sigma_{-}^{k},\,\, k=1,2,3$ (see \cite{klemm1996nonperturbative}, from which the figure 2 below is taken). There exists a symplectic basis  such that in this basis (\cite{hollowood1997strong}): 
\begin{equation*}
    \nu_{1}^{+}=(1,0;-2,1)\,\nu_{1}^{-}=(1,0;0,0);\,\,
    \nu_{2}^{+}=(0,1;0,0)\,\nu_{2}^{-}=(0,1;-1,2);\,\,
         \nu_{3}^{+}=(1,1; -2,1)\,\nu_{3}^{-}=(1,1; -1,2)
\end{equation*}

 \begin{figure}[h]
    \centering
\includegraphics[height=4.5cm, width=5cm]{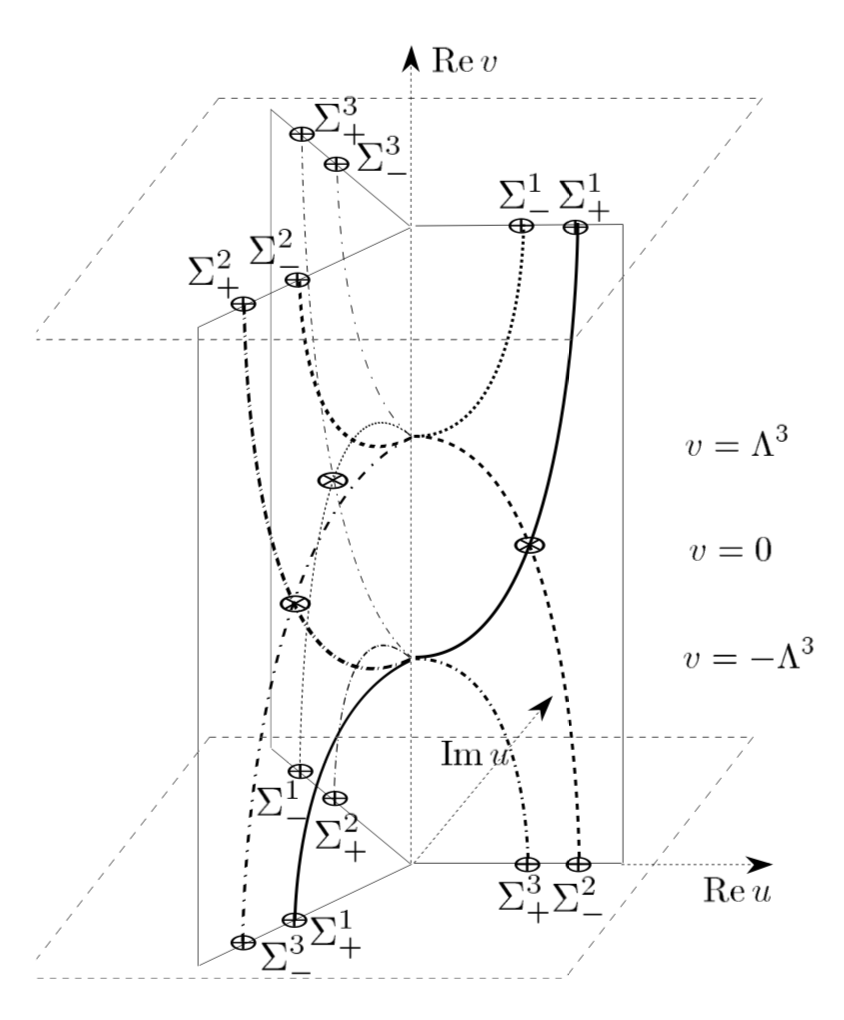}
    \caption{Geometry of the base for $SU(3)$ when $Im\, v=0$, figure taken from \cite{klemm1996nonperturbative}.}
\end{figure}

Above vanishing cycles can be organized by the roots and co-roots of $SU(3)$ (\cite{taylor2001strong}\cite{klemm1995simple}). "Electric charges" are expressed in terms of the simple roots of $SU(3)$, i,e.,  $\alpha_{1}=(2,-1)$ and $\alpha_{2}=(-1,2)$. The ``magnetic charge vectors" sit in the weight lattice $\Lambda_{w}$. They are expressed in terms of the co-roots, i.e., vector $(n_{1},n_{2})$ of the form $\mathbf{g}=n_{1}\alpha_{1}^{\vee}+n_{2}\alpha_{2}^{\vee}\in\Lambda_{w}$. In our case, $\alpha^{\vee}=\alpha$. So we express the above vanishing cycles as:
\[\nu_{1}^{+}=(\alpha_{1},-\alpha_{1})\quad\nu_{1}^{-}=(\alpha_{1},0);\]
\begin{equation}
        \nu_{2}^{+}=(\alpha_{2},0)\quad\nu_{2}^{-}=(\alpha_{2},\alpha_{2});
\end{equation}
\[\nu_{3}^{+}=(\alpha_{1}+\alpha_{2},-\alpha_{1})\quad\nu_{3}^{-}=(\alpha_{1}+\alpha_{2},\alpha_{2})\]

\begin{remark}
Since $\alpha_{3}=\alpha_{1}+\alpha_{2}$, so $\nu_{3}^{\pm}=\nu_{1}^{\pm}+\nu_{2}^{\pm}$. The fact that each $\alpha_{k}$ corresponds to an embedding of $su(2)$ into $su(3)$ is reflected by the fact that each pair $\Sigma_{\pm}^{k}$ corresponds to a WCS similar to $SU(2)$ case considered before.
\end{remark}

\begin{figure}[h]
    \centering
    \includegraphics[height=2cm, width=4.5cm]{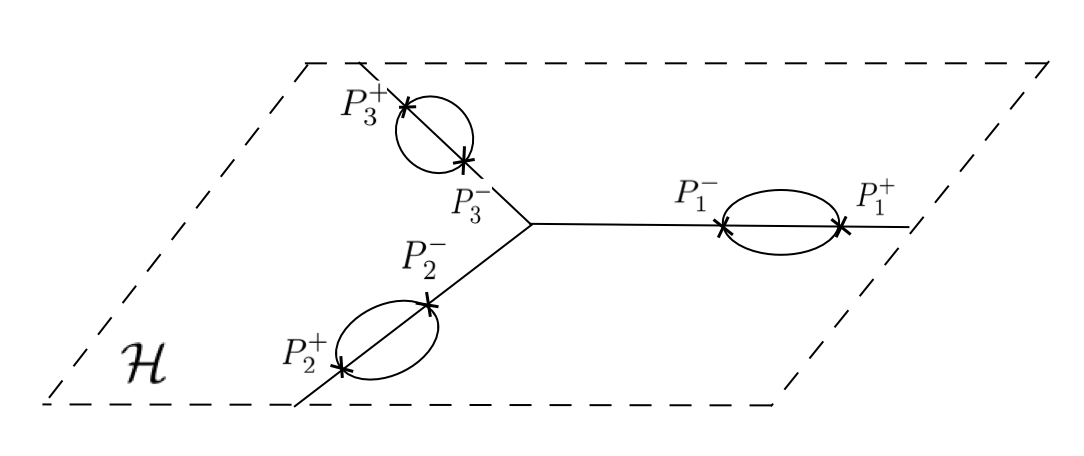}
    \caption{walls restricted to the plane $\mathcal{H}$, each corresponding to a situation similar to $SU(2)$ case.}
\end{figure}

The \textbf{weakly coupled region} is the region on $\mathcal{B}$ where $|\alpha\cdot\mathbf{a}|\gg\Lambda$ for all positive roots 
$\alpha$ (c.f.,\cite{fraser1997weak}). 

\noindent\textbf{Problem}: \textit{Starting with BPS spectrum (17) (\textbf{initial data}), compute the weak coupling BPS spectrum using WCS.} 

To study WCS in weak coupling region, cut the base further by the plane $H_{\text{cut}}:=\{z=Re (v)\equiv constant>>\Lambda^{3}\}$, which intersects with the six singular lines at points $P_{k}^{\pm}:=\Sigma^{i}_{\pm}\cap H_{\text{cut}}$. We investigate the WCS on the resulting plane $\mathcal{H}:=\mathcal{B}\cap H\cap H_{\text{cut}}$.

\noindent\textbf{Walls of the first kind} (similar to $SU(2)$ wall): We expect certain BPS states decay into $\nu_{i}^{\pm}$s. As each pair $\nu_{i}^{\pm}$ corresponds to a $SU(2)$-situation, we have three walls of the first kind near $\Delta_{\Lambda}$ similar to the wall in $SU(2)$ case: $\mathcal{W}_{k}^{1}=\mathcal{W}^{1}(\nu_{k}^{+},\nu_{k}^{-}):=\left\{\mathbf{u}=(u,v)\in\mathbb{C}^{2}:\, Arg(Z_{\mathbf{u}}(\nu_{k}^{+})=Arg(Z_{\mathbf{u}}(\nu_{k}^{-}))\right\},\, k=1,2,3$. It was pointed out in \cite{argyres1995new} and \cite{giveon1995effective} that these walls have the topology type $S^{1}\times \mathbb{C}$ for large $\mathbf{u}$. When restricted to $\{Im(v)=0\}$, these walls have the cylinder topology, i.e., $S^{1}\times\mathbb{R}$. Numerical test about their shapes are given in \cite{campbellova2010supersymetric}, from which the following figure 4 is copied below. 

\begin{figure}[h]
    \centering
    \includegraphics[height=3.0cm, width=4.4cm]{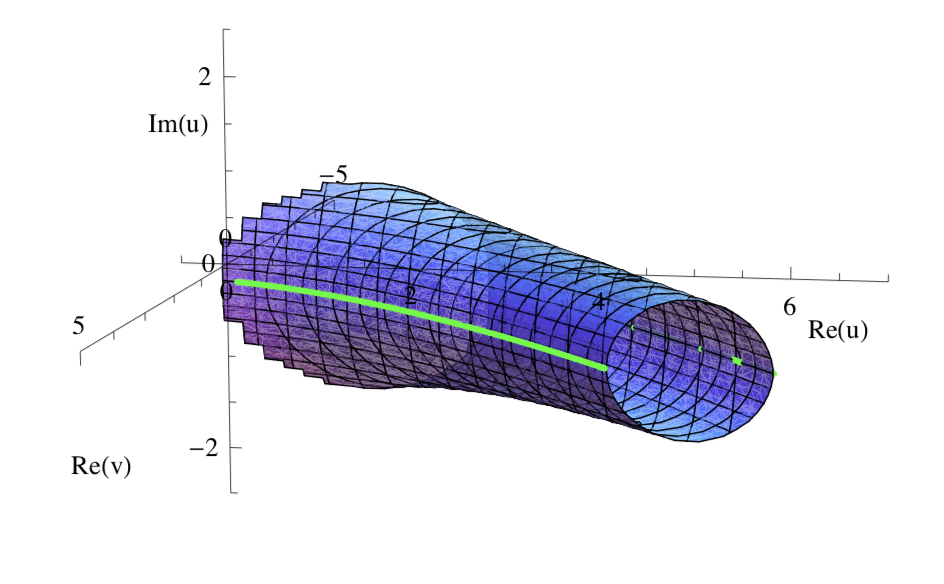}
    \caption{$\mathcal{W}^{1}_{2}$ away from singular locus when $Im(v)=0$, $Re(v)>0$, figure taken from \cite{campbellova2010supersymetric}}
\end{figure}
 
Numerical test in section 6.4.1 of \cite{campbellova2010supersymetric} shows that cutting $\mathcal{B}$ further by $H_{\text{cut}}$ in figure 4 corresponds to taking a slice of cylinder (wall of the first kind in $Im(v)=0$ hyperplane). Notice that the green lines in figure 4 are the singular lines. We will see that the the two intersection points $P_{k}^{\pm}$ play the same role as $\pm\Lambda^{2}$ played in $SU(2)$ case.
\begin{proposition}
The walls of the first kind $\mathcal{W}^{1}_{k}$, when restricted to the plane $\mathcal{H}$, is topologically homeomorphic to the circle $S^{1}$, and plays the same role as the wall of stability in $SU(2)$ case.
\end{proposition}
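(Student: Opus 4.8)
The plan is to reduce the statement to the $SU(2)$ analysis of subsection 4.2 via the embedding $\mathfrak{su}(2)\hookrightarrow\mathfrak{su}(3)$ attached to the root $\alpha_k$ (Remark 4.2), and to read off the circle topology from the already-cited asymptotic topology $\mathcal{W}^{1}_k\cong S^{1}\times\mathbb{C}$ together with the transversality of the two hyperplanes cutting out $\mathcal{H}$. Concretely: since $H\cap H_{\text{cut}}=\{v=\text{const}\}$, the plane $\mathcal{H}\cong\mathbb{C}$ is a copy of the $u$-plane, directly parallel to the $SU(2)$ base $\mathcal{B}\cong\mathbb{C}$. Set $\Gamma_k:=\mathbb{Z}\,\nu_k^{+}\oplus\mathbb{Z}\,\nu_k^{-}$; with the Dirac pairing $\langle(\mathbf{g},\mathbf{q}),(\mathbf{g}',\mathbf{q}')\rangle=\mathbf{g}\cdot\mathbf{q}'-\mathbf{q}\cdot\mathbf{g}'$ and the charges listed in (17) one computes $\langle\nu_k^{+},\nu_k^{-}\rangle=2$ for $k=1,2,3$, so $(\Gamma_k,\langle\cdot,\cdot\rangle)$ is isometric, after a unimodular base change, to the $SU(2)$ charge lattice $\mathbb{Z}\alpha\oplus\mathbb{Z}\beta$ with $\nu_k^{+}\leftrightarrow\gamma_{+\Lambda^{2}}$ and $\nu_k^{-}\leftrightarrow\gamma_{-\Lambda^{2}}$, while $Z_{\mathbf{u}}|_{\Gamma_k}$ on $\mathcal{H}$ is determined by the two periods $Z_{\mathbf{u}}(\nu_k^{\pm})=\oint_{\nu_k^{\pm}}\lambda_{SW}$. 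I would then establish that near each component of the pair $\Sigma^{k}_{\pm}$ two branch points of $\mathcal{C}_{\mathbf{u}}:y^{2}=(x^{3}-ux-v)^{2}-\Lambda^{6}$ collide, so that on $\mathcal{H}$ the local Seiberg-Witten geometry degenerates to that of an $SU(2)$ curve with an effective scale $\Lambda_{\text{eff}}$, and hence that in a neighbourhood of $\mathcal{W}^{1}_k$ inside $\mathcal{H}$ the pair $\big(Z_{\mathbf{u}}(\nu_k^{+}),Z_{\mathbf{u}}(\nu_k^{-})\big)$ reproduces --- up to an overall holomorphic factor and a biholomorphism of the modulus --- the $SU(2)$ pair $(a,a_{D})$ near its curve of marginal stability.

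By the definition of the walls of the first kind (Definition 2.4), this identifies $\mathcal{W}^{1}_k=\mathcal{W}_{\nu_k^{+},\nu_k^{-}}$ restricted to $\mathcal{H}$, locally, with the $SU(2)$ wall $\mathcal{W}^{1}=\mathcal{W}_{\gamma_{+\Lambda^{2}},\gamma_{-\Lambda^{2}}}$, and identifies $P_k^{\pm}=\Sigma^{k}_{\pm}\cap\mathcal{H}$ with $\pm\Lambda^{2}$ --- which is precisely the ``same role'' assertion. Indeed, at $P_k^{+}$ the vanishing cycle $\nu_k^{+}$ shrinks, so $|Z(\nu_k^{+})|\to0$; by the mass-function argument of Proposition 4.5 (together with Remark 4.1, which excludes terminal points at regular zeros of the mass), $P_k^{+}$ is an attractor point, and it lies in $\overline{\mathcal{W}^{1}_k\cap\mathcal{H}}$ because $Z(\nu_k^{+})/Z(\nu_k^{-})\to0\in\mathbb{R}$ there; symmetrically for $P_k^{-}$ with $\nu_k^{-}$. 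Finally, since $\langle\nu_k^{+},\nu_k^{-}\rangle=2$, the scattering at a generic point of $\mathcal{W}^{1}_k\cap\mathcal{H}$ is governed by the $k=2$ Kontsevich-Soibelman identity, i.e. by the analogue of formula (16) with $\gamma_{\pm\Lambda^{2}}$ replaced by $\nu_k^{\pm}$, exactly as in the $SU(2)$ case.

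For the topology, most of the input is already available: $\mathcal{W}^{1}_k\cong S^{1}\times\mathbb{C}$ for large $\mathbf{u}$ (\cite{argyres1995new}, \cite{giveon1995effective}), and $\mathcal{W}^{1}_k\cap H$ is a cylinder $S^{1}\times\mathbb{R}$ whose $\mathbb{R}$-factor is a monotone function of $Re\,v$ on the range $Re\,v\gg\Lambda^{3}$ (section 6.4.1 of \cite{campbellova2010supersymetric}, where cutting further by $H_{\text{cut}}$ is seen to take a transverse slice of this cylinder). It then suffices to note that $H_{\text{cut}}=\{Re\,v=\text{const}\}$ is a regular level set of that $\mathbb{R}$-coordinate, so $\mathcal{W}^{1}_k\cap\mathcal{H}=(\mathcal{W}^{1}_k\cap H)\cap H_{\text{cut}}$ is a single fibre $S^{1}\times\{pt\}\cong S^{1}$. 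As a cross-check, one may describe $\mathcal{W}^{1}_k\cap\mathcal{H}$ intrinsically as the nodal set in $\mathcal{H}\cong\mathbb{C}$ of the real-analytic function $\mathbf{u}\mapsto Im\big(Z_{\mathbf{u}}(\nu_k^{+})\,\overline{Z_{\mathbf{u}}(\nu_k^{-})}\big)$, verify from the weak-coupling asymptotics of the periods that it keeps a single sign outside a compact set (so its nodal set is a finite disjoint union of smooth closed curves), and invoke the local $SU(2)$ model near $\{P_k^{+},P_k^{-}\}$ to see that there is exactly one such curve, passing through both points --- just as $\mathcal{W}^{1}$ passes through $\pm\Lambda^{2}$.

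The step I expect to be the main obstacle is the global control of $\mathcal{W}^{1}_k\cap\mathcal{H}$: the cited results describe $\mathcal{W}^{1}_k$ only asymptotically and numerically, so turning them into a proof that the slice is a \emph{single} circle --- not an arc or several components --- requires either a genuine transversality statement of $H_{\text{cut}}$ against the cylinder's axis (for which I would lean on the numerics of \cite{campbellova2010supersymetric}) or a sufficiently sharp estimate of the subleading logarithmic corrections to the period integrals $\oint_{\nu_k^{\pm}}\lambda_{SW}$ on the $SU(3)$ curve. That same analytic input is what would upgrade the reduction to the $SU(2)$ situation from an asymptotic statement to a rigorous local identification near $\mathcal{W}^{1}_k$, and is the genuinely technical part of the argument.
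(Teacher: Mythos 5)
Your plan reaches the right conclusion but by a genuinely different route from the paper, and the one ingredient you flag as your ``main obstacle'' is exactly the ingredient the paper's (sketched) proof supplies. The paper does not argue via the asymptotic topology $S^{1}\times\mathbb{C}$, transversality of $H_{\text{cut}}$ against the cylinder, or a nodal-set analysis of $Im\bigl(Z(\nu_k^{+})\overline{Z(\nu_k^{-})}\bigr)$. Instead it observes that the period integrals $\mathbf{a},\mathbf{a}_D$ for the $SU(3)$ curve are expressed in terms of the Appell function $F_{4}(a,b,c,c',u,v)$, and that freezing one of the two variables (which is precisely what restricting to $\mathcal{H}$ does) degenerates $F_{4}$ to the ordinary hypergeometric functions that express the $SU(2)$ periods. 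From that point the rigorous $SU(2)$ result (\cite{bilal1996curves}, \cite{barrett1998problem}) that the curve of marginal stability is a single closed curve through the two singular points transfers verbatim, giving the single circle through $P_k^{\pm}$ with no further global topological input. So the global control you were worried about --- ruling out arcs or multiple components --- is obtained not by sharpening the asymptotic or numerical descriptions but by an exact functional identification of the restricted periods with the $SU(2)$ ones. Your additional material (the lattice isometry $\Gamma_k\cong\mathbb{Z}\alpha\oplus\mathbb{Z}\beta$ with $\langle\nu_k^{+},\nu_k^{-}\rangle=2$, the colliding branch points near $\Sigma^{k}_{\pm}$, and the identification of $P_k^{\pm}$ as attractor points via the mass function) is correct and actually fleshes out the ``plays the same role'' clause more explicitly than the paper does; but as a standalone proof of the $S^{1}$ topology your argument remains conditional on numerics, whereas the Appell-to-hypergeometric reduction closes that gap. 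I would suggest keeping your local geometric analysis and replacing the asymptotic/numerical topology step with the period-function reduction.
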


\begin{proof} (sketch)
The variable $v$ is ``frozen", and $\mathcal{H}$ is endowed with coordinate $u$. We know from \cite{klemm1996nonperturbative} and \cite{campbellova2010supersymetric} that the period integrals $\mathbf{a}$ and $\mathbf{a_{D}}$ can be expressed in terms of Appell function \[F_{4}(a,b,c,c^{\prime},u,v)=\sum_{m=0}^{\infty}\sum_{n=0}^{\infty}\frac{(a)_{m+n}(b)_{m+n}}{(c)_{n}(c^{\prime})_{m}(1)_{n}(1)_{m}}u^{m}v^{n}\]
where $(a)_{n}$ is the Pochhammer symbol $(a)_{n}:=\frac{\Gamma(a+n)}{\Gamma(a)}$. However, if one of the variables $u$, $v$ is set to be constant, the Appell function reduces to the hypergeometric function used in expressing the period integrals in $SU(2)$ case. Thus, the situation will be reduced to the $SU(2)$ case, to which the results about the shape of the $SU(2)$-wall applies. 
\end{proof}

\begin{proposition}
The WCS formalism, when applied to $\mathcal{H}$, produces the BPS states with charges given by:
\[n\nu_{k}^{+}+(n-1)\nu_{k}^{-},\quad k=1,2,3;\,\,n=1,2,\cdots\]
\begin{equation}
    \nu_{k}^{+}+\nu_{k}^{-},\quad k=1,2,3;
\end{equation}
\[(n-1)\nu_{k}^{+}+n\nu_{k}^{-},\quad k=1,2,3;\,\,n=1,2,\cdots\]

All states have DT-invariants $\Omega=1$, except for the middle row states, which have DT-invariants $\Omega=-2$.
\end{proposition}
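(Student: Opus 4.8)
The plan is to reduce Proposition 4.9 to the already-established $SU(2)$ computation of subsection 4.2 by exploiting the three $\mathfrak{su}(2)\hookrightarrow\mathfrak{su}(3)$ embeddings encoded by the pairs $(\nu_k^+,\nu_k^-)$, $k=1,2,3$. First I would verify the pairing condition that makes each pair behave like the $SU(2)$ situation: using the symplectic form on $\underline{\Gamma}_{\mathbf u}$ and the explicit expressions (18), I would compute $\langle \nu_k^-,\nu_k^+\rangle$ for each $k$ and check that it equals $2$ (the same value $k=\langle\gamma_{-\Lambda^2},\gamma_{+\Lambda^2}\rangle=2$ that appeared in the $SU(2)$ scattering). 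Concretely, for $\nu_1^+=(1,0;-2,1)$ and $\nu_1^-=(1,0;0,0)$ one pairs magnetic against electric parts via $\mathbf J$; the Weyl symmetry permuting the simple roots should give the same answer for $k=2$, and for $k=3$ one uses $\nu_3^\pm=\nu_1^\pm+\nu_2^\pm$ (Remark 4.2) together with bilinearity. This is the step I expect to carry the real content, and also the one most prone to sign/convention errors, since the intersection matrix and the identification of magnetic/electric blocks must be handled with care.

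Next I would set up the geometric input on the slice $\mathcal{H}=\mathcal{B}\cap H\cap H_{\text{cut}}$. By Proposition 4.8, each wall $\mathcal{W}^1_k$ restricted to $\mathcal{H}$ is a topological circle separating $\mathcal{H}$ into a "strong" region containing the two points $P_k^\pm$ and a complementary "weak" region, exactly mirroring Figure 1. The period integrals, once $v$ is frozen, degenerate from the Appell $F_4$ to the $SU(2)$ hypergeometric periods, so all the flow-line analysis of subsection 4.2 — Propositions 4.4–4.7, Lemmas 4.4–4.5, and the uniqueness of the straight attractor flow in the strong region — transports verbatim with $\Lambda^2$ replaced by $P_k^\pm$ and $\gamma_{\pm\Lambda^2}$ replaced by $\nu_k^\pm$. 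In particular, the two incoming rays $\mathcal{L}_{\nu_k^+}$ and $\mathcal{L}_{\nu_k^-}$ of a common phase $\theta$ meet at a single scattering point $u_*\in\mathcal{W}^1_k$.

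The third step is to run the KSWCF at $u_*$ for each $k$. Since $\langle\nu_k^-,\nu_k^+\rangle=2$, formula (10) (equivalently the $SU(2)$ scattering identity (16)) applies with $\gamma_1=\nu_k^+$, $\gamma_2=\nu_k^-$, yielding on the weak side the factorization $\mathcal{K}_{\nu_k^-}\mathcal{K}_{\nu_k^+}=\bigl(\prod_{n\ge 1}\mathcal{K}_{n\nu_k^++(n-1)\nu_k^-}\bigr)\,\mathcal{K}_{\nu_k^++\nu_k^-}^{-2}\,\bigl(\prod_{n\ge1}\mathcal{K}_{(n-1)\nu_k^++n\nu_k^-}\bigr)$ read in the appropriate clockwise order. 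Matching charge vectors against the super/subscripts gives exactly the three families in (19): the rays $n\nu_k^++(n-1)\nu_k^-$ and $(n-1)\nu_k^++n\nu_k^-$ with $\Omega=1$, and the central ray $\nu_k^++\nu_k^-$ with $\Omega=-2$. Doing this for $k=1,2,3$ produces the full list. Finally I would note that these three $SU(2)$-type scattering diagrams sit in disjoint neighborhoods of the three walls $\mathcal{W}^1_k$ on $\mathcal{H}$ (their mutual interaction is deferred to Proposition 4.11), so no further wall-crossing occurs within the scope of this proposition and the enumeration is complete. The main obstacle, to reiterate, is purely the bookkeeping of step one — once $\langle\nu_k^-,\nu_k^+\rangle=2$ is confirmed for all $k$, the rest is an application of results already in hand.
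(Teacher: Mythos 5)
Your proposal is correct and follows essentially the same route as the paper: the paper's proof consists precisely of noting that $\langle\nu_{k}^{+},\nu_{k}^{-}\rangle=2$ for $k=1,2,3$ and applying the KSWCF (10) to each pair, with the geometric reduction to the $SU(2)$ situation on $\mathcal{H}$ already supplied by Proposition 4.8. Your additional bookkeeping (verifying the pairing from the explicit charge vectors, transporting the $SU(2)$ flow-line analysis, and noting the disjointness of the three scattering loci) is a more careful spelling-out of what the paper leaves implicit, not a different argument.
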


\begin{proof}

As $\langle\nu_{k}^{+},\nu_{k}^{-}\rangle=2$, for $k=1,2,3$, we apply the KSWCF (10) to each pair $\nu_{k}^{\pm}$. 
\end{proof}

In terms of the positive roots $\{\alpha_{1},\alpha_{2},\alpha_{3}:=\alpha_{1}+\alpha_{2}\}$, the charge vectors in (23) can be expressed as: for $k=1,2$, we have BPS states with charges
\begin{equation}
\{(\alpha_{1},n\alpha_{1}),\,(0,\alpha_{1}),\,(-\alpha_{1},(n-1)\alpha_{1})\}
\end{equation}
\begin{equation}
    \{(\alpha_{2},(n-1)\alpha_{2}),\,(0,\alpha_{2}),\,(-\alpha_{2},n\alpha_{2})\}
\end{equation}

While for $k=3$, we have the following BPS states with charges
\begin{equation}
    \{(\alpha_{3},\alpha_{1}+(n-1)\alpha_{3}),\,(0,\alpha_{3}),\,(-\alpha_{3},\alpha_{2}+(n-1)\alpha_{3})\}
\end{equation}

$SU(3)$ case is interesting in that when $\Lambda\to0$, there exists wall $\mathcal{W}_{\alpha_{1},\alpha_{2}}^{1}$ responsible for states in (21) to ``decay" into states in (19) and (20). For example, $(\alpha_{3},\alpha_{1}+(n-1)\alpha_{3})=(\alpha_{1},n\alpha_{1})+(\alpha_{2},(n-1)\alpha_{2})$ means that the state with charge $(\alpha_{3},\alpha_{1}+(n-1)\alpha_{3})$, besides being created by "scattering" $\nu_{3}^{+}$ and $\nu_{3}^{-}$, can also be created on one side of $\mathcal{W}_{\alpha_{1},\alpha_{2}}^{1}$ by "scattering" $(\alpha_{1},n\alpha_{1})$ and $(\alpha_{2},(n-1)\alpha_{2})$ (the relevant KSWCF is given by pentagon identity (9)).

\begin{proposition}
(\cite{taylor2001strong}\cite{hollowood1997strong}\cite{kuchiev2008charges}) The wall (of the first kind) at weak coupling region is given by
\[\mathcal{W}^{1}_{\alpha_{1},\alpha_{2}}=\left\{
    (u,v)\in\mathbb{C}^{2}:\,Im\left(\frac{\alpha_{1}\cdot\mathbf{a}}{\alpha_{2}\cdot\mathbf{a}}\right)=Im\left(\frac{a^{1}}{a^{2}}\right)\equiv 0\right\}\]
\end{proposition}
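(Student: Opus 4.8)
The plan is to unwind Definition 2.5 for the electric $W$-boson of the highest root and to match the resulting locus with the transcendental curve of the cited references. Recall that for any $\gamma$ one has $\mathcal{W}^{1}_{\gamma}=\bigcup_{\gamma=\gamma_{1}+\gamma_{2}}\mathcal{W}_{\gamma_{1},\gamma_{2}}$ with $\mathcal{W}_{\gamma_{1},\gamma_{2}}=\{\mathbf{u}\in\mathcal{B}:Im(Z_{\mathbf{u}}(\gamma_{1})/Z_{\mathbf{u}}(\gamma_{2}))=0\}$. In the weak coupling region the relevant $\gamma$ is the purely electric state $(0,\alpha_{3})=(0,\alpha_{1}+\alpha_{2})$ and the relevant splitting is $\gamma_{1}=(0,\alpha_{1})$, $\gamma_{2}=(0,\alpha_{2})$. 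This is dictated by the relation $\nu_{3}^{\pm}=\nu_{1}^{\pm}+\nu_{2}^{\pm}$ and by the discussion following the proposition on the $k=1,2,3$ towers: across this wall the whole $k=3$ tower is reassembled, via the pentagon identity, out of the $k=1$ and $k=2$ towers. So the object to describe is $\mathcal{W}_{(0,\alpha_{1}),(0,\alpha_{2})}$, and I claim this is the dominant wall of the first kind in the weak coupling region.

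First I would carry out the formal computation. From $Z_{\mathbf{u}}(\gamma)=\mathbf{g}\cdot\mathbf{a}_{D}+\mathbf{q}\cdot\mathbf{a}$ with $\mathbf{g}=0$ and $\mathbf{q}=\alpha_{k}$ we obtain $Z_{\mathbf{u}}(0,\alpha_{k})=\alpha_{k}\cdot\mathbf{a}$, hence immediately
\[
\mathcal{W}_{(0,\alpha_{1}),(0,\alpha_{2})}=\left\{(u,v)\in\mathbb{C}^{2}:\,Im\left(\frac{\alpha_{1}\cdot\mathbf{a}}{\alpha_{2}\cdot\mathbf{a}}\right)=0\right\}.
\]
To reach the stated form it then suffices to relabel the special coordinates by $a^{1}:=\alpha_{1}\cdot\mathbf{a}$ and $a^{2}:=\alpha_{2}\cdot\mathbf{a}$ --- an invertible $\mathbb{C}$-linear change of flat coordinates on $\mathcal{B}^{0}$, since $\alpha_{1},\alpha_{2}$ are linearly independent in the Cartan --- after which the condition reads $Im(a^{1}/a^{2})=0$. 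This part is purely formal: it follows by substitution into Definition 2.5.

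Next I would argue that $\mathcal{W}_{(0,\alpha_{1}),(0,\alpha_{2})}$ genuinely deserves to be called ``the'' wall of the first kind in the weak coupling region, i.e. that the walls $\mathcal{W}_{\gamma_{1},\gamma_{2}}$ attached to the higher members of the $\alpha_{3}$-tower all collapse onto it there. The mechanism is the logarithmic growth of the period map recorded in Section 3.1 (the $A_{1}$-prepotential) and in the proposition reducing the periods on $\mathcal{H}$ to the $SU(2)$ hypergeometric case: in the weak coupling region one has $|\alpha\cdot\mathbf{a}_{D}|\gg|\alpha\cdot\mathbf{a}|$ for every positive root $\alpha$ because $\mathbf{a}_{D}$ grows like $\mathbf{a}\log\mathbf{a}$, so the phase of any $\alpha_{k}$-tower state is to leading order $\arg(\alpha_{k}\cdot\mathbf{a}_{D})$; and to leading logarithmic order $\alpha_{k}\cdot\mathbf{a}_{D}$ equals $\alpha_{k}\cdot\mathbf{a}$ times a root-independent scalar (the Cartan form contracted against $(\alpha\cdot\mathbf{a})\log(\alpha\cdot\mathbf{a})$, whose leading part is proportional to $\mathbf{a}$). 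Hence the alignment condition becomes, to this order, $\arg(\alpha_{1}\cdot\mathbf{a})=\arg(\alpha_{2}\cdot\mathbf{a})$, independent of the tower member chosen; this is consistent with the topology $S^{1}\times\mathbb{C}$ (resp. $S^{1}\times\mathbb{R}$ on $\{Im\,v=0\}$) of \cite{argyres1995new}\cite{giveon1995effective} and with the numerics of \cite{campbellova2010supersymetric}.

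The hard part is exactly this last step. The coincidence of the various walls is argued only asymptotically above, whereas a complete proof needs global control of the transcendental periods $\mathbf{a},\mathbf{a}_{D}$ --- the Appell function $F_{4}$, reducing to the hypergeometric periods of the $SU(2)$ analysis on the relevant slices --- in order to pin down the exact shape of the curve, to show it closes up into a circle on $\mathcal{H}$, and to exclude spurious branches. This is the same difficulty one faces in identifying the $SU(2)$ curve of marginal stability rigorously, which is why the statement is attributed to \cite{taylor2001strong}\cite{hollowood1997strong}\cite{kuchiev2008charges}; I would import those analyses (or invoke the reduction of the $SU(3)$ slice to the already rigorously treated $SU(2)$ case) rather than redo the period estimates from scratch.
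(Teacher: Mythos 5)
Your proposal is correct, and its decisive step is the same as the paper's: both arguments rest on the weak-coupling asymptotics $\mathbf{a}_{D}(\mathbf{a})=\frac{i}{2\pi}\sum_{\alpha>0}\alpha\,(\alpha\cdot\mathbf{a})\ln\left(\frac{\alpha\cdot\mathbf{a}}{\Lambda}\right)^{2}$ from \cite{hollowood1997strong}, which gives $\mathbf{a}_{D}\sim(C\ln\Lambda)\,\mathbf{a}$ because $\sum_{\alpha>0}\alpha\,(\alpha\cdot\mathbf{a})$ is proportional to $\mathbf{a}$; hence $Z(\gamma)\approx(C\ln\Lambda)\,\mathbf{g}\cdot\mathbf{a}$ is dominated by the magnetic charge and the phase-alignment condition for the dyon towers collapses to $Im\left(\frac{\alpha_{1}\cdot\mathbf{a}}{\alpha_{2}\cdot\mathbf{a}}\right)=0$, independently of which tower members are chosen. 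Where you genuinely differ is the opening move: you first realize the locus exactly as $\mathcal{W}_{(0,\alpha_{1}),(0,\alpha_{2})}$ for the purely electric $W$-bosons. As a description of the point set this is fine (and has the virtue of being non-asymptotic), but as the organizing principle for the wall it is a red herring: $\langle(0,\alpha_{1}),(0,\alpha_{2})\rangle=0$, so the KS-transforms of that pair commute and no jump of DT-invariants is attached to that splitting; the decays that actually cross this wall are the dyonic ones $(\alpha_{3},\alpha_{1}+(n-1)\alpha_{3})\to(\alpha_{1},n\alpha_{1})+(\alpha_{2},(n-1)\alpha_{2})$ of Proposition 4.11, whose walls coincide with this locus only to leading logarithmic order --- which is precisely your second step and is the entirety of the paper's proof. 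Your closing caveat, that an exact identification of the curve (its closing up, absence of spurious branches) requires global control of the Appell/hypergeometric periods and is therefore delegated to \cite{taylor2001strong}\cite{hollowood1997strong}\cite{kuchiev2008charges}, is an accurate assessment of what the paper's argument also leaves to the references.
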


\begin{proof}
It is known (\cite{hollowood1997strong}) that in the weak coupling region, we have the following expression \[\mathbf{a}_{D}(\mathbf{a})=\frac{i}{2\pi}\sum_{\alpha\,\text{positive roots}}\alpha (\alpha\cdot\mathbf{a})\left[\ln\left(\frac{\alpha\cdot\mathbf{a}}{\Lambda}\right)^{2}\right]\]

Then, as $\Lambda\to 0$ (or equivalently, $|\alpha\cdot\mathbf{a}|\gg0$ for all positive roots 
$\alpha$), \,$\mathbf{a}_{D}\sim (C\cdot\ln\Lambda)\,\mathbf{a}$ for some constant $C$.Thus, in this limit, $|\mathbf{a}_{D}|\gg|\mathbf{a}|$, which implies that $Z(\gamma)=\mathbf{a}_{D}\cdot\mathbf{g}+\mathbf{a}\cdot\mathbf{q}\approx (\ln\Lambda)\,\mathbf{a}\cdot\mathbf{g}$. This means that in this region, the central charge is dominated by the magnetic charge, thus follows the proposition.
\end{proof}

\begin{figure}[h]
    \centering
    \includegraphics[height=3.5cm, width=5.5cm]{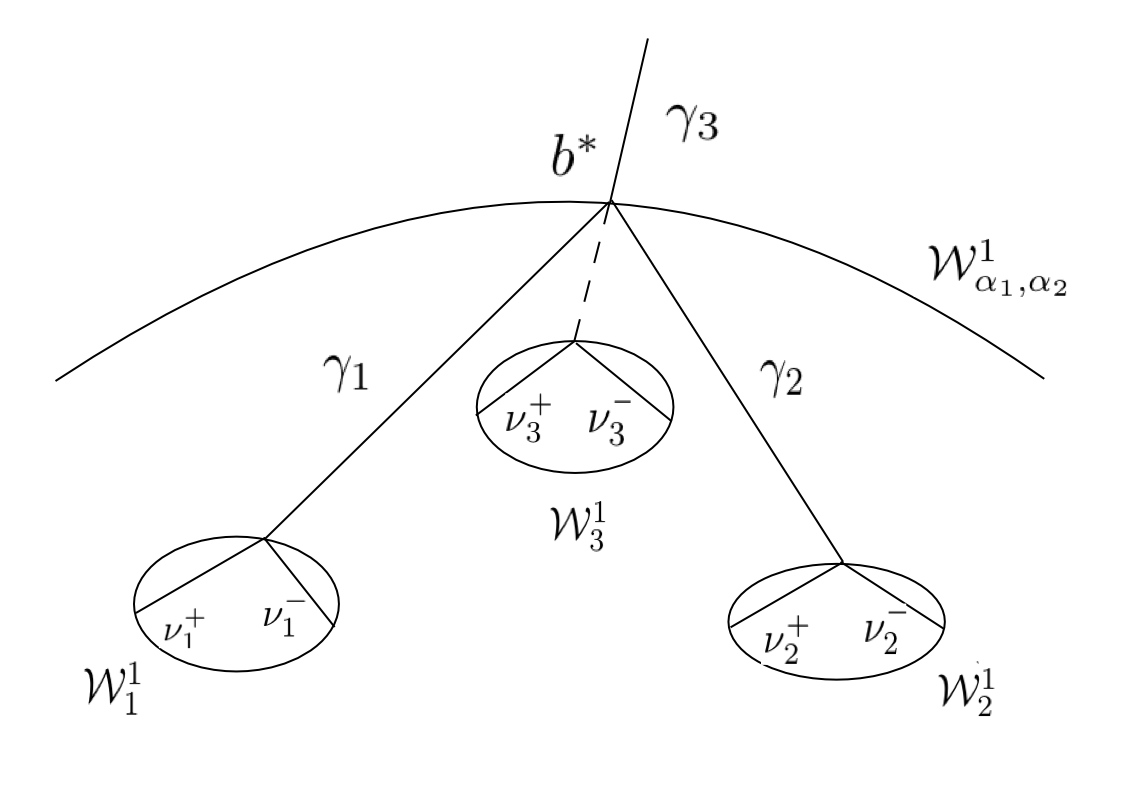}
    \caption{Attractor trees for $\gamma_{3}$ outside $\mathcal{W}_{\alpha_{1},\alpha_{2}}$}
    \label{fig:my_label}
\end{figure}

Therefore, on one side of $\mathcal{W}_{\alpha_{1},\alpha_{2}}^{1}$ where the walls $\mathcal{W}_{k}^{1}$'s are present, the states with charge  $(\alpha_{3},\alpha_{1}+(n-1)\alpha_{3})$ exist with DT-invariant equals to $1$ (following from $k=2$ KSWCF), while on the other side, the same states derive their existence in two ways. Figure 5 above illustrates the corresponding split attractor flow. Thus, we expect the DT-invariant $\Omega((\alpha_{3},\alpha_{1}+(n-1)\alpha_{3}))$  to "jump" from $1$ to $2$ after crossing the wall $\mathcal{W}_{\alpha_{1},\alpha_{2}}^{1}$. The KSWCF to be used at the split point $b^{*}$ is given by the proposition 4.11 below. 

Using short notation: $\gamma_{1}:=(\alpha_{1},n\alpha_{1})$, $\gamma_{2}:=(\alpha_{2},(n-1)\alpha_{2})$ and $\gamma_{3}:=(\alpha_{3},\alpha_{1}+(n-1)\alpha_{3})$, we have that

\begin{proposition}
 The KSWCF at the split point $b^{*}$ is given by:
 
 \begin{equation}
     \mathcal{K}_{\gamma_{1}}\mathcal{K}_{\gamma_{3}}\mathcal{K}_{\gamma_{2}}=\Xi^{+}\,\mathcal{K}_{\gamma_{3}}^{2}
\mathcal{K}_{2\gamma_{3}}^{-2}\,\Xi^{-}\,\mathcal{K}_{\gamma_{1}}
 \end{equation}
where
\begin{equation*}
    \Xi^{+}:=\prod_{n=1}^{\infty}\mathcal{K}_{n\gamma_{2}+(n-1)(\gamma_{1}+\gamma_{3})} \,\mathcal{K}_{n\gamma_{2}+(n-1)(\gamma_{1}+\gamma_{3})+\gamma_{3}}
\end{equation*}
\begin{equation*}
    \Xi^{-}:=\prod_{n=\infty}^{1}\mathcal{K}_{(n-1)\gamma_{2}+n(\gamma_{1}+\gamma_{3})+\gamma_{3}} \,\mathcal{K}_{(n-1)\gamma_{2}+n(\gamma_{1}+\gamma_{3})}
\end{equation*}
\end{proposition}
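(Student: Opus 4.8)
The plan is to read the displayed equality as an instance of the Kontsevich--Soibelman wall crossing formula at $b^{*}$ --- invariance of $\mathbb{S}(\Delta)$ across $\mathcal{W}^{1}_{\alpha_{1},\alpha_{2}}$ --- and then to \emph{derive} the normal form on the right by reducing everything to the two elementary identities already recorded, the pentagon identity (the $k=1$ case) and the $k=2$ Kontsevich--Soibelman formula. The first step is to record the Dirac pairings: from the explicit charge vectors $\gamma_{1}=(\alpha_{1},n\alpha_{1})$, $\gamma_{2}=(\alpha_{2},(n-1)\alpha_{2})$, $\gamma_{3}=(\alpha_{3},\alpha_{1}+(n-1)\alpha_{3})$ one checks $\gamma_{3}=\gamma_{1}+\gamma_{2}$ and $\langle\gamma_{1},\gamma_{2}\rangle=1$, whence $\langle\gamma_{1},\gamma_{3}\rangle=1$, $\langle\gamma_{2},\gamma_{3}\rangle=-1$, $\langle\gamma_{1},\gamma_{2}+\gamma_{3}\rangle=\langle\gamma_{1}+\gamma_{3},\gamma_{2}\rangle=2$, and $\gamma_{1}+\gamma_{2}+\gamma_{3}=2\gamma_{3}$. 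These few identities already foretell the answer: pairing $1$ on the generating pairs produces pentagons, the pairs with pairing $2$ force $\mathcal{K}_{2\gamma_{3}}^{-2}$ together with infinite towers, and the two independent sources of the charge $\gamma_{3}$ --- the $k=3$ scattering of the three families found above and the pentagon bound state of $\gamma_{1},\gamma_{2}$ --- will merge into the exponent $2$ on $\mathcal{K}_{\gamma_{3}}$.

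Next I would fix the two sides of the wall. On the side where $\arg Z(\gamma_{1})>\arg Z(\gamma_{3})>\arg Z(\gamma_{2})$ the only BPS charges in the sublattice $\Gamma_{0}=\mathbb{Z}\gamma_{1}\oplus\mathbb{Z}\gamma_{2}$ are $\gamma_{1},\gamma_{3},\gamma_{2}$ with $\Omega=1$ each --- $\gamma_{1},\gamma_{2}$ from the $k=1,2$ scatterings, $\gamma_{3}$ from the $k=3$ scattering, and since $\langle\gamma_{1},\gamma_{2}\rangle=1$ with $\gamma_{1}$ clockwise before $\gamma_{2}$ the pentagon shows that no additional bound states appear on this side. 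Hence $\mathbb{S}(\Delta)=\mathcal{K}_{\gamma_{1}}\mathcal{K}_{\gamma_{3}}\mathcal{K}_{\gamma_{2}}$ for a strict sector $\Delta$ containing the three rays. Crossing $\mathcal{W}^{1}_{\alpha_{1},\alpha_{2}}$ reverses the clockwise order of these rays, so $\mathbb{S}(\Delta)$ on the other side is a product beginning with $\mathcal{K}_{\gamma_{2}}$, ending with $\mathcal{K}_{\gamma_{1}}$, and supported in between on charges strictly interior to the cone on $\gamma_{1},\gamma_{2}$; by the wall crossing formula it still equals $\mathcal{K}_{\gamma_{1}}\mathcal{K}_{\gamma_{3}}\mathcal{K}_{\gamma_{2}}$, and the proposition is precisely the identification of that interior part.

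To compute the interior part I would first use the pentagon twice to peel off the outer factors, for instance $\mathcal{K}_{\gamma_{1}}\mathcal{K}_{\gamma_{3}}\mathcal{K}_{\gamma_{2}}=\mathcal{K}_{\gamma_{1}}\mathcal{K}_{\gamma_{2}}\mathcal{K}_{\gamma_{2}+\gamma_{3}}\mathcal{K}_{\gamma_{3}}=\mathcal{K}_{\gamma_{2}}\mathcal{K}_{\gamma_{3}}\mathcal{K}_{\gamma_{1}}\mathcal{K}_{\gamma_{2}+\gamma_{3}}\mathcal{K}_{\gamma_{3}}$, which already exhibits the two $\gamma_{3}$-contributions; then apply the $k=2$ formula to a pair with pairing $2$, such as $\mathcal{K}_{\gamma_{1}}\mathcal{K}_{\gamma_{2}+\gamma_{3}}$, which introduces $\mathcal{K}_{2\gamma_{3}}^{-2}$ (because $\gamma_{1}+\gamma_{2}+\gamma_{3}=2\gamma_{3}$) together with the leading pieces of both towers; and iterate these two moves, re-sorting the factors by the slope of their charge in $\Gamma_{0}$, each round pushing the still-unsorted remainder deeper into the filtration $\mathfrak{g}_{\Delta,\ge N}$. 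Working modulo $\mathfrak{g}_{\Delta,\ge N}$ for increasing $N$ one then reads off that the only nonzero multiplicities are $\Omega(\gamma_{3})=2$, $\Omega(2\gamma_{3})=-2$, $\Omega\big(n\gamma_{2}+(n-1)(\gamma_{1}+\gamma_{3})\big)=\Omega\big(n\gamma_{2}+(n-1)(\gamma_{1}+\gamma_{3})+\gamma_{3}\big)=1$ together with the corresponding $\Xi^{-}$ charges, and $\Omega(\gamma_{1})=1$, everything else vanishing --- i.e. exactly the right hand side. As a sanity check, switching off the $k=3$ source of $\gamma_{3}$ degenerates the identity to the pentagon with bound state $\gamma_{3}=\gamma_{1}+\gamma_{2}$.

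The main obstacle is that this wall crossing is non-generic: since $\gamma_{3}=\gamma_{1}+\gamma_{2}$ and $Z$ is linear, the rays $l_{\gamma_{1}},l_{\gamma_{3}},l_{\gamma_{2}}$ become parallel \emph{simultaneously} at $b^{*}$, so the crossing cannot be split into a sequence of elementary two-ray crossings inside $\Gamma_{0}$. Making the argument fully rigorous therefore requires either perturbing into a larger lattice in which the $k=1,2,3$ data are momentarily non-coplanar, running the now-generic cascade of pentagon and $k=2$ crossings and passing to the limit, or else performing the infinite reordering directly in $G_{\Delta}=\varprojlim G_{\Delta,N}$ and proving that the iteration converges and stabilizes order by order in $N$. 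Either way, the genuine content --- and the only place an error could hide --- is the combinatorial bookkeeping showing that iterating the pentagon and the $k=2$ identity produces exactly the towers $\Xi^{\pm}$ and introduces no spurious rays.
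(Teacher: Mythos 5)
Your strategy coincides with the paper's: read the identity as the KSWCF at $b^{*}$ with known side $\mathcal{K}_{\gamma_{1}}\mathcal{K}_{\gamma_{3}}\mathcal{K}_{\gamma_{2}}$, and normal-order the other side using only the pentagon identity (9) and the $k=2$ formula (10). Your pairing bookkeeping ($\gamma_{3}=\gamma_{1}+\gamma_{2}$, $\langle\gamma_{1},\gamma_{2}\rangle=\langle\gamma_{1},\gamma_{3}\rangle=\langle\gamma_{3},\gamma_{2}\rangle=1$, $\langle\gamma_{1}+\gamma_{3},\gamma_{2}\rangle=2$, $\gamma_{1}+\gamma_{2}+\gamma_{3}=2\gamma_{3}$) is correct, and your opening factorization $\mathcal{K}_{\gamma_{1}}\mathcal{K}_{\gamma_{3}}\mathcal{K}_{\gamma_{2}}=\mathcal{K}_{\gamma_{2}}\mathcal{K}_{\gamma_{3}}\mathcal{K}_{\gamma_{1}}\mathcal{K}_{\gamma_{2}+\gamma_{3}}\mathcal{K}_{\gamma_{3}}$, with the $k=2$ step applied to $\mathcal{K}_{\gamma_{1}}\mathcal{K}_{\gamma_{2}+\gamma_{3}}$, is a legitimate mirror image of the paper's $\mathcal{K}_{\gamma_{3}}\mathcal{K}_{\gamma_{1}+\gamma_{3}}\mathcal{K}_{\gamma_{2}}\mathcal{K}_{\gamma_{1}+\gamma_{2}}\mathcal{K}_{\gamma_{1}}$ with the $k=2$ step applied to $\mathcal{K}_{\gamma_{1}+\gamma_{3}}\mathcal{K}_{\gamma_{2}}\mathcal{K}_{\gamma_{1}+\gamma_{2}}$.

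The gap is that you stop exactly where the proof begins. After the $k=2$ step one is left with a stray factor $\mathcal{K}_{\gamma_{3}}$ sitting outside an infinite product of the form $\prod_{n\ge 1}\mathcal{K}_{n\gamma_{2}+(n-1)(\gamma_{1}+\gamma_{3})}$ (and its mirror), and everything that is actually asserted in the proposition --- the interleaved charges $n\gamma_{2}+(n-1)(\gamma_{1}+\gamma_{3})+\gamma_{3}$ appearing in $\Xi^{\pm}$, the absence of any other spurious rays, and the exponent $2$ on $\mathcal{K}_{\gamma_{3}}$ --- comes from commuting that single $\mathcal{K}_{\gamma_{3}}$ through the infinite product by a cascade of pentagon moves, using $\langle\gamma_{3},\,n\gamma_{2}+(n-1)(\gamma_{1}+\gamma_{3})\rangle=1$ at every stage and checking stabilization modulo $\mathfrak{g}_{\Delta,\ge N}$ for each $N$. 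You assert the resulting multiplicities ("one then reads off\ldots") and yourself flag this bookkeeping as the only place an error could hide, but you do not carry it out; the paper's proof is precisely this computation (the identities $P=\Xi^{+}\mathcal{K}_{\gamma_{3}}$ and $Q=\mathcal{K}_{\gamma_{3}}\Xi^{-}$, followed by $\mathcal{K}_{\gamma_{3}}\mathcal{K}_{2\gamma_{3}}^{-2}\mathcal{K}_{\gamma_{3}}=\mathcal{K}_{\gamma_{3}}^{2}\mathcal{K}_{2\gamma_{3}}^{-2}$, the last step because parallel charges commute). Your concern about non-genericity --- the three rays coalescing simultaneously at $b^{*}$ --- is not an obstruction and needs no perturbation argument: once the spectrum on one side is fixed, the statement is a purely algebraic slope-ordered factorization in $G_{\Delta}=\varprojlim G_{\Delta,N}$ over the rank-two cone generated by $\gamma_{1},\gamma_{2}$, which is exactly what the missing computation establishes.
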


\begin{proof}
Since $\langle\gamma_{1},\gamma_{2}\rangle=\langle\gamma_{1},\gamma_{3}\rangle=1$, by applying the pentagon identity (9), we have the following
\begin{equation}
\mathcal{K}_{\gamma_{1}}\mathcal{K}_{\gamma_{3}}\mathcal{K}_{\gamma_{2}}=\mathcal{K}_{\gamma_{3}}\mathcal{K}_{\gamma_{1}+\gamma_{3}}\mathcal{K}_{\gamma_{1}}\mathcal{K}_{\gamma_{2}}=\mathcal{K}_{\gamma_{3}}\mathcal{K}_{\gamma_{1}+\gamma_{3}}\mathcal{K}_{\gamma_{2}}\mathcal{K}_{\gamma_{1}+\gamma_{2}}\mathcal{K}_{\gamma_{1}}    
\end{equation}

Note that $\langle\gamma_{1}+\gamma_{3},\gamma_{2}\rangle=\langle\gamma_{1},\gamma_{2}\rangle+\langle\gamma_{3},\gamma_{2}\rangle=2$, so by applying (10), the right hand side of (23) becomes  
\[\mathcal{K}_{\gamma_{3}}\prod_{n=1}^{\infty}\mathcal{K}_{n\gamma_{2}+(n-1)(\gamma_{1}+\gamma_{3})}\,\mathcal{K}_{\gamma_{1}+\gamma_{2}+\gamma_{3}}^{-2}\prod_{n=\infty}^{1}\mathcal{K}_{(n-1)\gamma_{2}+n(\gamma_{1}+\gamma_{3})}\,\mathcal{K}_{\gamma_{3}}\,\mathcal{K}_{\gamma_{1}}=P\,\mathcal{K}_{2\gamma_{3}}^{-2}\,Q\,\mathcal{K}_{\gamma_{1}}\]

where \[P:=\mathcal{K}_{\gamma_{3}}\prod_{n=1}^{\infty}\mathcal{K}_{n\gamma_{2}+(n-1)(\gamma_{1}+\gamma_{3})}\] and \[Q:=\prod_{n=\infty}^{1}\mathcal{K}_{(n-1)\gamma_{2}+n(\gamma_{1}+\gamma_{3})}\,\mathcal{K}_{\gamma_{3}}\]

We now simplify $P$ and $Q$. By repeated use of pentagon identities, we get that
\[P=\mathcal{K}_{\gamma_{3}}\left(\mathcal{K}_{\gamma_{2}}\,\mathcal{K}_{2\gamma_{2}+(\gamma_{1}+\gamma_{3})}\,\mathcal{K}_{3\gamma_{2}+2(\gamma_{1}+\gamma_{3})}\cdots\right)\]
\[=\mathcal{K}_{\gamma_{2}}\,\mathcal{K}_{\gamma_{2}+\gamma_{3}}\,\mathcal{K}_{\gamma_{3}}\,\mathcal{K}_{2\gamma_{2}+(\gamma_{1}+\gamma_{3})}\,\mathcal{K}_{3\gamma_{2}+2(\gamma_{1}+\gamma_{3})}\,\mathcal{K}_{4\gamma_{2}+3(\gamma_{1}+\gamma_{3})}\cdots\]
\[=\mathcal{K}_{\gamma_{2}}\,\mathcal{K}_{\gamma_{2}+\gamma_{3}}\,\mathcal{K}_{2\gamma_{2}+(\gamma_{1}+\gamma_{3})}\,\mathcal{K}_{2\gamma_{2}+(\gamma_{1}+2\gamma_{3})}\,\mathcal{K}_{\gamma_{3}}\,\mathcal{K}_{3\gamma_{2}+2(\gamma_{1}+\gamma_{3})}\,\mathcal{K}_{4\gamma_{2}+3(\gamma_{1}+\gamma_{3})}\cdots\]
\[=\mathcal{K}_{\gamma_{2}}\,\mathcal{K}_{\gamma_{2}+\gamma_{3}}\,\mathcal{K}_{2\gamma_{2}+(\gamma_{1}+\gamma_{3})}\,\mathcal{K}_{2\gamma_{2}+(\gamma_{1}+2\gamma_{3})}\,\mathcal{K}_{3\gamma_{2}+2(\gamma_{1}+\gamma_{3})}\,\mathcal{K}_{3\gamma_{2}+2(\gamma_{1}+\gamma_{3})+\gamma_{3}}\,\mathcal{K}_{\gamma_{3}}\,\mathcal{K}_{4\gamma_{2}+3(\gamma_{1}+\gamma_{3})}\cdots\]\[=\left(\mathcal{K}_{\gamma_{2}}\,\mathcal{K}_{\gamma_{2}+\gamma_{3}}\right)\left(\mathcal{K}_{2\gamma_{2}+(\gamma_{1}+\gamma_{3})}\,\mathcal{K}_{2\gamma_{2}+(\gamma_{1}+\gamma_{3})+\gamma_{3}}\right)\left(\mathcal{K}_{3\gamma_{2}+2(\gamma_{1}+\gamma_{3})}\,\mathcal{K}_{3\gamma_{2}+2(\gamma_{1}+\gamma_{3})+\gamma_{3}}\right)\,\mathcal{K}_{\gamma_{3}}\,\mathcal{K}_{4\gamma_{2}+3(\gamma_{1}+\gamma_{3})}\cdots\]
\begin{equation}
    =\cdots=\prod_{n=1}^{\infty}\mathcal{K}_{n\gamma_{2}+(n-1)(\gamma_{1}+\gamma_{3})}\,\mathcal{K}_{n\gamma_{2}+(n-1)(\gamma_{1}+\gamma_{3})+\gamma_{3}}\,\mathcal{K}_{\gamma_{3}}=\Xi^{+}\,\mathcal{K}_{\gamma_{3}}
\end{equation}

Similarly, we have that
\[Q=\left(\cdots\mathcal{K}_{3\gamma_{2}+4(\gamma_{1}+\gamma_{3})}\,\mathcal{K}_{2\gamma_{2}+3(\gamma_{1}+\gamma_{3})}\,\mathcal{K}_{\gamma_{2}+2(\gamma_{1}+\gamma_{3})}\,\mathcal{K}_{\gamma_{1}+\gamma_{3}}\right)\,\mathcal{K}_{\gamma_{3}}\]
\[=\cdots\mathcal{K}_{3\gamma_{2}+4(\gamma_{1}+\gamma_{3})}\,\mathcal{K}_{2\gamma_{2}+3(\gamma_{1}+\gamma_{3})}\,\mathcal{K}_{\gamma_{2}+2(\gamma_{1}+\gamma_{3})}\,\mathcal{K}_{\gamma_{3}}\,\mathcal{K}_{\gamma_{1}+2\gamma_{3}}\,\mathcal{K}_{\gamma_{1}+\gamma_{3}}\]
\[\cdots\mathcal{K}_{3\gamma_{2}+4(\gamma_{1}+\gamma_{3})}\,\mathcal{K}_{2\gamma_{2}+3(\gamma_{1}+\gamma_{3})}\,\mathcal{K}_{\gamma_{3}}\,\mathcal{K}_{\gamma_{2}+2(\gamma_{1}+\gamma_{3})+\gamma_{3}}\,\mathcal{K}_{\gamma_{2}+2(\gamma_{1}+\gamma_{3})}\]
\[\cdots\mathcal{K}_{3\gamma_{2}+4(\gamma_{1}+\gamma_{3})}\,\mathcal{K}_{\gamma_{3}}\,\mathcal{K}_{2\gamma_{2}+3(\gamma_{1}+\gamma_{3})+\gamma_{3}}\,\mathcal{K}_{2\gamma_{2}+3(\gamma_{1}+\gamma_{3})}\,\mathcal{K}_{\gamma_{2}+2(\gamma_{1}+\gamma_{3})}\]
\begin{equation}
    =\cdots=\mathcal{K}_{\gamma_{3}}\,\prod_{n=\infty}^{1}\mathcal{K}_{(n-1)\gamma_{2}+n(\gamma_{1}+\gamma_{3})+\gamma_{3}}\,\mathcal{K}_{(n-1)\gamma_{2}+n(\gamma_{1}+\gamma_{3})}=\mathcal{K}_{\gamma_{3}}\,\Xi^{-}
\end{equation}

Putting (23), (24) and (25) together, we get finally that 
\[\mathcal{K}_{\gamma_{1}}\,\mathcal{K}_{\gamma_{3}}\,\mathcal{K}_{\gamma_{2}}=P\,\mathcal{K}_{2\gamma_{3}}^{-2}\,Q\,\mathcal{K}_{\gamma_{1}}=\Xi^{+}\,\mathcal{K}_{\gamma_{3}}\,\mathcal{K}_{2\gamma_{3}}^{-2}\,\mathcal{K}_{\gamma_{3}}\,\Xi^{-}\,\mathcal{K}_{\gamma_{1}}=\Xi^{+}\,\mathcal{K}_{\gamma_{3}}^{2}
\mathcal{K}_{2\gamma_{3}}^{-2}\,\Xi^{-}\,\mathcal{K}_{\gamma_{1}}\]
\end{proof}

\begin{remark}
The proposition 4.11 above yields the desired "jump" of  $\Omega(\gamma_{3})=\Omega((\alpha_{3},\alpha_{1}+(n-1)\alpha_{3}))$, which is compatible with the \textit{"primitive wall crossing formula"}(see for example \cite{manschot2011wall}) that gives the "jump"
\begin{equation}
    \Delta\Omega(\gamma_{3}\to\gamma_{1}+\gamma_{2})=(-1)^{\langle\gamma_{1},\gamma_{2}\rangle+1}|\langle\gamma_{1},\gamma_{2}\rangle|\,\Omega(\gamma_{1})\,\Omega(\gamma_{2})=1
\end{equation}

\end{remark}

\begin{proposition}
In parallel with the above argument, since we have that \[(-\alpha_{3},\alpha_{2}+(n-1)\alpha_{3})=(-\alpha_{1},(n-1)\alpha_{1})+(-\alpha_{2},n\alpha_{2})\]
we infer that on one side of $\mathcal{W}_{\alpha_{1},\alpha_{2}}^{1}$, where $\mathcal{W}_{k}^{1}$'s are present, the states with charge  $(-\alpha_{3},\alpha_{2}+(n-1)\alpha_{3})$ exist with DT-invariant equals to $1$, while on the other side, the same states derive their existence in two ways. Thus, we expect the DT-invariant $\Omega((-\alpha_{3},\alpha_{2}+(n-1)\alpha_{3}))$  to "jump" from $1$ to $2$ after crossing the wall $\mathcal{W}_{\alpha_{1},\alpha_{2}}^{1}$. 
\end{proposition}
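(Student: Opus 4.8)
The plan is to reduce Proposition 4.12 to Proposition 4.11 by showing that the triple
\[
\gamma_1':=(-\alpha_1,(n-1)\alpha_1),\qquad \gamma_2':=(-\alpha_2,n\alpha_2),\qquad \gamma_3':=(-\alpha_3,\alpha_2+(n-1)\alpha_3)
\]
realizes, on the weak--coupling wall $\mathcal{W}^1_{\alpha_1,\alpha_2}$, exactly the mirror of the configuration $(\gamma_1,\gamma_2,\gamma_3)$ treated there, so that the same chain of Kontsevich--Soibelman identities applies verbatim with every subscript $\gamma_i$ replaced by $\gamma_i'$.

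First I would record the linear algebra. Comparing magnetic and electric components separately, $(-\alpha_1)+(-\alpha_2)=-\alpha_3$ and $(n-1)\alpha_1+n\alpha_2=(n-1)(\alpha_1+\alpha_2)+\alpha_2=\alpha_2+(n-1)\alpha_3$, which is the identity $\gamma_3'=\gamma_1'+\gamma_2'$ quoted in the statement. Then, using the same electric--magnetic pairing that produces $\langle\nu_k^+,\nu_k^-\rangle=2$ and the values recorded in the proof of Proposition 4.11, together with the $SU(3)$ root inner products $\alpha_i\cdot\alpha_i=2$, $\alpha_1\cdot\alpha_2=-1$ and $\alpha_1\cdot\alpha_3=\alpha_2\cdot\alpha_3=1$, a one--line computation gives $\langle\gamma_1',\gamma_2'\rangle=\langle\gamma_1',\gamma_3'\rangle=1$, hence $\langle\gamma_1'+\gamma_3',\gamma_2'\rangle=2$ --- precisely the numerology used in Proposition 4.11.

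Next I would check that the rays $\ell_{\gamma_i'}$ occupy the same cyclic position near $\mathcal{W}^1_{\alpha_1,\alpha_2}$ as the $\ell_{\gamma_i}$. By Proposition 4.10, near this wall $Z_{\mathbf u}(\gamma)\approx(\ln\Lambda)\,\mathbf a\cdot\mathbf g$ for $\gamma=(\mathbf g,\mathbf q)$, so $Z_{\mathbf u}(\gamma_i')\approx-(\ln\Lambda)\,\mathbf a\cdot\alpha_i$; thus $Z_{\mathbf u}(\gamma_3')\approx Z_{\mathbf u}(\gamma_1')+Z_{\mathbf u}(\gamma_2')$ and on the wall the three rays coalesce with $\ell_{\gamma_3'}$ between $\ell_{\gamma_1'}$ and $\ell_{\gamma_2'}$. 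Replacing $\alpha_i$ by $-\alpha_i$ reflects the whole fan through the origin, which preserves the clockwise order $\gamma_1',\gamma_3',\gamma_2'$, and the associated split attractor tree is the mirror of Figure 5. Consequently the computation in the proof of Proposition 4.11 --- the pentagon identity (9) to expand $\mathcal{K}_{\gamma_1'}\mathcal{K}_{\gamma_3'}\mathcal{K}_{\gamma_2'}$, then the Kronecker identity (10) applied to $(\gamma_1'+\gamma_3',\gamma_2')$, then the repeated pentagon re--bracketings used to simplify the analogues of $P$ and $Q$ --- goes through unchanged and yields
\begin{equation*}
\mathcal{K}_{\gamma_1'}\,\mathcal{K}_{\gamma_3'}\,\mathcal{K}_{\gamma_2'}=\Xi'_{+}\,\mathcal{K}_{\gamma_3'}^{2}\,\mathcal{K}_{2\gamma_3'}^{-2}\,\Xi'_{-}\,\mathcal{K}_{\gamma_1'},
\end{equation*}
where $\Xi'_{+}$ and $\Xi'_{-}$ are obtained from $\Xi^{+}$ and $\Xi^{-}$ of Proposition 4.11 by the substitution $\gamma_i\mapsto\gamma_i'$; in particular the right-hand side carries the factor $\mathcal{K}_{\gamma_3'}^{2}$.

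Finally I would read off the jump. On the side of $\mathcal{W}^1_{\alpha_1,\alpha_2}$ where the walls $\mathcal{W}^1_k$ are present, $\gamma_3'=(-\alpha_3,\alpha_2+(n-1)\alpha_3)$ is one of the charges produced by the $k=3$ scattering of Proposition 4.9, so $\Omega(\gamma_3')=1$ there (equivalently, only the single factor $\mathcal{K}_{\gamma_3'}$ occurs on the left of the identity above); on the other side the exponent $2$ on $\mathcal{K}_{\gamma_3'}$ forces $\Omega(\gamma_3')=2$. This is the asserted jump $1\to2$, consistent with the primitive wall-crossing formula $\Delta\Omega=(-1)^{\langle\gamma_1',\gamma_2'\rangle+1}\,|\langle\gamma_1',\gamma_2'\rangle|\,\Omega(\gamma_1')\,\Omega(\gamma_2')=1$. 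The step I would treat most carefully --- the only one that is not a mechanical mirror of Proposition 4.11 --- is pinning down the correct side: one must verify that $\gamma_1'$ and $\gamma_2'$ are exactly the third members of the $k=1$ and $k=2$ families of Proposition 4.9, so that their scattering becomes available precisely on the far side of $\mathcal{W}^1_{\alpha_1,\alpha_2}$, and that an attractor tree rooted there and ending on $\Delta_\Lambda$ really does split first at a point of $\mathcal{W}^1_{\alpha_1,\alpha_2}$ into the $\gamma_1'$-- and $\gamma_2'$--flows before either reaches its $SU(2)$--type subwall $\mathcal{W}^1_1$ or $\mathcal{W}^1_2$.
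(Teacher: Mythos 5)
Your proposal is correct and takes essentially the same route as the paper, whose entire proof of this proposition is the single line ``the same as the proof of Proposition 4.11''; you simply supply the verifications (the decomposition $\gamma_3'=\gamma_1'+\gamma_2'$, the pairings $\langle\gamma_1',\gamma_2'\rangle=\langle\gamma_1',\gamma_3'\rangle=1$, and the ray ordering near $\mathcal{W}^1_{\alpha_1,\alpha_2}$) that the paper leaves implicit.
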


\begin{proof}
The same as the proof of the proposition 4.11 above.
\end{proof}

Yet, there exist states that are present only on one side of $\mathcal{W}_{\alpha_{1},\alpha_{2}}^{1}$, which are "scattered" by $(\alpha_{1},n\alpha_{1})$ and $(\alpha_{2},(n+1)\alpha_{2})$. Indeed, by applying pentagon identity, we see that after crossing the wall, new states with charge $(\alpha_{3},\alpha_{2}+n\alpha_{3})$ exist with DT invariants one. The BPS spectrum for pure $SU(3)$ gauge theory thus obtained by using WCS is consistent with that obtained by physical approaches (\cite{fraser1997weak}\cite{hollowood1997strong}\cite{kuchiev2008charges}).

\section{Acknowledgement}

I would like to thank my academic advisor Professor Yan Soibelman for giving me this research topic and for his constant support during my Ph.D study. Without his valuable advises on how to conduct research and his stimulating instructions on various topics, this paper can never come into existence.

\bibliographystyle{unsrt}
\bibliography{references}
\end{document}